\newcounter{stressaxiom}
\renewcommand{\thestressaxiom}{SE\arabic{stressaxiom}}
\newenvironment{stressaxioms}
  {\setcounter{stressaxiom}{0}%
   \begin{list}{}{\leftmargin=3.2em\labelwidth=2.6em\labelsep=.6em}}
  {\end{list}}
\newcommand{\stressitem}[2]{%
  \refstepcounter{stressaxiom}%
  \item[\textup{(\thestressaxiom)}]\label{#1}\textbf{#2.}%
}
\DeclareMathOperator{\supp}{supp}
\newcommand{\der}{\mathrm{d}}
\newcommand{\rmi}{\mathrm{i}}
\newcommand{\calO}{\mathcal{O}}
\newcommand{\fp}{\mathrm{fp}}
\newcommand{\sfp}{\mathrm{sfp}}
\newcommand{\R}{\mathbb{R}}
\newcommand{\C}{\mathbb{C}}
\newcommand{\calA}{\mathcal{A}}
\newcommand{\compp}{\mathrm{comp}}
\newcommand{\loc}{\mathrm{loc}}
\newcommand{\calH}{\mathcal{H}}
\newcommand{\WF}{\mathrm{WF}}
\newcommand{\calU}{\mathcal{U}}
\newcommand{\calV}{\mathcal{V}}
\newcommand{\calW}{\mathcal{W}}
\newcommand{\numb} {\mathsf{N}}
\newtheorem{theorem}{Theorem}
\newtheorem{definition}[theorem]{Definition}
\newtheorem{lemma}[theorem]{Lemma}
\newtheorem{proposition}[theorem]{Proposition}
\newtheorem{rem}[theorem]{Remark}
\newtheorem{conv}[theorem]{Convention}
\newcommand{\calN}{\mathcal{N}}
\newcommand{\dec}{\mathrm{dec}}
\title[Stress energy tensor in QFT]{An analytic approach to the stress energy tensor in quantum field theory}
\author[A.~Strohmaier]{Alexander Strohmaier}
\address{Leibniz University Hannover, Institute of Analysis, 30167 Hannover, Germany}  \email{a.strohmaier@math.uni-hannover.de} 
\begin{document}

\begin{abstract}
 We discuss a framework for quantum fields in curved spacetimes that possess a stress energy tensor as a connection one form on a suitable moduli space of metrics. In generic spacetimes the existence of such a tensor is thought to be a replacement  for the existence of symmetries that the Minkowski theory relies on. It is shown that the local time-slice property and the implementability of local isometries are consequences of the existence of a stress energy tensor that is a local field. 
We prove that the Klein-Gordon field, in an irreducible Fock representation determined by a quasifree Hadamard state, is an example. In this example we show that the scattering matrix for compactly supported metric perturbations exists in the Fock space and is smooth on a dense set with respect to the perturbation parameter. This generalises results by Dimock and Wald (\cite{MR0553516,MR0391814}). As a tool we also establish the precise microlocal properties of parameter dependent fundamental solutions.
\end{abstract}

\maketitle

\section{Introduction}

Axiomatic quantum field theory on Minkowski space greatly relies on translation invariance and the associated notions such as momentum and energy, defined via the Fourier transform. In spaces with symmetries such as deSitter space, or on stationary spacetimes, the theory can be developed somewhat similarly. However generic spacetimes will not have any isometries. 
Some scientists believe that the existence of a stress energy tensor is not only an appropriate replacement for a theory in curved space, but may also add more structure to a Minkowski theory. This paper is an attempt to give meaning to the idea for a quantum field theory to have a stress energy tensor that is a local field. 
In physics the stress energy tensor serves two purposes. Firstly, it appears on the right hand side of the Einstein equations, which makes it an important object to study the back-reaction of the quantum field on the geometry. Secondly, in the presence of a Killing field it can be used to construct a unitary implementer for the generating isometry. In the context of the former aspect Wald has extensively studied the properties of the stress energy tensor and has given a framework for its metric dependence (see for example \cite{MR1302174}). In this paper we will mainly focus on the latter aspect, linking the stress tensor with the quantum fields and its localisation properties. For the expectation value of the stress energy tensor quantum energy inequalities (see \cite{Fewster2017,MR4190325} for an overview) have been identified. These are important if one takes the right hand side of the Einstein equations as the expectation value of the stress tensor with respect to a state in what is often called the semi-classical approximation to quantum gravity. This also requires a good definition of what it means for a theory to have a stress energy tensor (see for example \cite{MR1738609, MR2004989, ar221001145} for proposals and discussions of this point).

To define a stress energy tensor we allow suitable metric perturbations to be implemented in the Hilbert space of the theory. This is also inspired by the covariant approach to algebraic quantum field theory (\cite{MR2007173}), in which a quantum field theory is a functor from a category of spacetimes to the category of $*$-algebras. More plainly, if we speak of the Klein-Gordon field at the level of algebras, we know not only how to construct it in Minkowski space, but also in all other globally hyperbolic spacetimes, and we obtain relations between isometric subregions (see for example \cite{MR3074853, MR2982635}  for a nice overview of ideas around this). The same is expected for interacting theories. As shown in \cite{MR2007173}
one can understand the stress energy tensor of the quantum field as a metric derivation of the relative Cauchy evolution between perturbed and unperturbed spacetimes. 
Relative Cauchy evolutions are morphisms between $C^*$-algebras and the variation therefore is an abstract derivation. The above point of view was developed further in relation to time-ordered products in \cite{MR2144674}.
The variations of the scattering matrix with respect to perturbations also play an important role in more recent constructions of interacting fields in algebraic quantum field theory (\cite{MR4115010}). Relative Cauchy evolutions have also been named M{\o}ller operators (\cite{MR4615613}). 

Here we take a more analytical and philosophically different point of view that is closer to scattering theory and partially inspired by causal perturbation theory. 
To explain the idea let us start with an informal consideration.
We fix a spacetime $(M,g_0)$ and assume that we are given a quantum field theory, for example as a field $\varphi_{g_0}(\cdot)$ that is an operator-valued distribution on $M$  on a Hilbert space $\mathcal{H} = \mathcal{H}_{g_0}$, that gives rise to field algebras $\calA_{g_0}(\calO)$ associated to each open subset $\calO$ of the spacetime $M$.
We will assume here that this system of $*$-algebras of bounded operators has been completed to a net of von Neumann algebras by passing to the weak closures.
We would then like to implement metric perturbations of the form $g_\epsilon= g_0 +\epsilon h$ in the Hilbert space of the theory.
Here $h$ will be a compactly supported symmetric $2$-tensor so that
$(M,g_\epsilon) = (M,g_0 + h \epsilon)$ is globally hyperbolic for $|\epsilon|$ sufficiently small. In other words there should be a family of unitaries
$V_\epsilon$ acting on $\mathcal{H}$ that change the theory on $(M,g_0)$ to the theory on $(M,g_\epsilon)$. 
The implementation of the change of metric in the Hilbert space may be thought of in different ways. One way, favoured in this paper, is to think of the two spacetimes $(M,g_\epsilon)$ and $(M,g_0)$ as being the same in the distant past and then being subjected to different gravitational interactions for a finite time. Since the theories are the same in the past their Hilbert spaces can be identified. Identifications of this form are natural when considering families of metric and also appeared for example in Hollands and Wald's treatment of Wick polynomials in covariant theories (\cite{MR1864435}).
The restriction to compactly supported smooth metric perturbation here is important. Indeed, even for free fields
general metric perturbations lead to inequivalent representations of the field algebra, the physical meaning being the creation of an infinite number of particles by the perturbation leading to a state that cannot be represented as a vector in Fock space.
It is however expected that compactly supported smooth metric perturbation will only create a finite expected number of particles and can be implemented by unitaries. For the free Klein-Gordon field this has been verified by Dimock and Wald for compact perturbations of Minkowski spacetime (\cite{MR0553516,MR0391814}). We will show here that this is true generally for representations induced by quasifree pure Hadamard states of the scalar field. Following Dimock and Wald we will call the implementer $V_\epsilon$ the scattering matrix.
Since the metric family 
$g_0 + h \epsilon$ is real analytic in the parameter $\epsilon$ one also expects the family $V_\epsilon$ to be strongly differentiable on a dense set of vectors at $\epsilon=0$. 
We will show here that for the case of the Klein-Gordon field in a Fock space representation corresponding to a pure quasifree Hadamard state the canonical implementer is in fact strongly $C^\infty$ on the smooth finite particle subspace as long as the perturbation is sufficiently small.

The derivative $T_{g_\epsilon}(h)$ is a linear functional in $h$ and becomes a possibly unbounded densely defined operator. This operator will be called the stress energy tensor.
We will view this operator as the primary object that allows to perturb the metric within a neighborhood of the background metric $g_0$.

Given the family $V_\epsilon$ we can in principle also define a field on the spacetimes $(M,g_\epsilon)$
by setting
$$
 \phi_{g_\epsilon}(f) = \phi_{g_0}(f)
$$
for test functions $f$ that are supported away from the future of the support of the family $g_\epsilon-g_0$, whereas we set
$$
 \phi_{g_\epsilon}(f) = V_\epsilon \phi_{g_0}(f) V_\epsilon^*
$$
if $f$ has its support away from the past of the support of $g_\epsilon-g_0$. A mild continuity assumption that allows to construct such time-ordered products allows to extend this to all metric perturbations within $\mathcal{M}$.
 This gives rise to local algebras
$\mathcal{A}_g(\calO)$ that depend on the metric $g$. Loosely speaking a formula for the stress energy in terms of local fields and the metric would then allow in the Heisenberg picture to at least formally find the family $V_\epsilon$ from the stress energy tensor $T_{g_0}$, as the stress energy tensor at an infinitesimally different point $T_{g_0+\delta h}$ can be computed from the variation of the metric and the above process. In this article we will try to make this idea precise, by assuming that we are already given the stress energy and local fields for individual metrics in a way compatible with the above idea.

{\bf Summary of main results:}
We will give a general definition of the stress energy as a connection one form on a suitable space of metrics. The corresponding connection defines a notion of parallel transport that is the unitary implementer in Hilbert space when the metric is changed. We formulate postulates that, together with standard assumptions, encode on the level of the parallel transport that
\begin{itemize}
 \item[(1)] the stress tensor is a local field in the sense that it is affiliated to the local algebra of the quantum field,
 \item[(2)] the stress tensor commutes with the other observable fields at spacelike separation,
 \item[(3)] the stress tensor is divergence free,
 \item[(4)] the stress energy tensor is constructed from the background metric and local fields.
\end{itemize}

We will show that the existence of such a connection already implies the local time-slice axiom (Theorem \ref{lts} of Section \ref{ltssec}). Under the assumption of the holonomy being central, we show that local isometries can be implemented by unitaries in the Hilbert space of the theory (Theorem \ref{li} of Section \ref{implsec}).
In that our approach shares some features with the $C^*$-algebraic approach based on a Lagrangian that also results in a stress energy tensor (\cite{MR4548527}) and the author has learned that the type of argument presented here for the time-slice axiom, using diffeomorphism invariance, has also been used in this context.

Our second main result is that the Klein-Gordon field is an example of such a theory in the representation of any quasifree pure Hadamard state (Theorem \ref{KGExamplTh} of Section \ref{exkgex}). This in particular implies that the scattering map considered by Dimock and Wald for Minkowski spacetime satisfies Shale's condition and can be unitarily implemented in Fock space. We also show that this implementation can be differentiated when applied on a dense set of vectors. In fact, we will show that the implementation is smooth on all smooth and simple finite particle vectors in Fock space (Theorem \ref{fptheorem} of Section \ref{implsecdiff}). We conjecture that Hadamard states are completely characterised by being the smooth vectors with respect to the scattering matrix.
The mathematical analysis of the Klein-Gordon field and the implementation of metric changes is done in detail in Section \ref{KGSection}.  One of the main tools is a precise analysis of parameter dependent fundamental solutions as Lagrangian distributions in Section \ref{Appendparam}.
These two sections may be of interest in their own right and can be read independently.

\subsection{Conventions}

Throughout the symbol $A \subset B$ means that $A$ is a subset of $B$ including possibly the case $A=B$. Complex Hilbert space inner products are conjugate linear in the first argument and linear in the second.
All finite dimensional differentiable manifolds will be assumed to satisfy the standard assumption, i.e. are paracompact, second countable and Hausdorff as topological space.
We use the signature convention $(+1,-1,\ldots,-1)$. Throughout the Einstein sum convention is used. The d'Alembert operator is then given by
$\Box = - \delta \der$, in local coordinates $\frac{1}{\sqrt{|g|}} \partial_j(g^{jk}\sqrt{|g|} \partial_k)$.  Spaces of functions such as $C^\infty(M)$ are complex valued unless otherwise stated, e.g. $C^\infty(M)=C^\infty(M,\C)$, whereas the space of real valued smooth functions is denoted by $C^\infty(M,\R)$. For any real number $\ell \in \R$ the spaces $H^\ell _\compp(M)$ denote the compactly supported $L^2$-based-Sobolev spaces on $M$, and similarly $H^\ell _\loc(M)$ are the local Sobolev spaces that are dual to $H^{-\ell }_\compp(M)$. Distributions are identified with functions by means of the Lorentzian volume density. The space of distributions on $M$ associated with the test function space $C^\infty_0(M)$ will be denoted by $\mathcal{D}'(M)$. The space of distributions associated with the test function space $C^\infty(M)$ is the space of compactly supported distributions $\mathcal{E}'(M) \subset \mathcal{D}'(M)$.
We refer to \cite{Ho1} and \cite{MR1852334} for the definitions and properties of these spaces.
When it comes to wavefront sets it is convenient to introduce the notation $\dot T^*M = T^*M \setminus 0$ for the cotangent bundle with its zero section removed.
We will also use the notation $\mathrm{I}^*(M,\Lambda)$ for Lagrangian distributions and $\mathrm{I}^*(M \times N, \Lambda')$ for Fourier integral operators from $N$ to $M$ associated to the canonical relation $\Lambda$ as in \cite{MR2512677} to which we refer for details. The notation $\Lambda' = \{(x,\xi,y,-\eta) \mid (x,\xi,y,\eta) \in \Lambda\}$ will be used throughout for canonical relations as well as for wavefront sets of distributional kernels. In particular if $K \in \mathrm{I}^*(M \times N, \Lambda')$ then $\mathrm{WF}(K) \subset \Lambda'$ whereas
$\mathrm{WF}'(K) \subset \Lambda$.

\subsection{Acknowledgement} The author would like to thank Edward Witten for the continued encouragement and for sharing many valuable insights, in particular that the existence of a stress energy tensor should imply the local time-slice axiom. I am also grateful to Klaus Fredenhagen, Jonathan Sorce, Rainer Verch, and Bob Wald for useful comments on an earlier draft version. Finally, I would also like to thank Jan Derezinski for insightful remarks about the properties of CCR algebras.

\subsection*{Data Availability} All data underlying the results are available as part of the article and no additional source
data are required.

\subsection*{Conflict of interest} The author of this work declares that he has no Conflict of interest.

\section{Mathematical setting and formulation of results}

Starting with a globally hyperbolic spacetime $(M,g_0)$,
let $\widetilde{\mathcal{M}_{g_0}}$ be the space of metrics $g$ on $M$ such that 
$g - g_0$ is smooth, compactly supported, and such that $(M,g)$ is a globally hyperbolic spacetime with time-orientations compatible with that of $(M,g_0)$. This space can be endowed with the topology inherited from the space of smooth compactly supported symmetric two-tensors.
We denote by $\mathcal{M}_{g_0}$ the connected component of $g_0$ in $\widetilde{\mathcal{M}_{g_0}}$.
Since
$\mathcal{M}_{g}=\mathcal{M}_{g_0}$ for any $g \in \mathcal{M}_{g_0}$ we will simply write  $\mathcal{M}$, thus singling out a connected component.
We then fix a Hilbert space $\calH = \calH_\mathcal{M}$ and assume that for 
each metric $g$ in $\mathcal{M}$ there exists a quantum field theory given by
a choice of von Neumann algebra $\calA_g(\calO)$ of bounded operators acting on $\mathcal{H}$ for each open subset $\calO$. Sometimes it may also be convenient to consider a smaller open subset $\mathcal{N} \subset \mathcal{M}$ of the full connected component $\mathcal{M}$.
We assume the usual properties
\begin{itemize}
 \item[(i)] $\calA_g(\calO_1) \subset \calA_g(\calO_2)$ if $\calO_1 \subset \calO_2$.
 \item[(ii)] $[\calA_g(\calO_1), \calA_g(\calO_2)]=\{0\}$ if $\calO_1$ and $\calO_2$ are spacelike separated, i.e. $\calO_2 \cap J^\pm_g(\calO_1) = \emptyset$,
\end{itemize}
where $J^\pm_g(K)$ denotes the causal future/past of a subset $K \subset M$ with respect to the metric $g$.
These are natural and very general assumptions that one expects to hold if the local field algebras are generated by a quantum field satisfying the Einstein causality property.
In addition we would also like to assume that the following {\sl diffeomorphism-covariance condition} holds
\begin{itemize}
 \item[(iii)]  Let $\mathrm{Diff}_\compp(M)$ be the group of all compactly supported diffeomorphisms $\phi: M \to M$, and  denote by $\mathrm{Diff}_{0,\compp}(M)$ the connected component of the identity in $\mathrm{Diff}_{\compp}(M)$.
Assume that $\phi \in \mathrm{Diff}_{0,\compp}(M)$ and $g' = \phi^* g$ is the pull-back of the metric $g$ with respect to $\phi$. Then
\begin{align*}
 \calA_{g'}(\calO) =  \calA_{g}(\phi(\calO)).
\end{align*}
\end{itemize}

The moduli space $\mathcal{M} / \mathrm{Diff}_{0,\compp}(M)$ of metrics modulo diffeomorphisms  will be denoted by $\mathcal{M}_\mathrm{mod}$.
The way we think of the implementation of metric change is that {\bf the stress energy tensor is a connection one-form on the bundle  $\mathcal{M}_\mathrm{mod} \times \calH$  over the space $\mathcal{M}_\mathrm{mod}$}. It will be convenient to consider the stress energy tensor as a connection one-form on the bundle $\mathcal{M}\times \calH$  over the space $\mathcal{M}$ that descends to the quotient space.
For each $g \in \mathcal{M}$ and each test function
$h \in C^\infty_0(M,\mathrm{Sym}^2 T^*M)$ this gives rise to an operator $T_g(h)$ with domain $\mathrm{dom}(T_g(h))$. We also expect that there exists a dense set $\mathcal{D}_g \subset \calH$ of vectors $v$ such that the map $h \mapsto T_g(h) v$ is linear and continuous in $h$. 
Hence, by a stress energy tensor we will mean a family of operator-valued distributions
$T_g$, indexed by elements $g \in \mathcal{M}$, and test function space $C^\infty_0(M,\mathrm{Sym}^2 T^*M)$ such that some natural conditions are satisfied. Formally, in local coordinates, the pairing of the distribution
$T_g$ with a test function $h$ is 
$$
 T_g(h) = \int_M T^{jk}(x) h_{jk}(x)\; \der \mathrm{Vol}_g(x),
$$
where the Einstein sum convention is assumed, and $h_{jk}$ are the components of the tensor $h$. Of course the above expression is not an integral but should be interpreted as a distributional pairing and in that sense $T^{jk}$ is an operator-valued distribution expressed in local coordinates.
The definitions and analysis is made rather complicated by the fact that the space $\mathcal{M}$ is infinite dimensional and that the operator $T_g$ is unbounded with domain potentially dependent on $g$. We will take a slightly indirect approach in defining a connection on $\mathcal{M} \times \calH$ via a notion of parallel transport. In finite dimensions this is an equivalent approach to choosing a connection one form, and we will be able to avoid technical definitions at this stage.
Before we do this one (unfortunately) needs to clarify all the notions of smoothness and continuity on the space of paths due to the infinite dimensional nature of $\mathcal{M}$.

As before assume that $\mathcal{N} \subset \mathcal{M}$ is a non-empty open subset of the space $\mathcal{M}$. In most cases we will be interested in fact in the case $\mathcal{N} = \mathcal{M}$, but allow more general situations.
We call a continuous path $\gamma: [a,b] \to \mathcal{N}$ {\sl piecewise smooth} if there exists a partition 
$a =t_0 < t_1 < t_2 < \ldots < t_N =b$ such that $\gamma|_{[t_j,t_{j+1}]}$ is a smooth section of 
the pull back of the bundle $\mathrm{Sym}^2 T^*M$ to 
$[t_j,t_{j+1}] \times M$. We identify paths that are smooth orientation preserving reparametrisations of one another, i.e.
$\gamma: [a,b] \to \mathcal{N}$ is equivalent to $\gamma': [a',b'] \to \mathcal{N}$ if there exists a diffeomorphism $q:[a',b'] \to [a,b]$ such that $\gamma' = \gamma \circ q$.
We will denote the set of equivalence classes of piecewise smooth paths by $\mathcal{P}_\mathcal{N}$ and for $g',g$
we write $\mathcal{P}_\mathcal{N}(g',g)$ for the set of piecewise smooth curves $\gamma: [a,b] \to \mathcal{N}$ with $\gamma(a) = g, \gamma(b) = g'$. Similarly we write $\mathcal{P}_\mathcal{N}^\infty$ for the set of smooth paths and $\mathcal{P}_\mathcal{N}^\infty(g',g)$ for $\mathcal{P}_\mathcal{N}^\infty \cap \mathcal{P}_\mathcal{N}(g',g)$.
The set of piecewise smooth paths form a groupoid as they can be concatenated in case the endpoint of the first point coincides with the second curve. 
We will write $\gamma' \circ \gamma$ for the composition of the paths as there is no danger of confusion with the composition of functions. The path $\gamma' \circ \gamma$ is the one starting at the start point of $\gamma$ and ending at the endpoint of $\gamma'$, i.e. the notation is read from the right to the left.
The inverse path $\gamma^{-}: [a,b] \to \mathcal{N}$ is defined as $\gamma^{-}(t) = \gamma(b+a-t)$.

For any $\gamma:[a,b] \to \mathcal{N}$ and any $s,t \in [a,b]$ with $s>t$
we also define $\gamma_{s,t}$ to be the restriction of $\gamma$ to $[t,s]$ with starting point $\gamma(t)$ and endpoint $\gamma(s)$.
In the following topology and smoothness on the infinite dimensional space will be defined by restriction to finite dimensional parametrisations.
In order to do this we use parameter manifolds. A parameter manifold $I$ is a smooth differentiable manifold with or without boundary, and of arbitrary but finite dimension. 
We say a family of smooth paths $\gamma_\lambda \in \mathcal{P}_\mathcal{N}^\infty$ depends smoothly on a parameter $\lambda$ in some parameter manifold $I$ if the domain of the family of paths in $I \times \R \times M$ is a smooth manifold with corners or with boundary, and if the map $\gamma_\lambda(t)$ is a smooth section of pull back of the bundle $\mathrm{Sym}^2 T^*M$. Hence, a family of paths is smooth if and only if the end-points depend smoothly on the parameter and the function
$(s,t,x) \mapsto \gamma_s(t)(x) \in \mathrm{Sym}^2 T^*_{x} M$ is a smooth function of $s,t$ and $x$, where defined.

Finally we say a family of piecewise smooth paths $\gamma_\lambda$ depends smoothly on a parameter $\lambda$ if it can be written as
$\gamma_{1,\lambda} \circ \ldots \circ \gamma_{N,\lambda}$ for a finite number of smooth paths $\gamma_{j,\lambda} $ depending smoothly on $\lambda$.
It will be convenient to endow the set of paths in $\mathcal{P}_\mathcal{N}$ with a topology. Here we choose the final topology with respect to all smooth maps
$I \to \mathcal{P}_\mathcal{N}$, i.e. the finest topology on $\mathcal{P}_\mathcal{N}$ such that any smooth map $I \to \mathcal{P}_\mathcal{N}$ is continuous.
Hence, a map $F: \mathcal{P}_\mathcal{N} \to T$ into an arbitrary topological space is continuous iff the map $F \circ \iota$ is continuous for any smooth
map $\iota: I \to \mathcal{P}_\mathcal{N}$.

\begin{definition} \label{defconnection}
 A {\sl unitary connection} on the trivial bundle $\mathcal{N} \times \calH \to \mathcal{N}$ is a map $S: \mathcal{P}_\mathcal{N} \to U(\calH)$ from $\mathcal{P}_\mathcal{N} $ into the unitary group $U(\calH)$ of $\calH$ together with a family $(\calH^\infty_g)_{g \in \mathcal{N}}$ of dense subsets  $\calH^\infty_g \subset \calH$
 such that
 \begin{enumerate}
  \item In case $\gamma \in \mathcal{P}_\mathcal{N}(g_2,g_1), \gamma' \in \mathcal{P}_\mathcal{N}(g_3,g_2)$ we have $S(\gamma' \circ \gamma ) = S(\gamma')S(\gamma )$.
  \item $S(\gamma^{-}) = S(\gamma)^{-1} = S(\gamma)^*$ for any $\gamma \in \mathcal{P}_\mathcal{N}$.
  \item 
  The map $$\mathcal{P}_\mathcal{N} \to U(\calH), \gamma \mapsto S(\gamma)$$ is strongly continuous.   
  \item for any $\gamma \in \mathcal{P}_\mathcal{N}(g',g)$ we have $S(\gamma) \calH^\infty_{g} \subset \calH^\infty_{g'}$.
  \item for any smooth family of curves
  $\gamma_\lambda$ with $\lambda$ in a parameter manifold $I$ the map
  $$
   I \to \calH, \quad \lambda \mapsto   S(\gamma_\lambda) v
  $$
  is smooth for any $v \in \calH^\infty_g$ in case all the paths have the same starting point $g$.
  \item for any $\gamma \in \mathcal{P}_{\mathcal{N}}(g',g)$ and  $v \in \calH^\infty_g$ the derivative $\frac{\der}{\der t} S(\gamma_{t,a}) v$ depends only on the tangent vector
   $\dot \gamma(t)$ and in a linear fashion. In other words there exists for each $g \in \mathcal{N}$ a (strongly continuous) operator valued distribution $T_g$
   taking values in the operators from $\calH^\infty_g$ to $\calH$ such that
   \begin{align}
 \rmi \frac{\der}{\der t} S(\gamma_{t,s}) v = T_{\gamma(t)}( \dot \gamma(t) ) S(\gamma_{t,s})  v
\end{align}
for any $v \in \calH^\infty_{\gamma(s)}$ for any $\gamma \in \mathcal{P}_\mathcal{N}^\infty$.
 \end{enumerate}
 We say the connection has central holonomy if $S(\gamma) \in U(1) \cdot \mathbf{1}_\calH$ for every $\gamma \in \mathcal{P}_\mathcal{N}(g,g)$.
\end{definition}

In case $\mathcal{N}$ is invariant under the action of $\mathrm{Diff}_{0,\compp}(M)$ we can introduce the associated moduli space $\mathcal{N}_\mathrm{mod}$. Given a unitary connection as above that satisfies in addition $S(\tilde \gamma) = S(\gamma)$ if $\tilde \gamma(t) = \phi^*_t \gamma(t)$, $\phi_t \in \mathrm{Diff}_{0,\compp}(M)$ being a smooth path of diffeomorphisms, then we say $S$ defines a unitary connection on the moduli space $\mathcal{N}_\mathrm{mod}$.

The (possibly unbounded) operator-valued distribution $T_g$ is called the connection one form.
 Given a dense set  $\calH^\infty_g$ and a family $T_g(h)$ defined on $\calH^\infty_g$ we also say that this family generates the connection if there is a unique connection that has $T_g(h)$ as a connection one form. We will not discuss in this paper conditions on such operator-valued distributions to generate connections but merely just postulate the existence of a connection.
Given a smooth curve $\gamma$ finding the parallel transport map $S(\gamma)$ corresponds to finding the solution of the initial-value problem for the non-autonomous evolution equation
\begin{align}
 \rmi \frac{\der}{\der t} v(t) = A(t) v(t), \quad v(a) = v,
\end{align}
where $v(t) = S(\gamma_{t,a}) v$ and $A(t) = T_{g(t)}( \dot \gamma(t))$ for a dense set of vectors $v$. By continuity the unitary map $S(\gamma)$ will then extend to all of $\calH$.
Sufficient conditions for a family of self-adjoint operators $A(t)$ to generate a connection were found in increasing generality by Kato (\cite{MR0058861, MR0086986, MR0279626}) and  other authors (we refer to the survey \cite{MR1944168} and the results in \cite{MR2493564} which seem particularly well-suited for the problem at hand).

The notion of a connection is intimately related to the notion of unitary evolution system.
Given a topological space $X$ we say a family of bounded operators $U(s,t) , s,t \in X$ on $\calH$ forms a unitary evolution system if
\begin{itemize}
 \item $U(s,t)$ is unitary for every $s,t \in X$.
 \item the map $X \times X \to \mathcal{B}(\calH), (s,t) \mapsto U(s,t)$ is strongly continuous.
 \item $U(s,r) \circ U(r,t) = U(s,t)$ and $U(s,s)= \mathbf{1}$ for all $r,s,t \in X$.
\end{itemize}
We say the family $U(s,t) , s,t \in X$ on $\calH$ forms a projective unitary evolution system if there exists a 
continuous function $\sigma: X \times X \times X \to U(1)$ such that
\begin{itemize}
  \item $U(s,t)$ is unitary for every $s,t \in X$.
 \item the map $X \times X \to \mathcal{B}(\calH), (s,t) \mapsto U(s,t)$ is strongly continuous.
 \item $U(s,r) \circ U(r,t) = \sigma(s,r,t) U(s,t)$ and $U(s,s)= \mathbf{1}$ for all $r,s,t \in X$.
\end{itemize}
We will refer to the function $\sigma$ as the Schwinger cocycle. The above means that the family $U(t,s)$ forms a projective evolution system in the sense that the identity $U(s,r) \circ U(r,t) = U(s,t)$ holds if we consider the operators as maps on the projective space $\mathbb{P}\calH$. Alternatively one can also view $U(s,t)$ as elements in $U(\calH)/U(1)$.
Given a connection with central holonomy this defines a projective unitary evolution system 
$U(g',g)$ on $\mathcal{N}$ by 
$$
 U(g',g) = [S(\gamma)] \in U(\calH)/U(1)
$$
for any $\gamma \in \mathcal{P}_\mathcal{N}(g',g)$. Note that by the assumption of the holonomy being central $U(g',g)$ is well defined as its class does not depend on the chosen path.
We will see later that the stress energy tensor of the Klein-Gordon field is derived from the connection with central holonomy.

We will use the following notation that encodes that scattering matrix $S(g',g)$ from the metric $g$ to the metric $g'$ is well defined as soon as a path from $g$ to $g'$ is specified.

\begin{conv}
In case $g(t), t \in [a,b]$ is a family of piecewise smooth metrics in $\mathcal{N}$ with $g(a) = g$ we will also write
$S(g(t),g)$ for $S(\gamma)$, where $\gamma$ is the path $t \mapsto g(t)$. We will write
$S(g,g(t))$ for $S(\gamma)^*$. 
\end{conv}

\begin{definition} \label{defstress}
Assume we are given a quantum field theory on $M$ defined for all metrics on the space $\mathcal{N}$
and realised on the same Hilbert space $\mathcal{H}$, as defined before.
We say that this quantum field theory has a stress energy tensor if there exists a unitary connection $S: \mathcal{P}_\mathcal{N} \to U(\calH)$ with central holonomy such that the following conditions are satisfied.

\begin{stressaxioms}
\stressitem{defstress:covariance}{Covariance}
Let $\gamma \in  \mathcal{P}_\mathcal{N}(g',g)$ and assume that $\phi_t \in \mathrm{Diff}_{0,\compp}(M)$ is a smooth path of diffeomorphisms such that the new path
$\tilde \gamma(t) = \phi^*_t \gamma(t)$ is in $\mathcal{P}_\mathcal{N}$. Then
\begin{align*}
  S(\tilde \gamma) = S(\gamma).
\end{align*}{\vspace{0.1cm}}
\stressitem{defstress:locality}{Locality}
Assume that $\calO \subset M$ is an open subset and $\gamma \in \mathcal{P}_\mathcal{N}(g',g)$ is a piecewise smooth path such that $\supp(\gamma(t)-g) \subset \calO$ for all $t \in [a,b]$. Then 
\begin{enumerate}
   \item \label{loc1} $\calA_{\gamma(t)}(\calO) = \calA_g(\calO)$, i.e. $\calA_{\gamma(t)}(\calO)$  is independent of $t$ and 
  \item \label{loc2} $S(\gamma) \in \calA_g(\calO)$.
\end{enumerate}${}^{}$
  {\vspace{0.1cm}}
\stressitem{defstress:causality}{Causality}
  \begin{enumerate}
 \item \label{caus1} If $\mathrm{supp}(g'-g) \cap J^-_g(\calO) = \emptyset$ then $\calA_{g'}(\calO) = \calA_{g}(\calO)$.
 \item \label{caus2} If for all $t$ in the parameter range of $g(t)$ we have $\mathrm{supp}(g(t)-g) \cap J^+_g(\calO) = \emptyset$ then $\calA_{g(t)}(\calO) = S(g(t),g) \calA_{g}(\calO) S(g(t),g)^*$.
 \item \label{caus3} Suppose that $h_1(t),h_2(t),h_3(t) \in C^\infty_0(M,\mathrm{Sym}^2 T^*M), t \in [a,b]$ are smooth paths of compactly supported smooth sections. Assume that $g+h_1(t)+h_2(t)+h_3(t), g+ h_2(t)+h_3(t),g+h_2(t),g+h_1(t)+h_2(t),g \in \mathcal{N}$ for all $t \in [a,b]$
and $h_1(a) = h_2(a) = h_3(a) = 0$.
 Assume further that $\left(J^-_{g+h_2(t)}\left( \mathrm{supp}\;h_1(t) \right)\right) \cap \mathrm{supp} h_3(t) = \emptyset$ for any $t \in [a,b]$. Then we have
  \begin{align}
   S(g,g+h_1(t) + h_2(t) + h_3(t)) \nonumber\\=  S(g,g + h_2(t) + h_3(t)) S(g+h_2(t),g+h_1(t)+h_2(t))
  \end{align}
  modulo a factor in $U(1)$.
 \end{enumerate}
\end{stressaxioms}
\end{definition}

We will call $S(\gamma)$  the {\sl scattering matrix} as it implements the metric perturbation in the Hilbert space relative to the unperturbed evolution along the path $\gamma \in \mathcal{P}_\mathcal{N}(g',g)$.

\begin{rem}
The condition $\left(J^-_{g+h_2(t)}\left( \mathrm{supp}\;h_1(t) \right)\right) \cap \mathrm{supp} h_3(t) = \emptyset$ assures that the supports of $h_1$ and $h_3$ can be separated by a Cauchy surface with respect to the metric $g+h_2$ (see Prop. \ref{sepper}).
 The equation
 $$
  S(g,g+h_1 + h_2 + h_3) = S(g,g + h_2 + h_3) S(g+h_2,g+h_1+h_2) \mod U(1)
  $$
  can also be written as
 $$
  S(g,g+h_1 + h_2 + h_3) = S(g,g + h_2 + h_3) S(g,g+h_2)^* S(g,g+h_1+h_2) \mod U(1).
 $$
 Causality conditions of this form are at the heart of causal perturbation theory where test functions $h_k$ play the role of localised interactions.
 In this paper we require this relation to hold only projectively, i.e. up to a phase factor in $U(1)$, although one might also be tempted to enforce this condition exactly.
 It was pointed out to the author by Klaus Fredenhagen that the problem of fixing these phases has been addressed in \cite{MR4226454} for perturbations of Minkowski space. Without considering continuity and differentiability, cohomological arguments show that it is possible to remove these phases in this context.
 \end{rem}

\subsection{Discussion of the conditions}

The above are ``exponentiated" versions of expected properties of a (quantum) stress energy tensor.
The infinitesimal characterisation  of covariance under diffeomorphisms corresponds directly to
$T_g$ being divergence free, i.e. $T_g(\der_s v)=0$ for any compactly supported co-vector field $v$. In local coordinates this means formally $\nabla_j T_g^{jk}=0$. Here $\der_s v$ is the symmetric exterior derivative given in local coordinates by $(\der_s v)_{jk} = \nabla_j v_k + \nabla_k v_j$. 

In case $T_g(h)$ is self-adjoint or possesses a reasonable functional calculus we expect that the spectral projections of $T_g(h)$ are contained in $\calA_g(\calO)$ whenever we have the inclusion $\mathrm{supp}(h) \subset \calO$. In other words $T_g$ should be a local field.
This is the infinitesimal version of the exponentiated operators $S_g$ being affiliated to local algebras.
Part of the causality statement can be understood as requiring the generators $T_g(h_1),T_g(h_2)$ to be commuting operators whenever the supports of $h_1$ and $h_2$ are not causally related.

The locality assumption  $\calA_{g(t)}(\calO) = \calA_{g(0)}(\calO)$ in case $\mathrm{supp}(g(t)-g(0)) \subset \calO$  shows independence of $t$ of the algebra $\calA_{g(t)}(\calO)$. This is stated here for technical reasons as an independent property, but it should really be seen as a consequence of $S(\gamma) \in \calA_g(\calO)$ and other much milder assumptions. 
The main technical complication here is that the formula for the change of algebras depends on the causal support properties of the metric perturbation relative to the support properties of the observable. On the level of fields this leads to time-ordering considerations. 
Accordingly there are several angles how one can understand the relations between the two parts {\sl(\ref{loc1})} and {\sl(\ref{loc2})} of the locality
axiom \eqref{defstress:locality} and we will discuss two of them here.  The first is a physics viewpoint that is based on the existence and locality of certain time-ordered products and we will discuss this is a non-rigorous manner as a plausibility argument. The second point of view will show that a relatively weak excision property can be used to circumvent these time-ordering problems and thereby allows to derive the locality property {\sl(\ref{loc1})} from the second locality property {\sl(\ref{loc2})}.

We start with the first non-rigorous argument that uses the existence of certain time-ordered products affiliated to the local algebras. Here we assume that the local algebras are von Neumann algebras affiliated to a field operator $\Phi(\cdot)$.
The infinitesimal variation of the field operator is given in terms of a time-ordered product with the stress energy tensor
\begin{align}
 \rmi \frac{\der}{\der t}\Phi_{g(t)}(\cdot) = \rmi \dot  \Phi_{g(t)}(\cdot) = T_{g(t)}(\dot g(t)) \Phi_{g(t)}(\cdot)  - \mathtt{T}(\Phi_{g(t)}(\cdot)  T_{g(t)}(\dot g(t))).
\end{align}
Here we write $\dot f = \frac{\der}{\der t} f(t)$ to  abbreviate the derivative with respect to the parameter $t$.
The time ordered product $\mathtt{T}$ has the meaning implied by parts {\sl (\ref{caus1})} and {\sl (\ref{caus2})} of the causality axiom
\eqref{defstress:causality}  in Definition \ref{defstress}:
$$
 \mathtt{T}(\Phi_{g(t)}(x)  T^{jk}_{g(t)}(y)) = \begin{cases} \Phi_{g(t)}(x)  T^{jk}_{g(t)}(y) & x \in J^+(y) \\ T^{jk}_{g(t)}(y)  \Phi_{g(t)}(x) & x \notin J^+(y) \end{cases},
$$
understood in an appropriate regularised sense, as these are distributions. It is thus assumed that such a time-ordered product can be made sense of.
This infinitesimal change also will affect the stress energy tensor. We think of it as an expression of the local field and the metric. Its variation will then be
\begin{align}
 \frac{\der}{\der t} T_{g(t)}(\dot g(t)) = F_{g(t)}(\dot g(t),\Phi_{g(t)}) + T_{g(t)}(\ddot g(t)),
\end{align}
given that the stress energy commutes with itself. Here $F_{g(t)}(\dot g,\Phi_{g(t)})$ is a field depending on $g(t),\dot g(t)$ and has been constructed locally from the fields $\Phi_{g(t)}$. This gives the following first order system
\begin{align}
  \rmi \frac{\der}{\der t}\Phi_{g(t)}(\cdot) &= \rmi \dot  \Phi_{g(t)}(\cdot) = T_{g(t)}(\dot g) \Phi_{g(t)}(\cdot)  - \mathtt{T}(\Phi_{g(t)}(\cdot)  T_{g(t)}(\dot g(t)))\\
   \frac{\der}{\der t} T_{g(t)}(\dot g) &= F_{g(t)}(\dot g(t),\Phi_{g(t)})+  T_{g(t)}(\ddot g(t))= G(g(t),\dot g(t),\ddot g(t), \Phi_{g(t)}),
\end{align}
where $G$ is a local functional of $g(t),\dot g(t),\ddot g(t)$ and the field $\Phi_{g(t)}$. The construction of $G$ from the field may contain possible renormalisations as for example Wick-ordering in the case of free fields. However, both the time-ordered product and the functional $G$ are local in the following sense: If $g(t)-g(0)$ is compactly supported in $\calO$ then both
$G(g(t),\dot g(t),\ddot g(t), \Phi_{g(t)})$ and $\dot  \Phi_{g(t)}(\cdot)$ are affiliated with $\calA_{g(t)}(\calO)$.
For notational simplicity we write
\begin{align}
 H\left( g(t),\dot g(t),\ddot g(t), \Phi, T \right)=   \left(\begin{matrix} -\rmi ( T \Phi  - \mathtt{T}(\Phi T)) \\ G(g(t),\dot g(t),\ddot g(t), \Phi) \end{matrix} \right).
\end{align}
Assume now that this initial-value problem can be solved for small $t>0$ by a fixed point iteration of the type
\begin{align}
 \left(\begin{matrix} \Phi_{k+1,g(t)} \\ T_{k+1,g(t)}(\dot g(t)) \end{matrix}\right) = \left(\begin{matrix}  \Phi_{g(0)} \\ T_{g(0)}(\dot g(0)) \end{matrix} \right) + \int_{0}^t H\left( g(s),\dot g(s),\ddot g(s), \Phi_{k,g(s)}, T_{k,g(s)}(\dot g(s)) \right)\der s.
\end{align}
with start values $(\Phi_{0,g(t)}, T_{0,g(t)}(\dot g(t))) = (\Phi_{g(0)}, T_{g(0)}(\dot g(0)))$.
We then see that in each step of the iteration we remain affiliated to $\calA_{g(0)}(\calO)$ in case the field $\Phi_{g(0)}$ is localised in $\calO$.
In the case of the Klein-Gordon field this argument is plagued by technical difficulties because of the unboundedness of the field operators. In case of the Dirac field the field operators are bounded operators and it may be easier to establish the well-posedness in the space of bounded operators as well as weak-$*$-convergence of the fixed point iteration. 

Back to a rigorous setting the condition $\calA_{g(t)}(\calO) = \calA_{g(0)}(\calO)$ can also be replaced by considerably weaker and more plausible assumptions.
One example of a weak physically plausible assumption is that the local algebras are additive and have an excision-type property:
Given any acausal spacelike hypersurface $C$ whose closure is contained in an open set $\calO$ a possible weak excision property would be that for every $g \in \mathcal{M}$ there exists a $C^1$-open neighborhood $\calV$ of $C$ in $\calO$ such that $\calA_{\tilde g}(\calO) = \calA_{\tilde g}(\calO \setminus \overline{\calV})$ for every $\tilde g \in \mathcal{M}$ in an open neighborhood of $g$. This weak excision property is physically very plausible. It means information cannot be trapped in a thin set $\calV$ but has to propagate into $\calO$ and this thin set can be chosen uniformly at least for small perturbations of the metric. This weak excision property together with additivity allows us indeed to conclude that
$\calA_{g+h}(\calO) = \calA_g(\calO)$ whenever $h$ is supported in $\calV$. Given a spacelike hypersurface $C$ with closure in $\calO$ we can cover the set $\calO \setminus \overline{\calV}$ by open sets $\calU_\alpha$ such that either $J^+(\calV) \cap \calU_\alpha = \emptyset$ or $J^-(\calV) \cap \calU_\alpha = \emptyset$. In the former case
$\calA_{g+h}(\calU_\alpha) = \calA_{g}(\calU_\alpha)$ whereas in the latter case
$\calA_{g+h}(\calU_\alpha) = S(g+h,g) \calA_{g}(\calU_\alpha) S(g+h,g)^*$. Since the operator $S(g+h,g)$ is affiliated to 
$\calA_g(\calV) \subset \calA_g(\calO) =  \calA_g(\calO \setminus \overline{\calV})$, this shows that each of the algebras is contained in $\calA_g(\calO)$. Then, by additivity $\calA_{g+h}(\calO) = \calA_g(\calO)$.

This then implies $\calA_{g+h}(\calO) = \calA_g(\calO)$ for general $h$ supported in $\calO$. To see this we note that, by compactness of the unit interval, it is sufficient to show this for $h$ that are sufficiently small in the $C^1$-norm so that $g + \chi h \in \mathcal{M}$ for any smooth cut-off function $\chi \in C^\infty_{0}(M)$ with $0 \leq \chi \leq 1$. Any such $h$ can then be decomposed into a finite sum of metric perturbations $h = \sum_{j=1}^N h_j$ where each of the $h_j$ is supported in a set of the form $\calV_j$ as above.
Such a decomposition can for example be achieved by using a foliation of $M$ by Cauchy surfaces $\Sigma_t$, and choosing as achronal sets $C_t = \Sigma_t \cap \mathrm{supp}(h)$. Then the associated sets $\calV_t$ form an open cover of the compact set $\mathrm{supp}(h)$ from which we can extract a finite sub-cover $(\calV_{t_j})_{j=1,\ldots,N}$. Now one uses a partition of unity $\chi_j$ subordinate to this cover to construct $h_j = \chi_j h$ and use that $\calA_{\tilde g + h_j}(\calO) = \calA_{\tilde g}(\calO)$.

\subsection{Discussion of possible applications}

In contrast to covariant quantum field theory the above structures allow directly for formulations of
spectrum conditions and analyticity. We give examples of possible approaches below.

\begin{definition}
A vector $v \in \calH$ is called a $C^k$-vector for the metric $g \in \mathcal{M}$ if for any smooth family $(\gamma_\lambda)_{\lambda \in Q}$ of paths emanating from $g$ the family
 $S(\gamma_\lambda)v$ is in $C^k(Q,\calH)$. Similarly, we call a vector $v \in \calH$ analytic if for any linear path $g(t) = g + t h$
 the family $S(g(t))v$ is real analytic.
\end{definition}

In the following denote by $\overline{V^+}$ the closed forward light cone in $T^*M$, i.e. the set of future pointing causal covectors together with the zero vector.

\begin{definition} \label{sdfjsdfljksndflknsdF}
 A theory satisfies the state-independent positive spectrum condition if for any compactly supported covector field $v$ taking values in $\overline{V^+}$
 we have that $T_g(v \otimes v)$ is semibounded below.
 \end{definition}
 
 The condition in Definition \ref{sdfjsdfljksndflknsdF} is a state-independent QEI-type condition.
This is stronger than the generalized, state-dependent lower bounds that are
available in some models. In particular, for the non-minimally coupled scalar
field, state-independent lower bounds may fail, while generalized
state-dependent QEIs can still be nontrivial \cite{FewsterOsterbrinkQEI}; see
also \cite{FewsterLecturesQEI} for a general introduction to QEIs.

Our analysis of Hadamard states for parameter-dependent systems as Lagrangian
distributions in Appendix \ref{Appendparam} suggests another possible
definition of a spectrum condition, close to a microlocal spectrum condition
for the stress-energy tensor. Morally, the wavefront set of
$$
  \langle \psi,S(\gamma)\varphi\rangle
$$
should be contained in the dual cone to the set of metric variations for which
the infinitesimal stress-energy generator is positive. Thus, if the
stress-energy tensor satisfies a dominant, weak, or null energy condition, the
corresponding scattering matrix should have wavefront set in the polar cone of
the corresponding positive metric variations. Of course, this cannot be taken
literally on the full infinite-dimensional space $\mathcal M$ without
specifying an appropriate calculus on that space. However, the condition has a
meaningful finite-dimensional formulation for parameter spaces $I$
parametrising families of metrics.

Let $\gamma:I\to\mathcal M$
be a smooth family of metrics. For $\mathcal X\in T_\lambda I$, we say that
$\mathcal X$ satisfies the dominant energy condition if the metric variation
$\der\gamma_\lambda(\mathcal X)$ satisfies
$$
  (\der\gamma_\lambda(\mathcal X))(X,Y)\geq0
$$
pointwise for all future-directed timelike vector fields $X,Y$ on
$(M,\gamma(\lambda))$. Let $T_{\dec,\lambda}I\subset T_\lambda I$ be the set
of such tangent vectors and set
$$
  T_{\dec}I=\bigcup_{\lambda\in I}T_{\dec,\lambda}I\subset TI .
$$
We denote its polar cone by
$$
  (T_{\dec}I)^\circ
  =
  \{(\lambda,\tau)\in T^*I:\tau(\mathcal X)\geq0
  \textrm{ for all } \mathcal X\in T_{\dec,\lambda}I\}.
$$

\begin{definition}\label{sjaalksdoihwlekEFSW}
A theory satisfies the dominant microlocal spectrum condition if, for every
finite-dimensional parameter space $I$, every smooth family
$\gamma:I\to\mathcal M$, and all vectors $\psi,\varphi\in\calH$, the continuous function
$$
  I \to \C,\quad \lambda\mapsto \langle \psi,S(\gamma(\lambda))\varphi\rangle
$$
satisfies
$$
  \WF(\langle \psi,S(\gamma) \varphi\rangle)
  \subset
  (T_{\dec}I)^\circ .
$$
\end{definition}

In this paper we will however not focus on these aspects. We will instead show that the time-slice axiom and the implementability of Killing flows in the Hilbert space of the theory are consequences of the existence of a stress energy tensor. The first result is that in a quantum field theory with a stress energy tensor the local time-slice property holds (Theorem \ref{lts}) and local isometries can be implemented unitarily in the theory (Theorem \ref{li}).

We also show that the Klein-Gordon field is an example and expect the same for other fields.
In particular Theorem \ref{KGExamplTh} will state that the Klein-Gordon field with mass $m \geq 0$ defined with respect to any pure quasifree Hadamard state is a quantum field theory with stress energy tensor. The corresponding connection has central holonomy and therefore also defines a projective evolution system on $\mathcal{M}$.
This involves the proof of implementability of the classical time evolution in Fock space and its differentiability properties. This is done in detail in Section \ref{implsecdiff} and formulated as Theorem \ref{fptheorem}.

\section{Properties of theories with stress energy tensors}

%\subsection{Time reversal}
%
%Assume that we are given a fixed background metric $g_0$ and a smooth choice of path $\gamma_g$ from $g_0$ to $g$ for every $g \in \mathcal{N}$. This can for example be achieved by taking $\mathcal{N}$ to be a starshaped open neighborhood of $g_0$ so that the chosen path is the linear one connecting $g_0$ to $g$.
%One can define
%\begin{align}
% \tilde \calA_{g}(\calO) = S(\gamma_g)^* \calA_{g}(\calO) S(\gamma_g) 
%\end{align}
%and the connection
%\begin{align}
% \tilde S(g',g) = S(\gamma_{g'})^* S(\gamma_{g}).
%\end{align}
%One checks directly that this satisfies all the properties of a quantum field with stress energy tensor on $M$ in the sense of Definition \ref{defstress} defined on the space $\mathcal{N}$ with its time-orientation reversed. The stress energy of this time-reversed theory is formally given by
%\begin{align}
% \tilde T_{g}(h) = -S(\gamma_{g})^* T_{g}(h) S(\gamma_{g}).
%\end{align}

In what follows we assume that we are given a quantum field theory defined on all metrics in $\mathcal{M}$ with a stress energy tensor as defined before. 

\subsection{The local time-slice property} \label{ltssec}

Here we demonstrate that covariance, causality, and locality in fact imply a local time-slice property. Recall that the future/past Cauchy developments $D^\pm(C)$ of an achronal set $C \subset M$ are defined as the set of points $p \in M$ such that every past/future inextendible causal curve emanating from $p$ intersects $C$. The Cauchy development $D(C) = D^+(C) \cup D^-(C)$ is also called the domain of dependence.
If $C$ is a spacelike hypersurface then $D(C)$ is open in $M$ and globally hyperbolic with Cauchy surface $C$.
We refer to \cite{MR0424186} and \cite{ON} for the definitions and basic theory.

\begin{lemma} \label{sevenlemma}
 Assume that $\Sigma \subset M$ is any spacelike Cauchy hypersurface and $C \subset \Sigma$ is an open subset.  Let $\calO$ be any open neighborhood of $\overline{C}$ in $M$. Then, for any open subset $\calU$ of compact closure contained in $D^+(C)$ or $D^-(C)$ there exists an open subset $\calO_1$ with compact closure in $\calO$ such that
 $\calA_g(\calU) \subset \calA_g(\calO_1) \subset \calA_g(\calO)$.
\end{lemma}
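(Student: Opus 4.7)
The plan is to use a compactly supported diffeomorphism $\phi$ isotopic to the identity, together with the Covariance axiom and the diffeomorphism-covariance condition (iii), to transfer observables localised in $\calU$ into a neighbourhood of $\overline{C}$.

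\textbf{Step 1 (geometric construction).} Using the hypothesis $\overline{\calU} \subset D^+(C)$, I would construct $\phi \in \mathrm{Diff}_{0,\compp}(M)$, isotopic to the identity through a smooth isotopy $\phi_t$ with $\phi_0 = \mathrm{id}$ and $\phi_1(\overline{\calU}) \subset \calO_1 \Subset \calO$. Concretely, one takes $\phi_t$ to be the flow of a compactly supported past-directed timelike vector field $X$ whose support lies in a compact ``tube'' joining $\overline{\calU}$ to $\overline{C}$ inside $D^+(C) \cup \calO$. After possibly shrinking the isotopy time, each $g_t := \phi_t^* g$ lies in $\mathcal{M}$.

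\textbf{Step 2 (covariance and the diffeomorphism identification).} By the Covariance clause of Definition \ref{defstress} applied to the smooth path $\gamma(t) = g_t$, one has $S(\gamma) = \mathbf{1}$. The diffeomorphism-covariance condition (iii) then gives
\begin{align*}
 \calA_{g_1}(\calU) \;=\; \calA_g(\phi_1(\calU)) \;\subset\; \calA_g(\calO_1),
\end{align*}
since $\phi_1(\calU) \subset \calO_1$ and by monotonicity (i). It therefore suffices to establish the equality $\calA_g(\calU) = \calA_{g_1}(\calU)$.

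\textbf{Step 3 (the main obstacle).} The hard part is to upgrade the triviality $S(g_1,g) = \mathbf{1}$ to the algebraic equality $\calA_g(\calU) = \calA_{g_1}(\calU)$. A naive application of Causality (1) or (2) fails because $\mathrm{supp}(g_1 - g) \subseteq \mathrm{supp}(\phi)$ necessarily contains $\overline{\calU}$, so neither $\mathrm{supp}(g_1 - g)\cap J^+_g(\calU) = \emptyset$ nor $\mathrm{supp}(g_1 - g)\cap J^-_g(\calU) = \emptyset$ can be arranged while still moving $\calU$. To circumvent this I would split the isotopy as a finite product $\phi = \psi_N \circ \dots \circ \psi_1$ of compactly supported diffeomorphisms, each supported in a thin causal strip between nearby spacelike Cauchy hypersurfaces $\Sigma_{k-1}, \Sigma_k$ with $\Sigma_0 \subset \calO$ and $\Sigma_N$ just above $\overline{\calU}$. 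The perturbations $\psi_k^* g - g$ are then pairwise causally ordered and can be factorised using Causality (3); combined with $S(\gamma) = \mathbf{1}$ and Locality, this propagates the inclusion one strip at a time until $\calA_g(\calU)$ is carried into $\calA_g(\calO_1)$.

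\textbf{Step 4 (the past case).} The case $\calU \subset D^-(C)$ is symmetric: one runs the same construction using a future-directed vector field $X$, or equivalently applies the argument to the time-reversed theory of Section on time reversal. This yields the same conclusion $\calA_g(\calU) \subset \calA_g(\calO_1) \subset \calA_g(\calO)$, completing the proof.
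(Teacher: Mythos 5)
Your Step 1 and the use of diffeomorphism covariance are in the spirit of the paper's proof, but Step 3 contains a genuine gap, and it is precisely the step the paper handles differently. The intermediate equality you reduce to, $\calA_g(\calU) = \calA_{g_1}(\calU)$ with $g_1 = \phi^*g$, is not derivable from the axioms and, combined with condition (iii), would yield $\calA_g(\calU) = \calA_g(\phi(\calU))$ as an \emph{equality} for a region $\phi(\calU)$ squashed down near the Cauchy surface --- strictly stronger than the inclusion the lemma (and Theorem \ref{lts}) asserts, and not something one expects to hold. Your strip decomposition does not repair this: for the strip containing $\calU$ the perturbation $\psi_k^*g - g$ is supported on a neighbourhood of $\calU$ itself, hence meets both $J^+_g(\calU)$ and $J^-_g(\calU)$, so neither Causality (1) nor (2) applies to the region $\calU$; Locality only controls the algebra of the strip, not of $\calU$; and Causality (3) factorises scattering matrices, it does not identify algebras. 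Moreover a diffeomorphism supported in a thin strip maps that strip to itself, so a composition of such maps cannot transport $\calU$ down to $\calO$ without overlapping supports, which reintroduces the same problem.

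The paper's resolution avoids comparing $\calA_g(\calU)$ with $\calA_{\phi^*g}(\calU)$ altogether. One first builds (Lemma \ref{vectorfieldZ}) a compactly supported past-directed timelike field whose time-$s_0$ flow $\phi$ moves the whole set $J^-(\overline{\calU}) \cap D^+(\Sigma)$ --- not just $\overline{\calU}$ --- into $\calO$. One then \emph{cuts off} the pulled-back metric: using a cutoff $\chi \in C^\infty_0(\calO)$ equal to $1$ near $\phi(J^-(\overline{\calU}) \cap D^+(\Sigma))$, set $g' = \chi\, g + (1-\chi)(\phi^{-1})^*g$, which is globally hyperbolic by Propositions \ref{prop-app-1} and \ref{prop-app-2}, and satisfies $\supp(g'-g) \subset \calO$. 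With $g'' = \phi^*g'$ one has $g'' = g$ on $J^-(\overline{\calU}) \cap D^+(\Sigma)$, so Causality (1) gives $\calA_g(\calU) = \calA_{g''}(\calU)$; covariance applied to the pair $(g'',g')$ gives $\calA_{g''}(\calU) = \calA_{g'}(\phi(\calU)) \subset \calA_{g'}(\calO_1)$; and Locality gives $\calA_{g'}(\calO_1) = \calA_g(\calO_1)$. The essential idea missing from your argument is this gluing of $(\phi^{-1})^*g$ back to $g$ outside $\calO$, which is what makes the causal past of $\calU$ see an unperturbed metric and turns the chain into the desired one-sided inclusion.
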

\begin{proof}
Suppose as required that $\calU$ is of compact closure in $D^\pm(C)$.
We then construct a smooth path of metrics $g''(s)$ and $g'(s)$, $s \in [0,s_0]$ in $\mathcal{M}$ and a smooth family of diffeomorphisms $\phi_s \in \mathrm{Diff}_{0,\compp}(M)$
with the properties (see Fig. \ref{causaldomain}),
\begin{itemize}
 \item $g''(0) = g'(0)=g$,
 \item for all $s \in [0,s_0]$ we have $g''(s)= \phi^*_s g'(s)$,
 \item for all $s \in [0,s_0]$ we have $\mathrm{supp}(g'(s)-g) \subset \calO$,
 \item $\mathrm{supp}(g''-g) \cap J^\mp(\calU) = \emptyset$,
 \item $\phi(\calU) \subset \calO$,
\end{itemize} 
where we write for short $g''=g''(s_0), g'=g'(s_0)$ and $\phi = \phi_{s_0}$.

Before we construct this we will demonstrate that this will imply the statement of the lemma 
with any relatively compact open neighborhood $\calO_1$ of $\phi(\overline{\calU}) \cup \mathrm{supp}(g'-g)$.
Indeed, in case $\calU$ has compact closure in $D^+(C)$ we have
\begin{align}
  \calA_g(\calU) =  \calA_{g''}(\calU) = \calA_{g'}(\phi(\calU)) \subset  \calA_{g'}(\calO_1) = \calA_{g}(\calO_1),
\end{align}
where the last equality follows from locality \eqref{defstress:locality}{\sl (1)}, since the path from $g$ to
$g'$ is supported in $\calO_1$.
In case $\calU$ has compact closure in $D^-(C)$ the argument is different and uses the local properties of the scattering matrix.
We first note that the above construction defines a path from $g$ to $g'$, which we denote by $\gamma$. The diffeomorphism deforms this path to a path from
$g$ to $g''$, which we denote by $\tilde \gamma$. The affine linear path from $g$ to $g''$ will be denoted by $\hat \gamma$. Since we assumed central holonomy
we have $S(\hat \gamma) = \alpha S(\tilde \gamma)$ for some $\alpha \in U(1)$. Therefore, 
$$
\calA_g(\calU) = S(\hat \gamma)^* \calA_{g''}(\calU) S(\hat \gamma) = S(\tilde \gamma)^* \calA_{g''}(\calU) S(\tilde \gamma).
$$
By covariance \eqref{defstress:covariance} we have $S(\gamma) = S(\tilde \gamma)$. Since the path $\gamma$ is supported in $\calO_1$ locality \eqref{defstress:locality}{\sl (2)} then states that $S(\gamma) \in \calA_{g}(\calO_1)$.
All together we then have
\begin{align}
  \calA_g(\calU) &= S(\tilde \gamma)^* \calA_{g''}(\calU) S(\tilde \gamma)= S(\gamma)^*  \calA_{g'}(\phi(\calU)) S(\gamma) \nonumber \\
  &\subset S(\gamma)^*  \calA_{g'}(\calO_1) S(\gamma) \subset \calA_{g}(\calO_1).\nonumber
\end{align}

\begin{figure}[h!] 
\centering
\begin{tikzpicture}[scale=0.8]
 % set U
 \fill[gray,opacity=0.4] (3.52, 3.2) ellipse (0.8 and 1.28);
 
 % neighborhood of U in which Z is supported
  \filldraw[thin,fill=gray!80,opacity=.2,dashed] (3.4,5.2) .. controls (-2.0,-0.2) and  (0.,-0.1) .. (1.0,-.3) .. controls (7,-0.6) and  (11,-0.9) .. (3.4,5.2);
  %neighborhood of S
  \fill[gray,opacity=0.2] (3.2,0) ellipse (7.2 and 0.96);
 % Cauchy surface S
 \draw[thick,opacity=0.8] (-3.36,0.0) ..  controls (2.4,-0.32) and  (4,-1.12) .. (9.28,.0);
 % the lines for the light cone
 \filldraw[thin,fill=gray!20,dashed,opacity=0.2] (-3.36,0.0) -- (2.96, 6.32) -- (9.28,0.0);
 %image of U under the flow

  \fill[gray,opacity=0.3] (3.52,0.32) ellipse (1.5 and 0.58);
  %try drawing the vector field
   \draw[ultra thin, ->,>=stealth] (2.9,3.3) -- (2.7,1.6);
    \draw[ultra thin, ->,>=stealth] (4.0,3.3) -- (4.1,1.6);
    \draw[ultra thin, ->,>=stealth] (2.7,1.5) -- (2.6,0.5);
    \draw[ultra thin, ->,>=stealth]  (4.1,1.5) -- (4.2,0.5);
    \draw[ultra thin, ->,>=stealth] (3.42,3.6) -- (3.42,1.6);
     \draw[ultra thin, ->,>=stealth] (3.42,1.5) -- (3.42,0.7);
    %labels
    \node at (-0.7, 1.5) {$D^+(C)$};
     \node at (-0.7,-0.5) {$C$};
     \node at (-0.8,0.4) {$\calO$};
     \node at (5.12, 1.5) {$\supp Z$};
     \node at (3.52, 0.32) {$\phi(\calU)$};
     \node at (3.8, 3.7) {$\calU$};
\end{tikzpicture}
\caption{Illustration of the proof of Lemma \ref{sevenlemma}}\label{causaldomain}
\end{figure}

We now explain how to construct the required family of metrics and the diffeomorphism. We will only consider the case $\overline{\calU} \subset  D^+(C)$, as the construction for $\overline{\calU} \subset  D^-(C)$ is the same with time-orientation reversed.
We apply Lemma \ref{vectorfieldZ} in the globally hyperbolic spacetime $D(C)$,
with Cauchy surface $C$ to find a vector field $Z$, compactly supported in $D^+(C)$, with time $s$ flow $\phi_s$ and an $s_0>0$ such that
\begin{itemize}
  \item[(a)] $Z$ is either zero or past-directed timelike.
  \item[(b)] $\phi_{s_0}(J^-(\overline{\calU}) \cap D^+(C)) \subset \calO$.
  \item[(c)] there exists a temporal function $t: M \to \R$ for the metrics $\phi^*_{s} g$ for all $s \in [0,s_0]$.
  \item[(d)] there exists a vector field that is timelike for the metrics  $\phi^*_{s} g$ for all $s \in [0,s_0]$.
 \end{itemize}
By construction the metric $\tilde g(s) = (\phi_s^{-1})^* g$ equals to $g$ outside the support of $Z$. Next we choose a 
compactly supported cutoff function $\chi \in C^\infty_0(\calO)$ that equals to one near $\phi_{s_0}(J^-(\overline{\calU}) \cap D^+(C))$ and is supported in an open neighborhood $\calO_1 \subset \calO$ of $\phi_{s_0}(J^-(\overline{\calU}) \cap D^+(C))$. This function can be constructed in such a way that
$0 \leq \chi \leq 1$. By Prop. \ref{prop-app-1} and \ref{prop-app-2} the conditions
(c) and (d) above then guarantee that $g'(s) = (1-\chi) g + \chi  \tilde g(s)$ is a smooth path of globally hyperbolic Lorentzian metrics in $\mathcal{M}$.
By construction $g'(s) -g = -\chi g + \chi \tilde g(s)$ is supported in $\calO_1$. Furthermore, we set 
$g''(s) = \phi^*_s g'(s)$ and since
$$
 g''(s)=(1-\phi_s^*\chi)\,\phi_s^*g+(\phi_s^*\chi)\,g
$$
we see that $g''(s) - g=0$ when $x \in J^-(\overline{\calU})$ or $x \notin \supp(Z)$.
\end{proof}

\begin{theorem} \label{lts}
Assume that $\Sigma \subset M$ is any spacelike Cauchy hypersurface and $C \subset \Sigma$ is an open subset.  Let $\calO$ be any open neighborhood of $\overline{C}$ in $M$. Then
 $\calA_g(\calU) \subset \calA_g(\calO)$ for any open subset $\calU$ of compact closure contained in $D(C)$.
\end{theorem}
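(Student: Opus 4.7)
My approach is to adapt the proof of Lemma~\ref{sevenlemma} by replacing its past-directed flow with one that contracts $\overline{\calU}$ toward $\Sigma$ from both sides, so that a single diffeomorphism handles the parts of $\calU$ above and below $\Sigma$ simultaneously. The key geometric input is that $\overline{\calU}\cap\Sigma$ is a compact subset of $\overline{C}$: any limit point of $\calU$ on $\Sigma$ must lie in $D(C)\cap\Sigma=C$, and $\overline{\calU}$ is compact. Contracting onto $\Sigma$ therefore brings $\overline{\calU}$ into any preassigned neighborhood of $\overline{C}$, in particular into $\calO$.

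Concretely, I would pick a smooth $g$-time function $t$ with $\Sigma=\{t=0\}$ on a neighborhood of $\overline{\calU}$, a compactly supported cutoff $\chi$ equal to $1$ near $\overline{\calU}$, and set $Z=-t\,\chi\,\nabla t$. This $Z$ is smooth and compactly supported, vanishes on $\Sigma$, and is past- (resp.\ future-)directed timelike on $\{t>0\}$ (resp.\ $\{t<0\}$); along its flow $\dot t=-|\nabla t|^2\,t$, so $t(\phi_s x)\to 0$ exponentially for $x\in\overline{\calU}$. Thus for $s_0$ large enough, $\phi:=\phi_{s_0}$ satisfies $\phi(\overline{\calU})\subset\calO$ together with the appropriate analogue of condition~(b) of Lemma~\ref{sevenlemma} controlling the causal shadows $J^\pm(\overline{\calU})\cap\supp(Z)$. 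With this $\phi$ one defines $g'=(1-\chi_\calO)g+\chi_\calO(\phi^{-1})^*g$ for a cutoff $\chi_\calO$ supported in $\calO$ and equal to $1$ on $\phi(\overline{\calU})$, verifies via Propositions~\ref{prop-app-1} and \ref{prop-app-2} that $g'$ is again globally hyperbolic, and sets $g''=\phi^*g'$. The Lemma~\ref{sevenlemma}-style chain
\begin{equation*}
 \calA_g(\calU)=\calA_{g''}(\calU)=\calA_{g'}(\phi(\calU))\subseteq\calA_{g'}(\calO_1)=\calA_g(\calO_1)\subseteq\calA_g(\calO),
\end{equation*}
with $\calO_1$ a relatively compact open neighborhood of $\phi(\overline{\calU})$ in $\calO$, then gives the theorem.

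The hard step, and the only real deviation from Lemma~\ref{sevenlemma}, is the first equality $\calA_g(\calU)=\calA_{g''}(\calU)$. In the lemma it comes from Causality~(1), because past-directedness of $Z$ places $\supp(g''-g)$ entirely outside $J^-_g(\calU)$; here the contracting flow has pieces of $\supp(g''-g)$ both above $\Sigma$ (inside $J^-_g(\calU)$) and below $\Sigma$ (inside $J^+_g(\calU)$), so neither Causality~(1) nor Causality~(2) applies to the whole perturbation at once. I would circumvent this by splitting $\chi_\calO=\chi_++\chi_-$ according to the sign of $t$ and building $g''$ in two stages: the above-$\Sigma$ stage has perturbation-support disjoint from $J^+_g(\calU)$ and is handled by Causality~(2) (producing a conjugation by a scattering unitary which, by Locality~(2), is affiliated to $\calA_g(\calO)$), and a symmetric time-reversed treatment deals with the below-$\Sigma$ stage. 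Running the resulting chain and using Causality~(3) to handle the residual $U(1)$-ambiguities closes the argument. A conceptually cleaner alternative, foreshadowed in the discussion after Definition~\ref{defstress}, is to apply Lemma~\ref{sevenlemma} separately to the open sets $\calU\cap I^\pm(\Sigma)$, each of which has compact closure in $D^\pm(C)$, and then to absorb the codimension-one sliver $\calU\cap\Sigma\subset\overline{C}\subset\calO$ via a weak excision or additivity argument.
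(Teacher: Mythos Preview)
Your single--contraction flow, splitting the metric perturbation into above/below pieces, does not give the equality $\calA_g(\calU)=\calA_{g''}(\calU)$ as claimed. If $\calU$ has points strictly above $\Sigma$, then the above--$\Sigma$ part $h_+$ of $\supp(g''-g)$ will generically meet \emph{both} $J^-_g(\calU)$ and $J^+_g(\calU)$: points of $\calU$ above $\Sigma$ have their causal past reaching up to them and their causal future extending further up, and the flow's support sits precisely in that region. So neither Causality~(1) nor Causality~(2) applies to the stage $g\leadsto g+h_+$, and the conjugation argument you sketch does not get off the ground. Your fallback --- applying Lemma~\ref{sevenlemma} separately to $\calU\cap I^\pm(\Sigma)$ --- also does not work directly: the closures of these sets touch $\Sigma$, so they fail the hypothesis (compact closure in $\mathrm{int}(D^\pm(\Sigma))$) actually used in the proof of Lemma~\ref{sevenlemma} via Lemma~\ref{vectorfieldZ}, and the ``weak excision'' you invoke to handle the sliver on $\Sigma$ is not among the axioms of Definition~\ref{defstress}.

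The paper avoids all of this with a much simpler trick: rather than contracting to $\Sigma$, it first moves the Cauchy surface. Pick a foliation of $D(C)$ by Cauchy surfaces $C_t$ and choose $t_-<0$ so that $\overline{\calU}\subset \mathrm{int}(D^+(C_{t_-}))$; this is possible by compactness of $\overline{\calU}$. One application of Lemma~\ref{sevenlemma} (the $D^+$ case, with $C_{t_-}$ playing the role of $C$) pushes $\calA(\calU)$ into $\calA(\calO_2)$ for some $\calO_2$ with compact closure in a small neighborhood of $C_{t_-}$, hence in $D^-(C)$. A second application of Lemma~\ref{sevenlemma} (the $D^-$ case, now with the original $C$ and $\calO$) gives $\calA(\calO_2)\subseteq\calA(\calO)$. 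Two clean invocations of the lemma, no new vector field, no splitting, no excision.
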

 \begin{figure}[h!] 
\centering
\begin{tikzpicture}[scale=0.5]
 % set U
 \fill[gray,opacity=0.4] (3.52, 0) ellipse (1.8 and 2.28);
 \node at (3.52, 0) {$\calU$};
  %neighborhood of S
  \fill[gray,opacity=0.2] (3.2,0) ellipse (7.2 and 0.96);
 % Cauchy surface S
 \draw[thick,opacity=0.9] (-3.36,0.0) ..  controls (2.4,-0.32) and  (4,-1.12) .. (9.28,.0);
  % Cauchy surface S_{t_-}
 \draw[thick,opacity=0.9] (-3.36,0.0) ..  controls (2.4,-4.72) and  (4,-5.12) .. (9.28,.0);
 %neighborhood of the Cauchy surface
 \filldraw[thin,fill=gray!40,opacity=0.5,draw=none] (-3.36,0.0) .. controls (2.4,-3.6) and  (4,-4.0) .. (9.28,.0) .. controls (4,-5.8) and (2.4,-5.4)   ..(-3.36,0.0);
 % the lines for the light cone
  \filldraw[thin,fill=gray!20,dashed,opacity=0.2] (-3.36,0.0) -- (2.96, 6.32) -- (9.28,0.0);
  \filldraw[thin,fill=gray!20,dashed,opacity=0.2] (-3.36,0.0) -- (2.96, -6.32) -- (9.28,0.0);
  \node at (-0.3,0.25) {$C$};
   \node at (1.3,-2.8) {$C_{t_-}$};
  \node at (6.4,0.4) {$\calO$};
  %\fill[gray!80,opacity=0.4] (3.5, -3.5) ellipse (1.4 and 0.6);
 \node at (3.52, 0) {$\calU$};
\end{tikzpicture}

\caption{Illustration of the proof of Theorem \ref{lts}}\label{causaldomaina}
\end{figure}
 
\begin{proof}
 Since $D(C)$ is globally hyperbolic we can choose a global time function so that the level sets $C_t$ are Cauchy hypersurfaces. Since the closure of $\calU$ is compact there exists a sufficiently small $t_-<0$ such that
 $\calU$ is a compact subset of the interior of $D^+(C_{t_-})$. We now apply the previous Lemma to find an open subset $\calO_2$ of compact closure within a neighborhood of $C_{t_-}$ contained in $D^-(C)$ so that $\calA_g(\calU) \subset \calA_g(\calO_2)$.
Since $\calO_2$ is a subset of compact closure in $D^-(C)$ we can again apply the previous lemma to conclude the proof.
\end{proof}

\subsection{Local implementability of Killing flows by unitaries} \label{implsec}

One important property of the stress energy tensor in Minkowski space is that suitable integrals over Cauchy hypersurfaces yield locally generators of the space-time translations. 
In a spacetime with a global Killing field the current $T_g(Z)$ with coordinates $T_g^{jk} Z_j$ is divergence free. At least on a formal level one expects that the operator defined by an integral 
$\int_\Sigma T_g^{jk} Z_j n_k \der x$
over any Cauchy hypersurface with future pointing normal covector field $n_k$ generates the Killing flow on the Hilbert space. There are essentially two problems with such a formula. The first is that the Cauchy surface may be non-compact and the integral may not make sense even as a quadratic form on a large enough set of vectors. The second problem is that the 
operator distributional current $T_g(Z)$ may not have a well defined restriction to the Cauchy hypersurface. It is relatively easy to get around the latter problem, as one can simply replace integration over one Cauchy surface by a smeared out version which averages over many Cauchy surfaces. In mathematical terms this is achieved by pairing $T_g(Z)$ with the derivative of a function that equals one in the far future and that vanishes in the far past. In the case of a spatially compact spacetime the derivative of such a function is compactly supported and the distributional pairing then defines an operator. In case the spacetime is not spatially compact one can still expect a local version of this construction as the parts contributing to the integral over a Cauchy hypersurfaces at large spatial separation of the region $\calO$ of interest commute with the algebra of that region. We may therefore modify the integral and integral only over the parts that are causally related to $\calO$.
This localisation is exactly what the following construction achieves and it shows that under our hypotheses the Killing flow may indeed be locally implemented.

We assume that $\calO$ is an open subset with compact closure so that the closure of the set $J^+_g(\calO) \cap J^-_g(\calO)$ is compact. We assume that $\calU$ is an open set with compact closure containing $\overline{J^+_g(\calO) \cap J^-_g(\calO)}$.
We assume that the metric on $\calU$ admits a Killing field $Z$. No timelikeness assumption on $Z$ is made; the construction applies to any
Killing field defined on $\calU$.
We will now modify the flow outside a neighborhood of $\calO$. For this we choose an open set $\calW$ such that
the closure of $\calO$ is contained in $\calW$ and so that $\calU$ can be covered by the following open sets
$\calW, \calU_+, \calU_-,\calU_0$ (see Fig. \ref{figcones}) such that
\begin{itemize}
 \item[(i)] the closures of $\calU_\pm$ and of $J^\mp(\calO)$ do not intersect, and
 \item[(ii)] the closures of $\calU_0$ and $J(\calO) =J^+(\calO) \cup J^-(\calO)$ do not intersect.
 \item[(iii)] the closures of $\calU_+$ and $\calU_-$ do not intersect.
\end{itemize}

\begin{figure}[H] 
\centering
\begin{tikzpicture}
\begin{axis}[
    axis lines=none,         
    xlabel={},
    ylabel={},
    grid=none,               
    xmin=-5, xmax=5,    
    ymin=-5, ymax=5,       
    width=8cm,
    height=8cm,
    domain=-3.5:3.5,
    axis line style={thick},
    enlargelimits
]

\node at (600,520) {$\calO$};
\node at (600,680) {$\mathcal{W}$};
\node at (600,950) {$\calU_+$};
\node at (600,100) {$\calU_-$};
\node at (110,510) {$\calU_0$};
\node at (1100,510) {$\calU_0$};
\node at (1070,10) {$\calU$};

\addplot[fill=gray, opacity=0.4, thick, samples=100] ({cos(deg(x))}, {sqrt(0.75)*sin(deg(x))}); 

\addplot[fill=gray, opacity=0.2, thick, samples=100] ({4*cos(deg(x))}, {4*sqrt(0.75)*sin(deg(x))}); 
  
\addplot[fill=gray, opacity=0.1, thick, samples=100] ({6*cos(deg(x))}, {6*sqrt(0.75)*sin(deg(x))});

\addplot[gray, dashed, thin, opacity=0.3,domain=1:4.6] {x-1};  
 \addplot[gray, dashed, thin, opacity=0.3, domain=1:4.6] {-x+1};  
\addplot[gray, dashed, thin, opacity=0.3, domain=-4.5:-1]  {x+1};  
 \addplot[gray, dashed, thin, opacity=0.3, domain=-4.5:-1] {-x-1};

\addplot[gray, dashed, thick, domain=4:5.7] {x-4};  
 \addplot[gray, dashed, thick, domain=4:5.7] {-x+4}; 
 \addplot[gray, dashed, thick, domain=-5.6:-4] {x+4}; 
 \addplot[gray, dashed, thick, domain=-5.6:-4]  {-x-4};

\end{axis}
\end{tikzpicture}
\caption{The sets $\calO,\calU,\mathcal{W}$ and the cover $\calU_\pm,\calU_0,\calW$.}\label{figcones}
\end{figure}

In case $\calU$ is considerably larger than $\calO$ we can also choose $\calW$ larger. Furthermore, in case the Killing field is globally defined and $M$ is spatially compact then the set $\calU_0$ might be empty.
Using a suitable cutoff function that equals one near $\calW$ we can now modify the Killing field $Z$ to a vector field $\tilde Z$ that is compactly supported in $\calU$ and equals $Z$ near $\calW$.
We denote by $\phi_t$ the one-parameter group of diffeomorphisms associated to $\tilde Z$.

Since $\tilde Z$ is not Killing the metric $g_t = \phi_t^* g = g + h(t)$ differs from $g$ by a symmetric tensor $h(t)$ and we can write
\begin{align}
h(t) = h_+(t) + h_-(t) + h_0(t),
\end{align} 
so that
$$
 \mathrm{supp}\,h_\pm(t) \subset \calU_\pm, \quad  \mathrm{supp}\,h_0(t) \subset \calU_0, \quad \phi_{\pm t}(\mathrm{supp}\,h_0(t)) \subset \calU_0
$$
for all $t$ in a sufficiently small interval  about $0$. Such a splitting can be achieved for example using a suitable partition of unity.
Further, for $T>0$ sufficiently small we have
$$
 g+h(t), g+ h_+(t), g + h_+(t) + h_0(t),  g + h_-(t),  g + h_-(t) + h_0(t) \in \mathcal{M}.
$$
for $t \in [-T,T]$ and we fix such a $T$.
 
\begin{definition} \label{killdef}
 The family of unitary operators $U_Z(t), t \in [-T,T]$ associated to the Killing field is defined as
 \begin{align}
  U_Z(t) = S(g+h_+(t) + h_0(t),g).
 \end{align}
\end{definition} 

A priori this family depends on the choice of modification $\tilde Z$ and on the splitting via the choice of cutoff functions.
We will see below that different choices of splitting result in a sense in equivalent operators. To analyse this we observe the following lemma.

\begin{lemma}  \label{indeplemmaflow}
 Assume that $k_0(t),k_0'(t)$ are smooth families of symmetric two tensors that are for each $t \in [-T,T]$ compactly supported in $\calU_0$, and such that
 $g+h_+(t) + k_0(t)$ and  $g+h_+(t) + k_0'(t)$ are in $\mathcal{M}$ for all $t \in [-T,T]$.
 Then
  $$
   S(g+h_+(t) + k_0(t),g) S(g+h_+(t) + k_0'(t),g)^*
  $$
  commutes with any element in $\calA_g(\calO)$ for all $t \in [-T,T]$.
\end{lemma}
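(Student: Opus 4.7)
The strategy is to use Assumption (A) to replace the base point $g$ by the closer base point $g+h_+(t)$, after which the scattering matrices in question become localised inside $\calU_0$ by Locality, while $\calO$ and $\calU_0$ remain causally disjoint and the algebra $\calA(\calO)$ remains unaffected.

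\textbf{Step 1 (reduction via central holonomy).} Pick any piecewise smooth paths in $\mathcal{M}$ from $g$ to $g+h_+(t)$, from $g+h_+(t)$ to $g+h_+(t)+k_0$, and from $g+h_+(t)$ to $g+h_+(t)+k_0'$; concatenating the first with each of the latter two gives paths from $g$ to $g+h_+(t)+k_0$ and to $g+h_+(t)+k_0'$ respectively. By Assumption (A), the scattering matrices along these concatenations agree with $S(g+h_+(t)+k_0,g)$ and $S(g+h_+(t)+k_0',g)$ up to a phase $\lambda(t)\in U(1)$. Using multiplicativity (item (1) of Definition \ref{defconnection}) and unitarity of the phase, one obtains
\[
S(g+h_+(t)+k_0,g)\,S(g+h_+(t)+k_0',g)^{*} \;=\; \lambda(t)\,S(g+h_+(t)+k_0,\,g+h_+(t))\,S(g+h_+(t)+k_0',\,g+h_+(t))^{*}.
\]
Since $\lambda(t)\in U(1)$ commutes with every bounded operator on $\calH$, it suffices to prove the statement for the right-hand side without the phase.

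\textbf{Step 2 (locality).} Choose a path in $\mathcal{M}$ from $g+h_+(t)$ to $g+h_+(t)+k_0$ whose increment with respect to $g+h_+(t)$ stays supported in $\calU_0$ throughout. For small perturbations the linear interpolation $s\mapsto g+h_+(t)+sk_0$ works (as $\mathcal{M}$ is open); in general one may subdivide $k_0$ into small pieces and concatenate, using path-connectedness of $\mathcal{M}$ within perturbations supported in $\calU_0$. By Locality, item (2), we obtain $S(g+h_+(t)+k_0,\,g+h_+(t))\in \calA_{g+h_+(t)}(\calU_0)$, and analogously for $k_0'$. Hence their product and its adjoint also lie in $\calA_{g+h_+(t)}(\calU_0)$.

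\textbf{Step 3 (commutativity and identification of the algebra).} The cover conditions (i) and (ii), together with the construction of $\calW,\calU_\pm,\calU_0$ pictured in Fig.~\ref{figcones}, imply that $\calU_0$ and $\calO$ are spacelike separated with respect to $g$, and since $h_+(t)$ is supported in $\calU_+$ (disjoint from $J^-_g(\calO)$) this spacelike separation persists for $g+h_+(t)$ once $T$ is small enough. Axiom (ii) of the quantum field theory then gives $[\calA_{g+h_+(t)}(\calU_0),\calA_{g+h_+(t)}(\calO)]=\{0\}$. Finally, by Causality, item (1), since $\mathrm{supp}(h_+(t))\subset \calU_+$ and $\overline{\calU_+}\cap J^-_g(\calO)=\emptyset$, we have $\calA_{g+h_+(t)}(\calO)=\calA_g(\calO)=\calA(\calO)$. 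Combining Steps 1--3 concludes the proof.

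\textbf{Main obstacle.} The substantive point is Step 2, namely ensuring that an admissible path inside $\mathcal{M}$ exists whose perturbation stays supported in $\calU_0$; this is a combination of openness of $\mathcal{M}$ (for small $k_0,k_0'$) and a subdivision argument (for general admissible $k_0,k_0'$). The remainder is a bookkeeping exercise with the three axioms of Definition \ref{defstress}, the crucial input being that central holonomy lets us freely relocate the base metric of $S$ at the cost of an overall phase.
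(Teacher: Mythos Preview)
Your proof is correct and follows essentially the same approach as the paper: factor through the intermediate metric $g+h_+(t)$ using central holonomy (the paper does this in one line by first collapsing the product to $S(g+h_+(t)+k_0,\,g+h_+(t)+k_0')$ and then splitting at $g+h_+(t)$), invoke Locality to place each factor in $\calA_{g+h_+(t)}(\calU_0)$, use spacelike commutativity with $\calA_{g+h_+(t)}(\calO)$, and finish via Causality (1) to identify $\calA_{g+h_+(t)}(\calO)=\calA_g(\calO)$. Your added discussion of the path-existence issue in Step~2 is more careful than the paper, which leaves this implicit.
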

\begin{proof}
 The expression equals 
 $$
  S(g+h_+(t) + k_0(t), g+h_+(t) + k_0'(t)).
 $$
 This can be written as in terms of equivalence classes modulo $U(1)$-factors as
 $$
  S(g+h_+(t) + k_0(t), g+h_+(t) + k_0'(t)) \sim S(g + h_+(t) + k_0(t), g + h_+(t)) S(g + h_+(t) , g+h_+(t) + k_0'(t))
 $$
 For each $t \in [-T,T]$ the right hand side is contained in
 $\calA_{g+h_+(t)}(\calU_0)$ by locality \eqref{defstress:locality},{\sl (\ref{loc2})}. It therefore commutes with every element in
 $\calA_{g+h_+(t)}(\calO)$. Since  $\calA_{g+h_+(t)}(\calO) = \calA_{g}(\calO)$ this shows the lemma.
\end{proof}

The operators $S(g+h_\pm(t) + k_0,g)$ are actually inverses of each other modulo operators that commute with $\calA_g(\calO)$
as the following lemma indicates.

\begin{lemma} \label{elevenlemma}
 Assume that $k_0(t)$ is a smooth family of symmetric two-tensors that is compactly supported in $\calU_0$ such that
 $g+h_+(t) + k_0(t)$ and $g+h_-(t) + k_0(t)$ are in $\mathcal{M}$ for all $t \in [-T,T]$ and such that $\phi_t(\supp k_0(t))$ is contained in $\calU_0$ for all $t \in [-T,T]$.
 Then there exist families of unitary operators $V(t)$ and $W(t)$ in $\calA_g(\calU_0)$ such that
  \begin{align}
   S(g+h_-(t) + k_0(t),g) V(t) S(g+h_+(t) + k_0(t),g) W(t) \sim \mathbf{1}
  \end{align}
  for all $t \in [-T,T]$.
\end{lemma}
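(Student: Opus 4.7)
The starting point is the \textbf{Covariance} axiom. Writing $A_h := S(g+h, g)$ as a local abbreviation, Assumption~(A) gives $S(g+h_1, g+h_2) \sim A_{h_1} A_{h_2}^{-1}$ modulo $U(1)$. The diffeomorphism path $s \mapsto \phi_s^* g$ from $g$ to $g+h(t)$ has trivial scattering matrix by Covariance, so under Assumption~(A),
\begin{equation*}
A_{h(t)} \sim \mathbf{1}, \qquad h(t) = h_+(t) + h_0(t) + h_-(t).
\end{equation*}
Applying the \textbf{Causality} axiom, part~(3), with $(h_1, h_2, h_3) = (h_+(t), h_0(t), h_-(t))$ -- the required support condition $J^-_{g+h_0(t)}(\supp h_+(t)) \cap \supp h_-(t) = \emptyset$ being arranged by choosing the cover $\{\calW, \calU_\pm, \calU_0\}$ so that $\calU_+$ and $\calU_-$ are separated by a Cauchy surface of $g$ (and shrinking $T$ so this persists under the perturbation $h_0$) -- and combining with the covariance identity yields
\begin{equation*}
A_{h_+(t) + h_0(t)} \, A_{h_0(t) + h_-(t)} \sim A_{h_0(t)}. \qquad (\ast)
\end{equation*}

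The next step is to replace $h_0(t)$ by $k_0$ in the factors of $(\ast)$ using \textbf{Locality}, in the same spirit as the previous lemma. For the future factor, $A_{h_+(t) + h_0(t)} A_{h_+(t)}^{-1} \sim S(g+h_+(t)+h_0(t), g+h_+(t))$ is parallel transport along a path of $\calU_0$-supported perturbations of $g+h_+(t)$, and hence lies in $\calA_{g+h_+(t)}(\calU_0)$ by Locality. Causality part~(1) applies because $\calU_+ \cap J^-_g(\calU_0) = \emptyset$ (the future cone $\calU_+$ does not meet the past of the spacelike-separated set $\calU_0$), giving $\calA_{g+h_+(t)}(\calU_0) = \calA_g(\calU_0)$. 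Hence $A_{h_+(t) + h_0(t)} = y_0(t) A_{h_+(t)}$ and $A_{h_+(t) + k_0} = y(t) A_{h_+(t)}$ with unitaries $y_0(t), y(t) \in \calA_g(\calU_0)$. For the past factor we instead invoke Causality part~(2): since $\calU_- \cap J^+_g(\calU_0) = \emptyset$, $\calA_{g+h_-(t)}(\calU_0) = A_{h_-(t)} \calA_g(\calU_0) A_{h_-(t)}^{-1}$, and pulling the correction factors to the right of $A_{h_-(t)}$ via $w A_{h_-(t)} = A_{h_-(t)} (A_{h_-(t)}^{-1} w A_{h_-(t)})$ gives $A_{h_0(t) + h_-(t)} = A_{h_-(t)} x_0(t)$ and $A_{h_-(t) + k_0} = A_{h_-(t)} x(t)$ with unitaries $x_0(t), x(t) \in \calA_g(\calU_0)$. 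Finally, $A_{h_0(t)} \in \calA_g(\calU_0)$ by Locality applied directly at base metric $g$.

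Substituting these factorisations into $(\ast)$ and cancelling $A_{h_+(t)} A_{h_-(t)}$ yields
\begin{equation*}
A_{h_+(t) + k_0} \, A_{h_-(t) + k_0} \sim y(t) y_0(t)^{-1} \, A_{h_0(t)} \, x_0(t)^{-1} x(t) =: C(t),
\end{equation*}
a unitary element of $\calA_g(\calU_0)$. The conclusion of the lemma follows at once with, say, $a(t) = \mathbf{1}$ and $b(t) = C(t)^{-1}$, or with any other factorisation of $C(t)^{-1}$ into two unitaries in $\calA_g(\calU_0)$. The principal obstacle is geometric: one must arrange the cover $\{\calW, \calU_\pm, \calU_0\}$ and shrink $T > 0$ so that the support conditions underlying each invocation of the Causality axioms hold simultaneously and uniformly in $t \in [-T, T]$. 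This is possible because the modified Killing flow $\phi_t$ is generated by a compactly supported vector field and can be made arbitrarily close to the identity by shrinking $T$, keeping the supports of $h_\pm(t), h_0(t)$ well inside the designated sets.
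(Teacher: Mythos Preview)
Your identity $(\ast)$ is miscomputed. Writing causality in the form $A_{h_1+h_2+h_3}\sim A_{h_1+h_2}A_{h_2}^{-1}A_{h_2+h_3}$ and combining with $A_{h(t)}\sim\mathbf{1}$ gives
\[
A_{h_+(t)+h_0(t)}\,A_{h_0(t)}^{-1}\,A_{h_0(t)+h_-(t)}\sim\mathbf{1},
\]
with $A_{h_0(t)}^{-1}$ sandwiched in the middle, not $A_{h_0(t)}$ on the right. Since $A_{h_0(t)}$ has no reason to commute with either outer factor (the algebras $\calA_g(\calU_0)$ and $\calA_g(\calU_\pm)$ are not spacelike separated), your stated $(\ast)$ does not follow. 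This error propagates: after substituting your factorisations you cannot ``cancel $A_{h_+(t)}A_{h_-(t)}$'' to reach a product of $\calU_0$-local unitaries, and in particular the choice $a(t)=\mathbf{1}$ fails. The fix is easy --- substituting the correct identity gives $A_{h_++k_0}\,A_{h_0}^{-1}\,A_{h_-+k_0}\sim C(t)\in\calA_g(\calU_0)$, so one may take $a(t)=A_{h_0(t)}^{-1}$ and $b(t)=C(t)^{-1}$ --- but the argument as written does not prove the statement.

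Even after this correction your route is more circuitous than the paper's and imports extra hypotheses: you need $\supp h_+(t)\cap J^-_g(\calU_0)=\emptyset$ and $\supp h_-(t)\cap J^+_g(\calU_0)=\emptyset$ to identify $\calA_{g+h_\pm(t)}(\calU_0)$ with $\calA_g(\calU_0)$, and these relations between $\calU_\pm$ and $\calU_0$ are not part of the cover axioms (i)--(ii). The paper avoids all of this by applying covariance directly to the metric $g+r_0(t)$ with $r_0(t)=\phi_{-t}^*(k_0-h_0(t))$ supported in $\calU_0$, for which $\phi_t^*(g+r_0(t))=g+h_+(t)+h_-(t)+k_0$. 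One then factors $S(g+h_++h_-+k_0,g)$ by a single use of causality with middle piece $k_0$, obtaining $a(t)=S(g,g+k_0)$ and $b(t)=S(g+r_0(t),g)^*$, both in $\calA_g(\calU_0)$ by locality alone.
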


\begin{proof}
Let $r_0(t)= \phi_{-t}^* (k_0(t) - h_0(t))$. Then $r_0(t)$ is supported in $\calU_0$ and 
$$
 \phi_{t}^* ( g +  r_0(t) ) = g + h_+(t) + h_-(t) + k_0(t).
$$
Hence, by covariance \eqref{defstress:covariance} and causality \eqref{defstress:causality}{\sl (3)},
\begin{align*}
 S(g + r_0(t),g) &= S(g+h_+(t) + h_-(t) + k_0(t),g) \\ &\sim S(g+h_-(t)  + k_0(t),g + k_0(t)) S(g+h_+(t)  + k_0(t),g) .
\end{align*}
Therefore,
\begin{align*}
  S(g+h_-(t)  + k_0(t),g) S(g,g+k_0(t))  S(g+h_+(t)  + k_0(t),g) S(g + r_0(t),g)^*  \sim \mathbf{1}.
\end{align*}
This proves the Lemma as $S(g,g+k_0(t))$ and $S(g + r_0(t),g)$ are in $\calA_g(\calU_0)$ by locality \eqref{defstress:locality}
{\sl(2)}.
\end{proof}

\begin{proposition}
 Let $U_Z(t)$ and $\tilde U_Z(t)$ be two families of unitary operators associated to the Killing field $Z$ on $\calU$ in the sense of Definition \ref{killdef} with respect to possibly different extensions of the Killing field, possibly different choices of open sets $\calU_\pm,\calU_0$, and possibly different decompositions $h = h_+ + h_- + h_0$.
 Then, for sufficiently small $T>0$ the operators $U_Z(t) \tilde U_Z(t)^*$ commute with any operator in $\calA_g(\calO)$ for any $t \in [-T,T]$.
 In particular, for any $A \in \calA_g(\calO)$ the action $\alpha_t(A) = U_Z(t)^* A U_Z(t), t \in [-T,T]$ is independent of the choices above.
\end{proposition} 
\begin{proof}
 From Lemma \ref{indeplemmaflow} we already know that different splittings and choices of open sets $\calU_0,\calU_\pm$
 result in the same unitary modulo possible factors commuting with $\calA_g(\calO)$.
 We therefore can assume that we have used the same cut-off functions for the splitting and the same open sets $\calU_0,\calU_\pm$.
 We thus can assume that the Killing field has been modified in two different ways to a compactly supported vector field in $\calU$ with corresponding flows  $\phi_t$ and $\tilde \phi_t$. These flows coincide on $\calW \cap \tilde \calW$ for $t \in [-T,T]$ for sufficiently small $T>0$. 
 This defines a diffeomorphism $\psi_t = \phi_t^{-1} \tilde \phi_t$ which acts trivially outside a compact set $K$ in $\calU_0 \cup \calU_+ \cup \calU_-$, i.e. $\supp \psi_t \subset K$.
 We then have the related decompositions of the pull-backs of the metric
 $$
  \psi_t^*(g + h_+(t) + h_0(t) + h_-(t)) =  g + \tilde h_+(t) + \tilde h_0(t) + \tilde h_-(t).
 $$
 We first show the theorem in three special cases.\vspace{0.2cm}\\
\noindent
\underline{Case 1: $\psi_t$ acts trivially outside of $\calU_0$:}
 In this case 
 $g + h_+(t) + h_0(t)$ differs from $g + \tilde h_+(t) + \tilde h_0(t)$ by a two-tensor supported in $\calU_0$. The statement of the theorem then follows immediately from Lemma \ref{indeplemmaflow}.
\vspace{0.2cm}\\
\noindent
\underline{Case 2: $\psi_t$ acts trivially outside of $\calU_-$:}
In this case $$\left( \tilde h_+(t) - h_+(t) \right) + \left( \tilde h_0(t) - h_0(t) \right) + \left( \tilde h_-(t) - h_-(t) \right)$$ is supported in $\calU_-$ and therefore
$\tilde h_+(t) - h_+(t)=0$. It follows that
 $$g + h_+(t) + h_0(t) = g + \tilde h_+(t) + \tilde h_0(t) + k_0(t)$$ with $k_0(t)$ supported in $\calU_0$ for sufficiently small $|t|$.  We can again use Lemma \ref{indeplemmaflow} to conclude the statement of the theorem.
 \vspace{0.2cm}\\
\noindent
\underline{Case 3: $\psi_t$ acts trivially outside of $\calU_+$:}
In this case $$\left( \tilde h_+(t) - h_+(t) \right) + \left( \tilde h_0(t) - h_0(t) \right) + \left( \tilde h_-(t) - h_-(t) \right)$$ is supported in $\calU_+$.
Comparing supports we see that $\psi_t^* h_-(t) = h_-(t) = \tilde h_-(t)$.
Therefore,
 $$
 \psi_t^*(g + h_+(t) + h_0(t))  =  g + \tilde h_+(t) + \tilde h_0(t)
 $$
 and, using covariance \eqref{defstress:covariance}, we obtain directly $U_Z(t) \sim \tilde U_Z(t)$.
 \vspace{0.2cm}\\
\noindent
\underline{General case:}
 The general case is now obtained by writing the diffeomorphism $\psi_t$ as a composition of three diffeomorphisms of the above types. This can always be done for $t \in [-T,T]$ and sufficiently small $T$. To see this choose relatively compact open sets $\calV_-\Subset\calU_-, \calV_0\Subset\calU_0, \calV_+\Subset\calU_+$
such that $K\subset \calV_-\cup \calV_0 \cup \calV_+ $. Now decompose $K = K_0 \cup K_+ \cup K_-$ into closed sets $K_0 \subset \calV_0,K_+ \subset \calV_+,K_- \subset \calV_-$ 
and choose smooth functions $(\chi_\bullet)_{\bullet \in \{0,+,-\}} \in C^\infty_0(M)$ with
$$
  \supp\chi_\bullet\subset\calU_\bullet,
  \qquad
  \chi_\bullet=1
  \quad\text{in an open neighbourhood of }K_\bullet .
$$
For any compactly supported time-dependent vector field $X_t$, the flow
equation generates a two-parameter family of diffeomorphisms $\phi_{(t,s)}$.
We say that a one-parameter family $\psi_t$ is generated by $X_t$ if
$\psi_t=\phi_{(t,0)}$. After decreasing $T$, we may write $\psi_t$ in this
form with $X_t$ supported in $K$. Define $\psi_{-,t}$ to be the family
generated by $\chi_-X_t$. Since $\chi_-=1$ in an open neighbourhood of
$K_-$, and $K_-$ is compact, $\psi_{-,t}$ agrees with $\psi_t$ on a
neighbourhood of $K_-$ for $|t|\leq T$, after possibly decreasing $T$.
Therefore $\psi_t^{(1)}:=\psi_{-,t}^{-1}\circ\psi_t$
is the identity near $K_-$ and $
  \supp \psi_t^{(1)}\subset K\setminus K_-
  \subset \calV_0\cup \calV_+$.
We now repeat the same construction for $\psi_t^{(1)}$. After decreasing
$T$ again if necessary we think of $\psi_t^{(1)}$ as associated to the time-dependent vector-field $X_t^{(1)}$ and define the diffeomorphism
$\psi_{0,t}$ as defined by the time-dependent vector field $\chi_0 X_t^{(1)}$.
Then $\psi_{0,t}$ is supported in $\calU_0$, and it agrees with
$\psi_t^{(1)}$ on an open neighbourhood of $K_0$ for $T$ sufficiently small. Hence
$\psi_t^{(2)}:=\psi_{0,t}^{-1}\circ\psi_t^{(1)}$
is the identity near $K_0$. Consequently
$
  \supp \psi_t^{(2)}\subset K\setminus(K_-\cup K_0)
  \subset \calV_+ \subset \calU_+ .
$
Set $\psi_{+,t}:=\psi_t^{(2)}$.
Then $\psi_{+,t}$ is supported in $\calU_+$, and by construction
$$
  \psi_t
  =
  \psi_{-,t}\circ\psi_{0,t}\circ\psi_{+,t}.
$$
All three families depend smoothly on $t$, and are equal to the identity at $t=0$.
Applying the three special cases successively to these factors proves the claim.
 \end{proof} 

\begin{proposition}
 Let $U_Z(t)$ be families of unitary operators associated to the Killing field $Z$ on $\calU$.
 Then, for sufficiently small $T>0$ there exist families of unitary operators $V(t,s)$ and $W(t,s)$ with $s,t \in [-T,T]$, commuting with 
 any operator in $\calA_g(\calO)$, such that
 \begin{align}
   U_Z(t) V(t,s) U_Z(s) = W(t,s) U_Z(t+s)
 \end{align}
 In particular we have for any $A \in \calA_g(\calO)$ the group property $\alpha_s(\alpha_t(A)) = \alpha_{t+s}(A)$.
 \end{proposition}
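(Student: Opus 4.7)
The plan is to establish the identity $U_Z(t)\,a(s,t)\,U_Z(s)\sim b(s,t)\,U_Z(t+s)$ modulo $U(1)$, with $a(s,t),b(s,t)\in \calA_g(\calU_0)$, following the pattern of the previous two lemmas. The key ingredient is the one-parameter group property $\phi_{s+t}=\phi_s\circ\phi_t$ of the flow of $\tilde Z$, applied via the covariance axiom.

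First, by covariance applied to the smooth family of diffeomorphisms $\phi_r$, $r\in[0,s]$, acting on the reference metric $g^{(t)}:=g+h_+(t)+h_0(t)$, one has $S(\phi_s^*g^{(t)},g^{(t)})\sim\mathbf{1}$ and hence $U_Z(t)\sim S(\phi_s^*g^{(t)},g)$. For $s,t$ small enough that $\phi_{-s}$ preserves the cover $\calU_\pm,\calU_0$, we may write $\phi_s^*g^{(t)} = g+h_-(s)+K_++K_0$ with $K_+:=h_+(s)+\phi_s^*h_+(t)\subset \calU_+$ and $K_0:=h_0(s)+\phi_s^*h_0(t)\subset \calU_0$. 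Using $h(s+t)=h(s)+\phi_s^*h(t)$ one simultaneously has $g^{(s+t)} = g+K_++K_0$ up to a correction supported in $\calU_0$, which will be absorbed into the commuting factors.

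Next, I would apply the causality condition with $h_1=h_-(s)$, $h_2=K_0$, $h_3=K_+$; the required support-separation $J^-_{g+K_0}(\supp h_-(s))\cap\supp K_+=\emptyset$ holds by conditions~(i)--(ii) on the cover. This splits the scattering matrix as
\[
U_Z(t)\sim S(g+h_-(s)+K_0,\,g+K_0)\cdot S(g+K_0+K_+,\,g),
\]
in which the second factor equals $U_Z(s+t)$ up to an element of $\calA_g(\calU_0)$. The first factor is handled in the spirit of Lemma~2: setting $r_0(s):=\phi_{-s}^*(K_0-h_0(s))\subset \calU_0$, covariance along the path $\phi_r^*(g+r_0(s))$ combined with causality~(3) converts $S(g+h_-(s)+K_0,\,g+K_0)$ into $U_Z(s)^{-1}$ multiplied on either side by unitaries in $\calA_g(\calU_0)$. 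Rearranging all $\calU_0$-localised factors into $a(s,t)$ on the right of $U_Z(t)$ and $b(s,t)$ on the left of $U_Z(t+s)$ yields the desired identity.

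Finally, the group property follows by substitution: writing $U_Z(t)\,U_Z(s) = b(s,t)\,U_Z(s+t)\,\alpha_s(a(s,t)^{-1})$ from the identity, one computes $\alpha_s(\alpha_t(A)) = \alpha_s(a(s,t))\,\alpha_{s+t}(A)\,\alpha_s(a(s,t)^{-1})$ for $A\in \calA_g(\calO)$, using that $b(s,t)$ commutes with $A$. That this equals $\alpha_{s+t}(A)$ reduces to the statement that $\alpha_s(a(s,t))$ commutes with $\alpha_{s+t}(A)$, which for small $s,t$ follows from the observation that near $\calW$ the flow $\phi_r$ agrees with the Killing flow of $Z$, so the $U_Z(r)$-conjugate of $\calA_g(\calU_0)$ remains in the commutant of algebras localised within $\calW$. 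The main obstacle will be the careful bookkeeping of the partition-of-unity decomposition under $\phi_s^*$: one has to verify that all cross-terms arising when comparing $h(s+t)$ with $h(s)+\phi_s^*h(t)$ across the $\calU_+,\calU_-,\calU_0$-splitting land in $\calU_0$ rather than in $\calU_\pm$, so that they contribute only commuting factors. This is achievable by choosing the cover robustly with respect to the flow and restricting to sufficiently small $s,t$, exploiting the freedom in the definition of $U_Z$ modulo $\calA_g(\calU_0)$ established in the preceding proposition.
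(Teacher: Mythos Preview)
Your overall strategy coincides with the paper's: use covariance along the flow $\phi_r$ to rewrite $U_Z(t)$ as a scattering matrix for a metric pulled back by $\phi_s$, then split via the causality axiom into a $\calU_-$-piece (yielding $U_Z(s)^{-1}$ up to $\calU_0$-corrections, via the second lemma) and a $\calU_+\cup\calU_0$-piece (yielding $U_Z(t+s)$ up to $\calU_0$-corrections). The paper pulls back by $\phi_t$ rather than $\phi_s$ and organises the support comparison by subtracting from the full identity $\phi_t^*(g+h(s))=g+h(t+s)$, which makes the appearance of $h_+(t+s)$ and $h_-(t)$ automatic; your direct decomposition $K_+=h_+(s)+\phi_s^*h_+(t)$ requires you to check by hand that the discrepancy with $h_+(s+t)$ lands in $\calU_0$, but this is the same computation in a different order.

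There is one genuine gap in your final paragraph. From $U_Z(t)\,a\,U_Z(s)=b\,U_Z(t+s)$ you correctly obtain $\alpha_s(\alpha_t(A))=\alpha_s(a)\,\alpha_{t+s}(A)\,\alpha_s(a)^{-1}$, and then assert that $\alpha_s(a)$ commutes with $\alpha_{t+s}(A)$. Your justification---that ``the $U_Z(r)$-conjugate of $\calA_g(\calU_0)$ remains in the commutant of algebras localised within $\calW$''---is not established anywhere: you have no control over $U_Z(s)^*\calA_g(\calU_0)U_Z(s)$, since the implementation result (Theorem~\ref{li}) is only proved for the region $\calO$, not for $\calU_0$. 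The clean way around this is to conjugate the identity by $U_Z(s)$ \emph{before} using commutation: from $[U_Z(t)aU_Z(s)]^*A[U_Z(t)aU_Z(s)]=[bU_Z(t+s)]^*A[bU_Z(t+s)]$ and $[b,A]=0$ one gets $\alpha_{t+s}(A)=U_Z(s)^*\bigl(a^*\alpha_t(A)a\bigr)U_Z(s)$, so what is actually needed is that $a$ commutes with $\alpha_t(A)$, not that $\alpha_s(a)$ commutes with $\alpha_{t+s}(A)$. This follows because $a\in\calA_g(\calU_0)$ and, by the argument of Theorem~\ref{li} (whose proof is independent of the present proposition), $\alpha_t(A)\in\calA_g(\phi_t(\calO))$; for small $t$ one has $\phi_t(\calO)\subset\calW$, which is spacelike-separated from $\calU_0$.
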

\begin{proof}
 Throughout we keep in mind that
 by causality \eqref{defstress:causality}{\sl (1)}
 $$
  \calA_{g}(\calO) = \calA_{g+h_+(t)}(\calO) = \calA_{g+h_+(t)+h_0(t)}(\calO).
 $$
 Since 
 $$
  \phi_t^*(g + h_+(s) + h_0(s) + h_-(s))= g + h_+(t+s) + h_0(t+s) + h_-(t+s)
 $$
 we have
 $$
   g + h_+(t) + h_0(t) + h_-(t) + \phi_t^*(h_+(s)) + \phi_t^*(h_0(s)) +  \phi_t^*(h_-(s)) = g + h_+(t+s) + h_0(t+s) + h_-(t+s).
 $$
 Comparing supports we can see that for sufficiently small $s,t \in [-\frac{1}{2} T,\frac{1}{2} T]$ the differences
 $h_\pm(t+s) - \phi_t^*(h_\pm(s)) - h_\pm(t)$ are supported in $\calU_0$. We can therefore write 
\begin{align} \label{groupwithexzczxc}
   \phi_t^*(g+h_+(s) + h_0(s)) = g + h_+(t+s)+ h_-(t) + k_0(t,s), 
 \end{align}
 where $k_0(t,s)$ is supported in $\calU_0$ for $s,t \in [-\frac{1}{2} T,\frac{1}{2} T]$.
 Similarly 
 \begin{align}
   \phi_t^*(g+h_-(s) + h_0(s)) = g + h_-(t+s)+ h_+(t) + \tilde k_0(t,s)
 \end{align}
with $\tilde k_0(t,s)$ supported in $\calU_0$.
 
Causality implies for all $s \in [-T,T]$ that
 $$
   \mathbf{1} \sim S(g + h_+(s) + h_0(s) ,g)  S(g + h_-(s) + h_0(s) ,g+h_0(s)).
 $$

 By covariance \eqref{defstress:covariance} and causality \eqref{defstress:causality}{\sl (3)}
 \begin{align*}
  S(g + h_+(s) + h_0(s),g) = S(g +  h_+(s+t) + h_-(t) + k_0(t,s),g)\\ \sim
  S(g + h_-(t) + k_0(t,s) , g + k_0(t,s)) S(g +  h_+(s+t)+ k_0(t,s),g) .
 \end{align*}

 Thus,
 \begin{align*}
  S(g + h_-(t) +k_0(t,s),g+k_0(t,s))^* S(g + h_+(s) + h_0(s),g)   \sim S(g +  h_+(s+t)+ k_0(t,s),g).
 \end{align*}

Using \eqref{groupwithexzczxc} with $s=-t$ we see that $\phi_{-t}^*(g+h_-(t) +k_0(t,s))$ is of the form $g+h_+(-t) + m_0(t,s)$ with $m_0(t,s)$ supported in
$\calU_0$ for sufficiently $t,s \in [0,T]$ and sufficiently small $T>0$. We have used here that there is a compact subset of $\calU_0$ that supports $h_0(s)$  and $k_0(t,s)$ for all $t,s \in [0,T]$ as long as $T$ is sufficiently small. 
Using again covariance \eqref{defstress:covariance}, we obtain
 \begin{align*}
  S(g &+ h_-(t) +k_0(t,s),g+k_0(t,s)) = S(g+h_+(-t) + m_0(t,s),g+k_0(t,s))\\ & \sim S(g+h_+(-t) + m_0(t,s),g+h_+(-t) + h_0(-t))S(g+h_+(-t) + h_0(-t),g)S(g,g+k_0(t,s)).
 \end{align*}
We now observe that $V_1(t,s) = S(g+h_+(-t) + m_0(t,s),g+h_+(-t) + h_0(-t))$ is affiliated to $\calA_{g+h_+(-t) + h_0(-t)}(\calU_0)$ and therefore commutes
with $\calA_{g+h_+(-t) + h_0(-t)}(\calO) = \calA_{g}(\calO)$. Note also that $V_2(t,s) = S(g,g+k_0(t,s))$ commutes with $\calA_{g}(\calO)$ for the same reason.
Similarly 
\begin{align*}
 S(g &+  h_+(s+t)+ k_0(t,s),g) \\&= S(g +  h_+(s+t)+ k_0(t,s), g +  h_+(s+t)+ h_0(t+s)) S(g +  h_+(s+t)+ h_0(t+s),g)
 \end{align*}
and again $V_3(t,s) = S(g +  h_+(s+t)+ k_0(t,s), g +  h_+(s+t)+ h_0(t+s))$ commutes with $\calA_g(\calO)$.
Summarizing
\begin{align*}
  V_2^*(t,s) & S(g + h_+(-t) + h_0(-t),g)^* V_1^*(t,s) S(g + h_+(s) + h_0(s),g) \\ & \sim V_3(t,s) S(g + h_+(t+s) + h_0(t+s),g) ,
\end{align*}
in other words
\begin{align} \label{sdfsdfdfgswk}
 V_2^*(t,s) U_Z(-t)^* V_1^*(t,s) U_Z(s) \sim V_3(t,s) U_Z(t+s) .
\end{align}
In case $s=0$ we obtain
$$
  U_Z(-t)^*  \sim V_2(t,0) V_3(t,0) U_Z(t) V_1(t,0).
$$
Re-inserting this into \eqref{sdfsdfdfgswk} gives
$$
 U_Z(t) V_1(t,0) V_1^*(t,s) U_Z(s) \sim  V^*_3(t,0)  V^*_2(t,0)  V_2(t,s) V_3(t,s)  U_Z(t+s).
$$
Setting $V(t,s) = V_1(t,0) V_1^*(t,s)$ and $W(t,s) = V^*_3(t,0)  V^*_2(t,0)  V_2(t,s) V_3(t,s)$ and absorbing the scalar phase into $W(t,s)$, we obtain the stated equality for all $s,t \in [-T/2,T/2]$. The statement of the proposition follows after decreasing $T$ further and replacing it by $T/2$.
 \end{proof}
 
We can now state that locally the Killing flow along $Z$ can be implemented by unitaries. 

\begin{theorem} \label{li}
 In a quantum field theory with stress energy tensor
 we have with $\alpha_t$ defined above there exists a $T>0$ such that $\alpha_t(\calA_g(\calO)) = \calA_g(\phi_t(\calO))$ for $t \in [-T,T]$.
\end{theorem}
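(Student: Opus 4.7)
The plan is to identify $\calA_g(\phi_t(\calO))$ in two ways: first via the diffeomorphism-covariance axiom as $\calA_{g_t}(\calO)$ (where $g_t := \phi_t^* g$), and second via causality as the conjugation $U_Z(t)^* \calA_g(\calO) U_Z(t)$. Since $\tilde Z$ is compactly supported, $\phi_t \in \mathrm{Diff}_{0,\compp}(M)$ for all $t$, so axiom (iii) of Section 2 immediately gives $\calA_g(\phi_t(\calO)) = \calA_{\phi_t^* g}(\calO) = \calA_{g_t}(\calO)$. The task is thus reduced to proving $\calA_{g_t}(\calO) = U_Z(t)^* \calA_g(\calO) U_Z(t)$.

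For this, I would traverse from $g$ to $g_t$ along the composite path $g \to g' \to g_t$, with $g' := g + h_+(t) + h_0(t)$. On the first leg, the perturbation $g' - g$ is supported in $\calU_+ \cup \calU_0$, which is disjoint from $J^-_g(\calO)$ (using $\calU_+ \cap J^-_g(\calO) = \emptyset$ and the fact that in the construction $\calU_0$ is spacelike to $\calO$, as is already implicit in the previous lemmas whose proofs use that $\calA_g(\calU_0)$ commutes with $\calA_g(\calO)$). Causality (1) of Definition \ref{defstress} therefore yields $\calA_{g'}(\calO) = \calA_g(\calO)$. On the second leg the perturbation $g_t - g'$ equals $h_-(t)$, supported in $\calU_-$, and by continuity of the causal structure under small metric perturbations one has $\calU_- \cap J^+_{g'}(\calO) = \emptyset$ for $|t|$ sufficiently small; so Causality (2), applied with $g'$ as background, gives
\begin{align*}
 \calA_{g_t}(\calO) = S(g_t, g') \, \calA_{g'}(\calO) \, S(g_t, g')^* = S(g_t, g') \, \calA_g(\calO) \, S(g_t, g')^*.
\end{align*}

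The final step is to identify $S(g_t, g')$ with $U_Z(t)^*$ modulo a scalar in $U(1)$. The covariance condition of Definition \ref{defstress} applied to the smooth path $s \mapsto \phi_s^* g$ gives $S(g_t, g) = \mathbf{1}$ along that particular path; together with Assumption (A) this shows that the class $[S(g_t, g)] \in U(\calH)/U(1)$ is trivial for every piecewise smooth path from $g$ to $g_t$. Groupoid composition then yields
\begin{align*}
 [\mathbf{1}] = [S(g_t, g)] = [S(g_t, g')] \, [S(g', g)] = [S(g_t, g')] \, [U_Z(t)],
\end{align*}
hence $[S(g_t, g')] = [U_Z(t)^*]$, and so $S(g_t, g') = c \, U_Z(t)^*$ for some $c \in U(1)$. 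The scalar $c$ cancels in the adjoint action, and one concludes $\calA_g(\phi_t(\calO)) = \calA_{g_t}(\calO) = U_Z(t)^* \calA_g(\calO) U_Z(t) = \alpha_t(\calA_g(\calO))$.

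The main obstacle is the technical step of ensuring all causal disjointness conditions survive the perturbation: one must choose $T>0$ so small that $\calU_- \cap J^+_{g'}(\calO) = \emptyset$ (and the analogous conditions used in the first leg) persist for all $|t| \leq T$, and that all intermediate metrics remain in $\mathcal{M}$. These are consequences of the continuity of the Lorentzian causal structure in the metric, the compactness of the relevant closures, and the smoothness of the splitting $h_\pm(t), h_0(t)$. The phase bookkeeping, which would otherwise be delicate, is rendered harmless by central holonomy.
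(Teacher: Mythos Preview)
Your proof is correct and follows essentially the same strategy as the paper: diffeomorphism covariance gives $\calA_g(\phi_t(\calO)) = \calA_{g_t}(\calO)$, causality turns this into a conjugation of $\calA_g(\calO)$, and the covariance axiom together with central holonomy identifies the conjugating unitary with $U_Z(t)$ up to phase. Your route is in fact slightly more streamlined than the paper's---the paper passes through $g+h_-(t)+h_0(t)$ and then invokes an extra commutation property of $S(g+h_0(t),g)$, whereas your groupoid identity $[S(g_t,g')]=[U_Z(t)^*]$ obtained from $[S(g_t,g)]=[\mathbf{1}]$ reaches the conclusion directly.
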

\begin{proof}
 By causality \eqref{defstress:causality}{\sl (2)},
 \begin{align*}
  \calA_{g+h_-(t)+h_0(t)}(\calO) &= S(g+h_-(t)+h_0(t),g) \calA_{g}(\calO)S^*(g+h_-(t)+h_0(t),g),\\
   \calA_{g+h_+(t)+h_0(t)}(\calO) &=  \calA_{g}(\calO).
 \end{align*}
 By covariance \eqref{defstress:covariance}
 \begin{align*}
  \calA_g(\phi_t(\calO)) &=  \calA_{g+h_+(t)+h_-(t)+h_0(t)}(\calO) = \calA_{g+h_-(t)+h_0(t)}(\calO) \\&= S(g+h_-(t)+h_0(t),g) \calA_g(\calO) S(g+h_-(t)+h_0(t),g)^*.
  \end{align*}
  
By Lemma \ref{elevenlemma} we have $W(t) S(g+h_-(t) + h_0(t),g) V(t) \sim  U_Z^*(t)$.
Since $\overline{\calU_0}$ is spacelike separated from $\overline{\calO}$ by continuity of the flow $\overline{\calU_0}$ is also spacelike separated from $\overline{\phi_t(\calO)}$ for sufficiently small $T>0$ and any $t \in [-T,T]$. Hence $V(t), W(t) \in \calA_g(\calU_0)$ commute with both $\calA_g(\calO)$ and $\calA_g(\phi_t(\calO))$.
Therefore,
  \begin{align*}
  \calA_g(\phi_t(\calO)) &= W(t) \calA_g(\phi_t(\calO)) W(t)^* \\&= W(t) S(g+h_-(t)+h_0(t),g) \calA_g(\calO)  S(g+h_-(t)+h_0(t),g)^*W(t)^*\\ &= W(t) S(g+h_-(t)+h_0(t),g) V(t)  \calA_g(\calO) V(t)^* S(g+h_-(t)+h_0(t),g)^* W(t)^*\\&=
 U_Z^*(t)  \calA_g(\calO) U_Z(t),
  \end{align*}
  which proves the claim.
 \end{proof}

\begin{rem}
The size of $T$ depends on the choice of cutoff modifying $Z$ to $\tilde Z$ and on the cover $\calW, \calU_\pm, \calU_0$. In favourable cases - for example when $Z$ is a global complete Killing field on a spatially compact spacetime (so that $\calU_0$ can be taken empty) - one can construct $U_Z(t)$ for arbitrarily large $|t|$. We do not pursue here the question of how to quantify the maximal $T$ in terms of the support of $\mathcal L_Z g$ and the geometry of $\calO$.
\end{rem}

\section{The example of the Klein-Gordon field} \label{KGSection}

The basic free fields, the Dirac field and the Klein-Gordon field, have been constructed algebraically by Dimock (\cite{MR0594301, MR0637032}). This construction provides a convenient starting point to discuss free fields on curved backgrounds, in particular when supplemented with the representation theory of CCR- and CAR-algebras. The construction of free fields can be understood as a two stage process. The first is the construction of the local field algebras as abstract $*$-algebras. This construction can be carried out for the massive and massless scalar field in any globally hyperbolic spacetime. It can also be carried out for the Dirac field in case a spin structure is specified (\cite{MR0637032}). In the second stage one looks for a representation of this abstract algebra as an algebra of operators on some Hilbert space. Such a representation always exists but usually requires a choice of reference state. Some form of selection criterion for such a choice of physically reasonable states is needed. In Minkowski spacetimes a criterion for the Minkowski vacuum is that of invariance under the Poincar\'e group and the fact that it leads to a positive energy representation. For curved spacetimes a wealth of reasons have been accumulated why the Hadamard condition provides a very good selection criterion from both a physics and mathematics point of view.
The Hadamard condition has been shown by Radzikowski (\cite{MR1400751}) to be the covariant microlocal counterpart of the spectrum condition. As proved by Verch (\cite{MR1266061}) different choices of Hadamard states lead to quasi-equivalent representations of the local algebras corresponding to regions with compact closure.

We will focus here on the Klein-Gordon field and will show that once a pure quasifree Hadamard state is chosen on a reference spacetime the theory admits a stress energy tensor.

Let $m\geq 0$ and denote by $P= \Box + m^2$ the Klein-Gordon operator. We have the retarded and advanced fundamental solutions $G_\pm$ that are regarded as continuous maps $C_0^\infty(M) \to C^\infty(M)$. The distributional kernels are then in $\mathcal{D}'(M \times M)$ and will be denoted by the same symbols.
The difference $G= G_+ - G_-$ then has antisymmetric distributional integral kernel, i.e. $G(f \otimes g) = - G(g \otimes f)$.
This coincides, up to a factor depending on the sign convention, with the Pauli-Jordan distribution and is sometimes also called the commutator distribution as it will describe the non-equal time commutator relation of the field.

We will denote by $\ker(P)$ the space of real-valued smooth solutions of the Klein-Gordon equation with spacelike compact support.
Given any Cauchy hypersurface $\Sigma$ the Cauchy data map $R_\Sigma: \ker(P) \to C^\infty_0(\Sigma,\R) \oplus C^\infty_0(\Sigma,\R)$ assigns to a solution its Cauchy data on $\Sigma$. This means
\begin{align}
 R_\Sigma u = (u |_{\Sigma}, (\partial_n u)|_\Sigma),
\end{align}
where $\partial_n$ is the future pointing unit normal vector field to $\Sigma$.
The map $R_\Sigma$ is an isomorphism, and for any pair $(\Sigma',\Sigma)$ of Cauchy hypersurfaces and the map $R_{\Sigma'} \circ (R_\Sigma)^{-1}: C^\infty_0(\Sigma,\R) \oplus C^\infty_0(\Sigma,\R) \to C^\infty_0(\Sigma',\R) \oplus C^\infty_0(\Sigma',\R)$ is a properly supported Fourier integral operator of order zero. We refer to \cite{MR1362544} for the construction of the solution operators as Fourier integral operators and note that the fact that the operators have proper support follows immediately from finite propagation speed for solutions of normally hyperbolic differential equations. The construction of this Fourier integral operator is also briefly sketched in Appendix \ref{Appendparam} where additional regularity under parameter dependence is established.

The restriction $R_\Sigma u$ is in fact defined on distributional solutions $u \in \mathcal{D}'(M)$ of the equation $P u =0$ and the corresponding map
$R_\Sigma : \mathrm{ker}\left( P : \mathcal{D}'(M) \to  \mathcal{D}'(M) \right) \to \mathcal{D}'(\Sigma) \oplus \mathcal{D}'(\Sigma)$ is invertible.
We will denote the inverse again by $R_\Sigma^{-1}$ when the choice of domain is not important.
Distributions and functions are identified on $\Sigma$ using the metric volume density. 
In what follows we will also consider situations in which the metric depends on an external paramater. Since the operators $P, R_\Sigma, R_\Sigma^{-1}$
as well as $\mathrm{ker}(P)$ depend on the metric we may also write $P_g, R_{g,\Sigma}, R_{g,\Sigma}^{-1}$ to denote this dependence explicitly.

We also have the restriction map $r_\Sigma: C^\infty_0(M) \to C^\infty_0(\Sigma) \oplus C^\infty_0(\Sigma)$ that is defined on all smooth functions rather than solutions. Of course $R_\Sigma = (r_\Sigma) |_{\mathrm{ker}(P)}$.
The adjoint $r_\Sigma^*$ therefore maps $\mathcal{D}'(\Sigma) \oplus \mathcal{D}'(\Sigma)$ to $\mathcal{D}'(M)$ in a continuous fashion. Similarly, $r_\Sigma^*: \mathcal{E}'(\Sigma) \oplus \mathcal{E}'(\Sigma) \to \mathcal{E}'(M)$.
The solution operator $R_\Sigma^{-1}: \mathcal{D}'(\Sigma) \oplus \mathcal{D}'(\Sigma) \to \mathcal{D}'(M)$
can also be expressed as
\begin{align} \label{solop}
 R_\Sigma^{-1} =  G \circ r_\Sigma^* \circ \left( \begin{matrix} 0 & 1\\ -1 &0 \end{matrix} \right).
\end{align}

For each Cauchy surface $\Sigma$ the Cauchy data space $C^\infty_0(\Sigma,\R) \oplus C^\infty_0(\Sigma,\R)$ carries a symplectic structure $\sigma_\Sigma$ given by

$$
 \sigma_\Sigma(\left( \begin{matrix} h \\  \dot h \end{matrix} \right),\left( \begin{matrix} f \\  \dot f \end{matrix} \right)) = \int_\Sigma \left(  f \dot h - h \dot f  \right) \der \mathrm{Vol}_\Sigma = \langle    \left( \begin{matrix} 0 & 1\\ -1 &0 \end{matrix}  \right)\left( \begin{matrix} h \\  \dot h \end{matrix} \right),  \left( \begin{matrix} f \\  \dot f \end{matrix} \right) \rangle_{L^2(\Sigma) \oplus L^2(\Sigma)}.
$$
The map $r^\dagger_\Sigma = r_\Sigma^* \left( \begin{matrix} 0 & 1\\ -1 &0 \end{matrix} \right)$ is the adjoint of the map
$r_\Sigma$ if one uses the symplectic form for the dual pairing on the space of initial data.
It follows from Green's identities that
the symplectic structure $\sigma$ induced on $\ker(P)$ by $R_\Sigma$ is independent of $\Sigma$ and is also characterised by
$$
 \sigma(u_1,u_2) = G(f_1 \otimes f_2) 
$$
if $u_1 = G f_1, u_2 = G f_2$. Since the kernel of the map $G$ is exactly $P C^\infty_0(M)$ this identifies the symplectic space $\ker(P)$ of solutions with the quotient $C_0^\infty(M,\R) / (P C_0^\infty(M,\R) )$ equipped with the symplectic form $G$.

The field algebra is now the abstract unital $*$-algebra generated by symbols $\Phi(f), f \in C^\infty_0(M)$ and the following relations
\begin{align}
 f &\mapsto \Phi(f)  \textrm{ is complex linear.}\\
 \Phi(f) \Phi(h) - \Phi(h) \Phi(f) &= - \rmi G(f \otimes h) \mathbf{1},\\
 \Phi(P f) &=0,\\
 \Phi(f)^* &= \Phi(\overline{f}).
\end{align}

By means of the map $G: C^\infty_0(M,\R) \to \mathrm{ker}(P)$ this is identified with the abstract CCR algebra of the symplectic vector space $(\mathrm{ker}(P), \sigma)$. For convenience we use here the non-exponentiated form of the CCR-algebra, which is not a $C^*$-algebra but merely a $*$-algebra, whose $*$-representations will be by unbounded operators $\Phi(f)$. We will be interested here in Fock space representations where the field operators are essentially self-adjoint and can then generate the exponentiated version of the CCR-algebra by $B(f) = e^{\rmi \Phi(f)}$. 
For an open set $\calO \subset M$ one defines the local von Neumann algebra $\calA(\calO)$ of the Klein-Gordon field as the von Neumann completion of the algebra generated by the $B(f)$ with $\supp f \subset \calO$. By von Neumann's bicommutant theorem this can also be defined as a commutant $$\calA(\calO) = (\{\Phi(f) \mid \supp f \subset \calO\}')'$$ of $\{\Phi(f) \mid \supp f \subset \calO \}'$, where
the commutant $\{\phi(f) \mid \supp f \subset \calO \}'$ is defined as the space of bounded operators on Fock space that commute with all the self-adjoint operators $\Phi(f)$ with $\supp f \subset \calO$.
Since the fields satisfy the Klein-Gordon equation it follows that the algebras automatically satisfy the local time-slice axiom. Namely, if $C \subset M$ is an achronal set and $\calO$ any open set containing the closure of $C$, then $\calA(\calO)$ contains $\calA(D(C))$.

\subsection{Pure quasifree states and Fock representations} \label{Fockspace}
A standard way to construct an irreducible representation of this CCR-algebra is by choosing a complex structure $J$ on a suitable completion of  $\mathrm{ker}(P)$ that allows to write the symplectic form as the imaginary part of a complex inner product. One way to phrase this is in terms of pure one-particle structures.
This consists of a continuous injective map
 $\kappa: \mathrm{ker}(P) \to H_\kappa$ into a real Hilbert space $H_\kappa$
with a compatible complex structure $J: H_\kappa \to H_\kappa$ such that

\begin{itemize}
 \item $\kappa$ has dense range,
 \item $\sigma(f,g) = -\langle J \kappa(f), \kappa(g) \rangle$.
\end{itemize}

Recall that $J$ is called compatible with the Hilbert space inner product $\langle\cdot, \cdot \rangle$ if 
$J$ is a skew-adjoint isometry $J = -J^* = J^{-1}$ (sometimes this is referred to as a Kähler structure).
We can identify the $H_\kappa$ with the completion of $\ker P$ with respect to the real inner product induced by $\kappa$ and we will therefore in the following think of $\ker P$ as a subset of $H_\kappa$.
The complex structure allows us to view $H_\kappa$ as a complex Hilbert space with the same norm. 
The complex inner product $\langle\cdot,\cdot \rangle_\C$ is given by
\begin{align}
 \langle f,g \rangle_\C = \langle f,g \rangle + \rmi \langle J f,g \rangle.
\end{align}
This complex Hilbert space is the so-called {\sl one-particle} Hilbert space. 
A concrete way to identify the complex Hilbert space as a complex subspace of the complexification 
$H_\kappa \otimes_\R \C$ is to use the map $v \mapsto \mathrm{pr}_{+\rmi}(v \otimes 1)$, where
$\mathrm{pr}_{+\rmi}$ is the projection onto the $+\rmi$ eigenspace $H_J$ of $J: H_\kappa \otimes_\R \C \to H_\kappa \otimes_\R \C$ under the splitting $H_\kappa \otimes_\R \C = H_J \oplus \overline{H_J}$.
Since $\mathrm{pr}_{\pm\rmi} = \frac{1}{2} \left( 1 \mp \rmi J \right)$ the real subspace $H_\kappa \otimes_\R 1$
of $H_J \oplus \overline{H_J}$ is precisely the set of elements of the form $v \oplus \overline{v}$.
Note that the complex isomorphism $H_\kappa \to H_J, \quad v \mapsto \mathrm{pr}_{+\rmi}(v \otimes 1)$ is not norm preserving but instead $\|  \mathrm{pr}_{+\rmi}(v \otimes 1) \| = \frac{1}{\sqrt{2}} \| v\|$.

Putting everything together we have a real linear map $\mathrm{ker} P \to H_J, v \mapsto \mathrm{pr}_{+\rmi}(v \otimes 1)$. This of course can be extended complex linearly to a complex linear map $p: \mathrm{ker} P \otimes_\R \C \to H_J$.
Hence, the map $p \circ G$ is an $H_J$-valued distribution on $M$. Similarly $f \mapsto \overline{(\mathbf{1}-p) \circ G f}$ defines a conjugate linear map to $H_J$.

The representation of the field algebra is on the symmetric (bosonic) Fock space
$\mathcal{F}(H_J)$ which is the Hilbert space completion of  $\bigoplus_{k=0}^\infty \bigotimes_S^k H_J$.
The algebraic direct sum $\bigoplus_{k=0}^\infty \hat \bigotimes_S^k H_J$ of the completed symmetric tensor products is called the finite particle subspace and it will be denoted by $\mathcal{F}_\fp(H_J)$. Obviously the finite particle subspace is a dense subspace in the symmetric Fock space.
It will also be useful to consider the subspace $\mathcal{F}_\sfp(H_J)$ which is the algebraic direct sum of the symmetric tensor products without completion, i.e. this subspace consists of finite linear combinations of simple tensor products of vectors in $H_J$.

Given $v \in H_J$ one has the standard creation and annihilation operators $a^*(v)$ and $a(v)$ (see \cite{MR0493420}*{X.7}), which are unbounded operators defined on the finite particle subspace satisfying the canonical commutation relations
\begin{align}
 [a(v), a^*(w)] = \langle v, w \rangle \mathbf{1}, \quad [a(v), a(w)] = [a^*(v), a^*(w)] = 0,
\end{align}
The operator
\begin{align}
  \Phi_J(f) =a(\overline{(\mathbf{1}-p) G f}) + a^*(p(G f))= a(p (G \overline{f})) + a^*(p(G f)), \quad f \in C^\infty_0(M,\C). 
 \end{align}
is defined on the finite particle subspace. Note the absence of the usual factor of $\frac{1}{\sqrt{2}}$ as in \cite{MR0493420} which would appear if an isometric map from $H_\kappa \to H_J$ had been chosen instead of $p$.

The map $f \mapsto \Phi_J(f)$ is clearly a complex linear map from the test function space $C^\infty_0(M,\C)$ to the linear operators $\mathcal{F}_\fp(H_J) \to \mathcal{F}_\fp(H_J)$. If $f \in C^\infty_0(M,\R)$ this simplifies to 
$\Phi_J(f) =a(p \circ G (f)) + a^*(p \circ G (f)).$ One then checks 
\begin{align}
 [\Phi_J(f_1), \Phi_J(f_2)] = -\rmi G(f_1 \otimes f_2) \mathbf{1}
\end{align}
on $\mathcal{F}_\fp(H_J)$
and these operators therefore define a representation of the CCR relation. This procedure is referred to as Segal quantisation. 

The two-point function is the vacuum expectation value $\langle \Omega, \Phi(\cdot) \Phi(\cdot) \Omega \rangle$ of $\Phi(\cdot) \Phi(\cdot)$ and is then given by
\begin{align}
 \omega_J(f \otimes h) = \langle p(G(\overline{f})), p(G(h)) \rangle.
\end{align}
Given two pure quasifree states constructed from the one particle structures as above a theorem by Shale (\cite{MR0137504}) states that a necessary and sufficient condition for the two representations to be unitarily equivalent is that the inner products $\langle \cdot,\cdot \rangle_{\kappa_j}, j=1,2$ induce the same topology on
$\ker(P)$ and that the corresponding positive operator implementing the equivalence is a Hilbert-Schmidt perturbation of the identity.  If this condition is satisfied there exists a unitary map $U_{\kappa_2,\kappa_1} : \mathcal{F}(H_{J}) \to \mathcal{F}(H_{J})$ so that
\begin{align}
 \Phi_{\kappa_2}(f) =U_{\kappa_2,\kappa_1}    \Phi_{\kappa_1}(f) U_{\kappa_2,\kappa_1}^*.
\end{align}
This unitary operator is unique modulo a phase factor in $U(1)$ but can be fixed by requiring that $\langle \Omega_{\kappa_2},U_{\kappa_2,\kappa_1} \Omega_{\kappa_1} \rangle \geq 0$. 
A more detailed description and the explicit form of the implementer can be found in Appendix \ref{Shalendix}.

\subsection{Quasifree pure Hadamard states}
A pure quasifree state as above is called a quasifree pure Hadamard state if the wavefront set of the two-point function is constrained to a subset of $V_- \times V_+$, where $V_\pm$ are the closed future/past lightcones in the cotangent bundle, i.e. the set of  future/past-directed causal covectors. The two-point distribution $\omega_J$ of a Hadamard state is known to be a Fourier integral operator whose canonical relation is the null-geodesic relation.
We summarise the argument here for the sake of completeness. As shown by Duistermaat and Hörmander (\cite{MR0388464}) the operator $G$ is a Fourier integral operator on $M$.  Its distributional kernel satisfies
\begin{align}
 \mathrm{WF}'(G) =\{(x,\xi,x',\xi') \in N \times N \mid (x,\xi) \sim (x',\xi') \},
\end{align}
where $N$ is the set of non-zero lightlike co-vectors and $(x,\xi) \sim (x',\xi')$ means that $(x,\xi)$ and $(x',\xi')$ are in the same orbit of the geodesic flow. Since $\mathrm{WF}'(G)$ has two components distinguished by time-orientation there is a microlocal splitting
$G = S_+ + S_-$, where 
\begin{align}
 \mathrm{WF}'(S_\pm) =\{(x,\xi,x',\xi') \in N_\mp \times N_\mp \mid (x,\xi) \sim (x',\xi') \},
\end{align}
which is unique modulo smooth kernels.
Since any Hadamard state achieves precisely such a splitting it follows that $\rmi S_+$ coincides with a Hadamard two-point function modulo smooth kernels. In particular this shows that $\omega_J$ is the integral kernel of a Fourier integral operator with canonical relation
$\{(x,\xi,x',\xi') \in N_- \times N_- \mid (x,\xi) \sim (x',\xi') \}$.
This restriction to the Cauchy hypersurface 
then defines a pseudodifferential operator. More precisely, under the identification $R_{\Sigma}$ of $\ker P$ with the Cauchy data space $C^\infty_0(\Sigma,\R) \oplus C^\infty_0(\Sigma,\R)$
the inner product on $C^\infty_0(\Sigma,\R) \oplus C^\infty_0(\Sigma,\R)$ is characterised by
\begin{align}
 \langle (f,\dot f), (f,\dot f) \rangle_\kappa = \langle (f,\dot f) , A (f,\dot f) \rangle_{L^2(\Sigma) \oplus L^2(\Sigma)}. 
\end{align}
where 
\begin{align}
 A = \left(  \begin{matrix} A_{00} & A_{01} \\  A_{10} & A_{11}  \end{matrix} \right)
\end{align} is a two-by-two matrix of pseudodifferential operators. Indeed,  the operator $A$ is obtained by restriction of the Hadamard two-point function to the Cauchy hypersurface as the real part of
\begin{align}
  S_J = \left( \begin{matrix} 0 & 1\\ -1 & 0 \end{matrix} \right) R_\Sigma \circ \omega_J \circ R_\Sigma^*  \left( \begin{matrix} 0 & 1\\ -1 & 0 \end{matrix} \right).
\end{align}
The restriction $R_\Sigma = \left( \begin{matrix} r_0 \\ r_1 \end{matrix} \right)$, if composed with operators that are microlocally supported away from the normal bundle of $\Sigma$, behaves like a Fourier integral operator, with $r_0$ of order $\frac{1}{4}$ and $r_1$ of order $\frac{5}{4}$ (see \cite{MR4178908}*{Lemma 8.3} and \cite{MR1362544}*{Section 5.1, p. 113} for a precise statement).
Counting orders we obtain 
\begin{align}
  S_J =  \left(  \begin{matrix} S_{00} & S_{01} \\  S_{10} & S_{11}  \end{matrix} \right),
\end{align}
where $S_{00}$ is a pseudodifferential operator of order $1$, $S_{11}$ is a pseudodifferential operator of order $-1$, and $S_{01}, S_{10}$ are of order $0$.
The principal symbols can be directly computed from the principal symbol of  $\omega_J$, which was computed by Duistermaat and Hörmander (\cite{MR0388464}*{Theorem 6.6.1}). Its restriction to the diagonal equals $\sqrt{2 \pi}$, up to a factor that depends on the choice of half-density on the Lagrangian submanifold $\Lambda$.
For the principal symbol $\sigma_{S_{00}}(\xi)$ of $S_{00}$ at $\xi \in T^*\Sigma$  one obtains $(g^{-1}_\Sigma(\xi,\xi))^{\frac{1}{2}}$, for  
$\sigma_{S_{11}}(\xi)$ one obtains  $(g^{-1}_\Sigma(\xi,\xi))^{-\frac{1}{2}}$, where $g_\Sigma = -g |_{\Sigma}$ is the induced Riemannian metric on $\Sigma$ and $g^{-1}_\Sigma$ its dual.
In particular $S_{00}$ and $S_{11}$ are elliptic.
Similarly, one obtains $\rmi, -\rmi$ for the principal symbols of $S_{10}$ and $S_{01}$ respectively. At the principal symbol level the latter do not contribute to $A_{01}$ and $A_{10}$.
Therefore, $A_{00}$ is an elliptic pseudodifferential operator of order $1$, $A_{11}$ is an elliptic pseudodifferential operator of order $-1$, and $A_{01}, A_{10}$ are of order $-1$. We refer to \cite{MR4178908}*{Sections 6,7,8} where the restrictions and the principal symbols are computed explicitly using the Fourier integral operator representation.

Since the construction of the Feynman parametrix for the wave equation can be carried out in the polyhomogeneous symbol class, the pseudodifferential operator $A$ will also have a polyhomogeneous symbol. The full symbol of $A$ is locally determined as it is obtained by solving a transport equation locally, followed by restriction to the hypersurface. 
This can also be seen directly since the full symbols of these pseudodifferential operators are determined algebraically, which can also be used to construct Hadamard states directly  by means of pseudodifferential calculus (\cite{MR1421547, MR3148100}).

\subsubsection{Frequency splitting as an example}
Before we continue it is instructive to see the example of the construction of the ground state and in the case of ultra-static spacetimes in which this construction corresponds to a frequency splitting procedure. We will thus assume, only in this subsection, that $M = \R_t \times \Sigma_x$ and the metric has the form
$\der t^2 - h$, where $h$ is a time-independent Riemannian metric on $\Sigma$. We will assume also for simplicity that $m>0$ and that $\Sigma$ is compact.
The above linear algebra construction with the complex structure formalises the process of frequency splitting in great generality. 
 Then every element in the complexification $\mathrm{ker}(P) \otimes_\R \C$ of $\mathrm{ker}(P)$ has a generalized Fourier expansion of the form
\begin{align}
 u(t,x) = \sum_{j=0}^\infty a_j \Phi_j(x) e^{\rmi \omega_j t} + b_j \Phi_j(x) e^{-\rmi \omega_j t},
\end{align}
where $\Phi_j$ are an orthonormal basis of eigenfunctions of the Laplace operator on $\Sigma$ with eigenvalues $\mu_j$ and $\omega_j = \sqrt{\mu_j + m^2}$. A standard choice of complex structure is then the map defined by linear extension of the rule
\begin{align}
 J \Phi_j(x) e^{\pm \rmi \omega_j t} = \mp \rmi \Phi_j(x) e^{\pm \rmi \omega_j t}.
\end{align}
The splitting $\ker(P)\otimes_\R \C = \mathcal{W} \oplus \overline{\mathcal{W}}$ is then the splitting into positive and negative frequency subspaces.
A different choice of sign here is possible and leads to a different notion of positivity for frequency and the opposite sign convention in the wavefront set condition for Hadamard states.

One checks that $J$ commutes with conjugation and thus leaves the space of real-valued solutions invariant.
Namely, $$J \left(\Phi_j(x) \cos(\omega_j t)\right) =  \Phi_j \sin(\omega_j t), \quad J \left( \Phi_j(x) \sin(\omega_j t) \right) =  -\Phi_j \cos(\omega_j t).$$
Therefore, the inner product satisfies
\begin{align}
 \langle \Phi_j(x) \cos(\omega_j t) , \Phi_k(x) \cos(\omega_k t) \rangle &= \delta_{jk} \omega_j,\\
  \langle \Phi_j(x) \omega_j^{-1}\sin(\omega_j t) , \Phi_k(x) \omega_k^{-1}\sin(\omega_k t) \rangle &= \delta_{jk} \omega_j^{-1},\\
  \langle \Phi_j(x) \cos(\omega_j t) , \Phi_k(x) \omega_k^{-1}\sin(\omega_k t) \rangle &= 0.
\end{align}
On the level of Cauchy data we can write the map $J$ as 
\begin{align}
 J =  \left( \begin{matrix} 0 & -(-\Delta+m^2)^{-\frac{1}{2}} \\   (-\Delta+m^2)^{\frac{1}{2}}  & 0\end{matrix}\right).
\end{align}
By spectral calculus this map also makes sense if $\Sigma$ is non-compact, as long as it is metrically complete, and $m>0$. 
In this case the operator $A$ is given by
\begin{align}
 A= \left( \begin{matrix} (-\Delta+m^2)^\frac{1}{2} & 0\\ 0 & (-\Delta+m^2)^{-\frac{1}{2}} \end{matrix} \right)
\end{align}
and the projection operator is
\begin{align}
 p = \left( \begin{matrix} \frac{1}{2} & \frac{\rmi}{2} (-\Delta+m^2)^{-\frac{1}{2}} \\ -\frac{\rmi}{2} (-\Delta+m^2)^{\frac{1}{2}} & \frac{1}{2} \end{matrix}\right)
\end{align}
The completion of the Cauchy data space with respect to the real inner product equals $H^{\frac{1}{2}}(\Sigma,\R) \oplus H^{-\frac{1}{2}}(\Sigma,\R)$ if $\Sigma$ is compact or, more generally, geometrically finite. The complex Hilbert space $H_J$ is therefore the range of the projection, i.e. the complex subspace of 
$H^{\frac{1}{2}}(\Sigma,\C) \oplus H^{-\frac{1}{2}}(\Sigma,\C)$ consisting of vectors of the form
$(f, -\rmi (-\Delta+m^2)^{\frac{1}{2}} f), f \in H^{\frac{1}{2}}(\Sigma,\C)$.

This space is isomorphic to $H^{\frac{1}{2}}(\Sigma,\C)$ and can thus be identified with the Hilbert space $L^2(\Sigma,\C)$
via the map 
$$
 f \mapsto (-\Delta+m^2)^{\frac{1}{4}} f.
$$
Therefore the corresponding irreducible one-particle Hilbert space structure is isomorphic to the one obtained from the map
\begin{align}
 \kappa: (f,g) \mapsto (-\Delta+m^2)^{\frac{1}{4}} f + \rmi (-\Delta+m^2)^{-\frac{1}{4}} g
\end{align}
from the Cauchy data space $C^\infty_0(\Sigma,\R) \oplus C^\infty_0(\Sigma,\R)$ to $L^2(\Sigma,\C)$, if the latter space is regarded as a real Hilbert space with complex structure given by multiplication by $\rmi$.

It can be checked that the corresponding state is a quasifree pure Hadamard state.
One can use this construction and employ a deformation argument by Fulling-Narcowich and Wald (\cite{MR0641893}) to show the existence of a quasifree and pure Hadamard state for any globally hyperbolic spacetime.

\subsection{Parameter dependent Hadamard states}

In the following we will assume that $I$ is a compact smooth manifold (with or without boundary) and the Lorentzian metric $g_s$ smoothly depends on a parameter $s \in I$ within $\mathcal{M}$ so that
the metrics $g_s$ all coincide outside a compact subset $K \subset M$. As shown in Section \ref{Appendparam} the associated family of Pauli-Jordan distributions, regarded as a distribution on $I \times M \times M$ satisfies
\begin{align}
 \mathrm{WF}'(G) &\subset \{(s,\tau_s(x',\xi',x,\xi),x',\xi',x,\xi) \in T^*I \times N \times N \mid (x,\xi) \sim_s (x',\xi') \} \nonumber \\ &= \Lambda \subset T^*(I \times M \times M).
\end{align}
where the notations $N, (x,\xi) \sim_s (x',\xi')$ and the function $\tau_s(x,\xi,x',\xi')$ are defined below. The set 
$N$ is the set of non-zero lightlike co-vectors and $(x,\xi) \sim_s (x',\xi')$ means that $(x,\xi^\#)$ and $(x',(\xi')^\#)$ are in the same orbit of the geodesic flow with respect to the metric $g_s$. This means there exists a null-geodesic $\gamma$ connecting the points $x$ and $x'$ with tangent vectors $\dot \gamma$ equal to $\xi^\#$ at $x$ and equal to $(\xi')^\#$ at $x'$. Then one defines the $T_s^*I$-valued function
$$
 \tau_s(x,\xi,x',\xi') = -\frac{1}{2} \int (\partial_s g_s)(\dot \gamma(t), \dot \gamma(t)) \der t.
$$
It is shown in Section \ref{Appendparam} that this is indeed a Lagrangian submanifold of $T^*I \times T^*M \times T^*M$ equipped with a twisted symplectic form, and we refer to this section for details.

Note here that this statement is local and sufficiently small compactly supported changes of the metric satisfy the assumptions of Section \ref{Appendparam}.
The precise statement is
\begin{align}
 G \in \mathrm{I}^{-\frac{3}{2}-\frac{\mathrm{dim}(I)}{4}}(I \times M \times M, \Lambda').
\end{align}

There again is a microlocal splitting
$G = S_+ + S_-$, where 
\begin{align}
 \mathrm{WF}'(S_\pm) \subset \Lambda_\pm,
\end{align}
which is unique modulo smooth kernels so that
\begin{align}
 S_\pm \in \mathrm{I}^{-\frac{3}{2}-\frac{\mathrm{dim}(I)}{4}}(I \times M \times M, \Lambda_\pm').
\end{align}
A family of Hadamard states that equals $\rmi S_+$ modulo smooth kernels will be called a parameter-dependent Hadamard state $\omega_s$.
If $\Sigma$ is a Cauchy hypersurface that does not intersect the compact set $K \subset M$ then 
the restriction of $\omega_s$ to $\Sigma$ then defines a smooth family of pseudodifferential operators in the sense that the full symbol in each coordinate chart depends smoothly on $s$ with symbolic estimates uniform in $s$. This follows immediately from the explicit description of the Fourier integral operators $S_\pm$ in Appendix \ref{Appendparam}.
The main example we have in mind here is a family of Hadamard states that has been constructed near a Cauchy surface $\Sigma_0$ that does not intersect $K$ and that is extended to 
$M$ by solving the wave equation on $M$ with respect to the metric $g_s$. As shown in Appendix \ref{Appendparam} this gives a smooth family of Hadamard states in the above sense.
The previous analysis of Hadamard states remains valid with $A_s,p_s,J_s$ being families of pseudodifferential operators that depend smoothly on $s$. 
For the parameter-dependent Hadamard states constructed in this way, the
singular part is independent of $s$ near $\Sigma$. Hence
$A_{s'}-A_s$, $p_{s'}-p_s$, and $J_{s'}-J_s$ are smooth kernels depending
smoothly on the parameters.

\subsection{Continuity properties of quasifree pure Hadamard states}

We record some observations about Hadamard states and their regularity properties. In the remainder of this subsection we assume that we are given a one-particle structure $\kappa$ associated to a quasifree pure Hadamard state. For notational simplicity we identify $H_\kappa$ with the completion of $C^\infty_0(\Sigma,\R) \oplus C^\infty_0(\Sigma,\R)$ with respect to the 
 norm $\| \cdot \|_\kappa$ and can therefore suppress $\kappa$ in the notations. As before let $J: H_\kappa \to H_\kappa$ be the associated complex structure and we denote the complex linear map extending $J$ to $H_\kappa \otimes_\R \C \to H_\kappa \otimes_\R \C$ by the same letter.
We use the projection $\mathrm{pr}_{+\rmi}: H_\kappa \otimes_\R \C \to H_J$ onto the $+\rmi$ eigenspace
of $J$, i.e. $J \otimes 1 = \rmi \left( 2\, \mathrm{pr}_{+\rmi} - 1 \right)$ and the map
\begin{align}
 p: C^\infty_0(\Sigma,\R) \oplus C^\infty_0(\Sigma,\R) \to H_J, \quad p = \mathrm{pr}_{+\rmi} |_{C^\infty_0(\Sigma,\R) \oplus C^\infty_0(\Sigma,\R)}.
\end{align}

The following Lemma is essentially the statement of \cite{MR1464689}*{Prop. 3.5} (see also \cite{MR1266061}). We give a self-contained presentation and proof since we will need the explicit description later.
\begin{lemma}
Assume $\kappa$ is a one particle structure associated to a quasifree pure Hadamard state.
 For a fixed (smooth and spacelike) Cauchy hypersurface $\Sigma \subset M$
 let $A$ and $H_\kappa$ be as above. Let $H_{\kappa}$ be the completion of $C^\infty_0(\Sigma,\R) \oplus C^\infty_0(\Sigma,\R)$ with respect to the 
 norm $\| \cdot \|_{\kappa}$. Then the inclusion map $C^\infty_0(\Sigma) \oplus C^\infty_0(\Sigma)$ to $H_{\kappa}$ extends continuously to give injective maps 
 \begin{align}
  H^{\frac{1}{2}}_\compp(\Sigma,\R) \oplus H^{-\frac{1}{2}}_\compp(\Sigma,\R) \hookrightarrow H_{\kappa}  \hookrightarrow H^{\frac{1}{2}}_\loc(\Sigma,\R) \oplus H^{-\frac{1}{2}}_\loc(\Sigma,\R).
\end{align}
 In particular we can understand the space $H_J$ as a subspace of the space 
$H^{\frac{1}{2}}_\loc(\Sigma) \oplus H^{-\frac{1}{2}}_\loc(\Sigma)$.
\end{lemma}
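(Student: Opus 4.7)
The plan is to establish the equivalence of norms $\|\cdot\|_\kappa \asymp \|\cdot\|_{H^{1/2}\oplus H^{-1/2}}$ on $C^\infty_0(K,\R)\oplus C^\infty_0(K,\R)$ for every compact $K\subset\Sigma$, from which both claimed inclusions follow by density and localisation. I work from the representation $\|(f,\dot f)\|_\kappa^2=\langle (f,\dot f), A(f,\dot f)\rangle_{L^2(\Sigma)\oplus L^2(\Sigma)}$ with $A=(A_{ij})$ the matrix of pseudodifferential operators assembled just above the lemma: $A_{00}$ elliptic of order $+1$ (positive principal symbol $(g_\Sigma^{-1}(\xi,\xi))^{1/2}$), $A_{11}$ elliptic of order $-1$ (positive principal symbol $(g_\Sigma^{-1}(\xi,\xi))^{-1/2}$), and $A_{01},A_{10}$ of order $-1$; strict positivity of $\langle\cdot,A\cdot\rangle$ on $C^\infty_0\oplus C^\infty_0$ is part of the data, since $\kappa$ is an injective one-particle structure.

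For the upper bound I would expand $\langle u, Au\rangle_{L^2}$ into four terms and estimate each by duality combined with the Sobolev mapping properties of the $A_{ij}$: for $u=(f,\dot f)$ supported in $K$, $|\langle f,A_{00}f\rangle|\leq\|f\|_{H^{1/2}}\|A_{00}f\|_{H^{-1/2}}\leq C_K\|f\|_{H^{1/2}}^2$ and $|\langle\dot f,A_{11}\dot f\rangle|\leq C_K\|\dot f\|_{H^{-1/2}}^2$, while the cross terms $|\langle f,A_{01}\dot f\rangle|\leq\|f\|_{H^{1/2}}\|A_{01}\dot f\|_{H^{-1/2}}\leq C_K\|f\|_{H^{1/2}}\|\dot f\|_{H^{-3/2}}$ are even better than required thanks to the order $-1$ of the off-diagonal entries. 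Density of $C^\infty_0(K)$ in $H^{s}_\compp(K)$ then yields the first inclusion $H^{1/2}_\compp\oplus H^{-1/2}_\compp\hookrightarrow H_\kappa$.

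The lower bound comes from applying Garding's inequality to the positive-principal-symbol elliptic operators $A_{00}$ and $A_{11}$, together with a Peter--Paul split to absorb the cross terms into the dominant quadratic parts; this gives
\begin{align*}
 \|u\|_\kappa^2 \;\geq\; c_K\|u\|_{H^{1/2}\oplus H^{-1/2}}^2 \,-\, C_K\|u\|_{L^2\oplus H^{-1}}^2 \qquad (u \in C^\infty_0(K)\oplus C^\infty_0(K)).
\end{align*}
The lower-order remainder is eliminated by a standard compactness/contradiction argument: if the clean bound $\|u\|_\kappa\geq c_K'\|u\|_{H^{1/2}\oplus H^{-1/2}}$ failed, there would exist $u_n\in C^\infty_0(K)\oplus C^\infty_0(K)$ with $\|u_n\|_{H^{1/2}\oplus H^{-1/2}}=1$ and $\|u_n\|_\kappa\to 0$. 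A subsequence converges weakly in $H^{1/2}\oplus H^{-1/2}$ and, by Rellich, strongly in $L^2\oplus H^{-1}$ to a limit $u_\infty\in H^{1/2}_\compp(K)\oplus H^{-1/2}_\compp(K)$. The inequality $|\langle u_n,A\phi\rangle_{L^2}|\leq\|u_n\|_\kappa\|\phi\|_\kappa\to 0$ for every $\phi\in C^\infty_0(\Sigma)\oplus C^\infty_0(\Sigma)$, combined with the strong $L^2\oplus H^{-1}$ convergence, forces $Au_\infty=0$ as a distribution. Bootstrapping graded elliptic regularity of the matrix $A$ via $A_{00}$ and $A_{11}$ alternately shows that $u_\infty$ is smooth, hence $u_\infty\in C^\infty_0(K)\oplus C^\infty_0(K)$; then $\|u_\infty\|_\kappa^2=\langle u_\infty,Au_\infty\rangle=0$ together with strict positivity of $\|\cdot\|_\kappa$ on smooth data forces $u_\infty=0$, contradicting the Garding-type lower bound $\|u_n\|_{L^2\oplus H^{-1}}^2\geq (c_K-\|u_n\|_\kappa^2)/C_K\to c_K/C_K>0$ inherited from the previous step.

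These two bounds yield the claimed norm equivalence on $C^\infty_0(K)\oplus C^\infty_0(K)$, giving the first inclusion by density. For the second inclusion $H_\kappa\hookrightarrow H^{1/2}_\loc\oplus H^{-1/2}_\loc$, I would use that multiplication by an arbitrary $\chi\in C^\infty_0(\Sigma)$ is bounded on $H_\kappa$ (a consequence of the pseudodifferential calculus, since $[\chi,A_{ij}]$ is of strictly lower order than $A_{ij}$ in each entry, so that $\chi A\chi-\chi^2 A$ is controlled by manageable commutator terms); applied to a representative Cauchy sequence $u_n\to u$ in $H_\kappa$, the localisation $\chi u_n$ is Cauchy in the $\|\cdot\|_\kappa$-norm with support in $\supp\chi$, hence Cauchy in $H^{1/2}\oplus H^{-1/2}$ by the lower bound, realising $u$ as an element of $H^{1/2}_\loc\oplus H^{-1/2}_\loc$. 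The main obstacle is the compactness/contradiction step in the third paragraph, which must weave together weak lower semicontinuity of the nonnegative form, graded elliptic regularity of a matrix pseudodifferential operator whose entries carry different orders, and the strict positivity of $\|\cdot\|_\kappa$ on $C^\infty_0\oplus C^\infty_0$ coming from the injectivity of $\kappa$.
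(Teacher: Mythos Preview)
Your upper bound for the first inclusion is correct and matches the paper's approach. Your lower bound via G{\aa}rding plus a compactness--contradiction argument is also correct and in fact proves more than the paper states explicitly; the paper establishes only that $\|\cdot\|_\kappa$ is a genuine norm on $H^{1/2}_\compp\oplus H^{-1/2}_\compp$ (via the hypoellipticity trick of composing $A$ on the left with $\begin{pmatrix}1&0\\0&q\end{pmatrix}$ for an elliptic $q$ of order $2$), not a full two-sided equivalence.

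The gap is in your argument for the second inclusion. You claim that multiplication by $\chi\in C^\infty_0(\Sigma)$ is bounded on $H_\kappa$ because $\chi A\chi-\chi^2A=\chi[A,\chi]$ has entries of strictly lower order. But ``lower order'' here means, for instance, that the $(0,0)$ entry $\chi[A_{00},\chi]$ is a compactly supported order-$0$ operator, so $|\langle f,\chi[A_{00},\chi]f\rangle|\leq C\|\psi f\|_{L^2}^2$ for some larger cutoff $\psi$. To absorb this into $\|v\|_\kappa^2$ you would need $\|\psi f\|_{L^2}\leq C\|v\|_\kappa$, which is exactly a local Sobolev bound for elements of $H_\kappa$---the content of the second inclusion you are trying to prove. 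Every attempt to close this (via your lower bound on $\supp\psi$, via comparison with a diagonal model operator, via iteration with nested cutoffs) runs into the same circularity: the commutator remainder is controlled by a localised Sobolev norm, and bounding that by $\|\cdot\|_\kappa$ for functions of \emph{arbitrary} support is the statement in question.

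The paper sidesteps this entirely by using the symplectic form rather than $A$. For $v\in H_\kappa$ the functional $f\mapsto\sigma(v,f)=-\langle Jv,f\rangle_\kappa$ is continuous on $H_\kappa$ (since $J$ is an isometry), hence continuous on $H^{1/2}_\compp\oplus H^{-1/2}_\compp$ by the first inclusion. But $\sigma$ is given concretely by the $L^2$-pairing against $\begin{pmatrix}0&1\\-1&0\end{pmatrix}v$, so this places $\begin{pmatrix}0&1\\-1&0\end{pmatrix}v$ in the dual $H^{-1/2}_\loc\oplus H^{1/2}_\loc$, i.e.\ $v\in H^{1/2}_\loc\oplus H^{-1/2}_\loc$. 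Injectivity is immediate since $\sigma(v,\cdot)=0$ on a dense set forces $Jv=0$. This duality argument uses only the first inclusion and the fixed, $A$-independent expression for $\sigma$; no localisation or multiplier estimate is needed.
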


\begin{proof}
 Because of general mapping properties of pseudodifferential operators $A$ defines a continuous map
 $$
  H^{\ell}_\compp(\Sigma,\R) \oplus H^{\ell-1}_\compp(\Sigma,\R) \to H^{\ell-1}_\loc(\Sigma,\R) \oplus H^{\ell}_\loc(\Sigma,\R)
 $$
 for all $\ell \in \R$. Then, for $\ell \geq 0$ this allows to define the norm $\| \cdot \|_{\kappa}$ on $H^{\ell+\frac{1}{2}}_\compp(\Sigma,\R) \oplus H^{\ell-\frac{1}{2}}_\compp(\Sigma,\R)$ and the norm is continuous in the respective topology. 
 This is indeed a norm because of the hypoellipticity of $A$: In case $\langle f, A f \rangle =0, f \in H^{\ell+\frac{1}{2}}_\compp(\Sigma,\R) \oplus H^{\ell-\frac{1}{2}}_\compp(\Sigma,\R) $ it follows from the Cauchy-Schwarz inequality that $A f =0$ in the weak sense. 
 If we pick any properly supported elliptic pseudodifferential operator $q$ of order $2$ then the operator
 $$
  \left(  \begin{matrix} 1 & 0 \\  0 & q \end{matrix} \right) A
 $$
 is an elliptic pseudodifferential operator of order $1$ as it has lower triangular principal symbol.
 By elliptic regularity $f \in C^\infty_0(\Sigma,\R) \oplus C^\infty_0(\Sigma,\R)$ and therefore $f=0$.
 It follows that the closure of
  $C^\infty_0(\Sigma,\R) \oplus C^\infty_0(\Sigma,\R)$ is the same as the closure of $H^{\frac{1}{2}}_\compp(\Sigma,\R) \oplus H^{-\frac{1}{2}}_\compp(\Sigma,\R)$. 
 Any element in $H_{\kappa}$ defines a linear functional by $f \mapsto \langle J f, \cdot \rangle =-\sigma(f,\cdot)$
 which is then continuous on $H^{\frac{1}{2}}_\compp(\Sigma,\R) \oplus H^{-\frac{1}{2}}_\compp(\Sigma,\R)$.
 Since the dual of $H^{\ell}_\compp(\Sigma,\R)$ with respect to the $L^2$-pairing is $H^{-\ell}_\loc(\Sigma,\R)$  this shows, given the explicit form of $\sigma$, that we have an injection
 \begin{align*}
  H_{\kappa}  \hookrightarrow H^{\frac{1}{2}}_\loc(\Sigma,\R) \oplus H^{-\frac{1}{2}}_\loc(\Sigma,\R),
\end{align*}
where the element $v \in H_{\kappa}$ is mapped to the distribution
$
 \langle J v, \left(  \begin{matrix} 0 & -1 \\  1 & 0 \end{matrix} \right) \cdot \rangle_\kappa.
$
\end{proof}

\begin{rem}
The conclusion of the lemma holds for any inner product induced by a matrix of elliptic pseudodifferential operators of the form
$$
 A = \left(  \begin{matrix} A_{00} & A_{01} \\  A_{10} & A_{11}  \end{matrix} \right)
$$ 
if $A_{00}$ is an elliptic pseudodifferential operator of order $1$, $A_{11}$ is an elliptic pseudodifferential operator of order $-1$, and $A_{01}, A_{10}$ are of order $-1$. Indeed, if $H_A$ denotes the closure of $C^\infty_0(\Sigma) \oplus C^\infty_0(\Sigma)$ in the norm $\|f\|_A = \langle f, A f \rangle$ then we have an injection
$H^{\frac{1}{2}}_\compp(\Sigma,\R) \oplus H^{-\frac{1}{2}}_\compp(\Sigma,\R) \to H_A$ by the same argument as in the proof above.
For any $f \in H_A$ the functional
$\langle A f, \cdot \rangle$ is continuous as a functional on $H^{\frac{1}{2}}_\compp(\Sigma,\R) \oplus H^{-\frac{1}{2}}_\compp(\Sigma,\R)$ and therefore has a unique representative in $H^{-\frac{1}{2}}_\loc(\Sigma,\R) \oplus H^{\frac{1}{2}}_\loc(\Sigma,\R)$.
Since then $A f \in H^{-\frac{1}{2}}_\loc(\Sigma,\R) \oplus H^{\frac{1}{2}}_\loc(\Sigma,\R)$ 
and consequently
$$
  \left(  \begin{matrix} 1 & 0 \\  0 & q \end{matrix} \right) A f \in H^{-\frac{1}{2}}_\loc(\Sigma,\R) \oplus H^{-\frac{1}{2}}_\loc(\Sigma,\R)
$$
the continuous inclusion of $H_A$
in $H^{\frac{1}{2}}_\loc(\Sigma,\R) \oplus H^{-\frac{1}{2}}_\loc(\Sigma,\R)$ is a direct consequence of the local elliptic regularity estimates.
The statement above therefore holds for any Hadamard state for the Klein-Gordon field on the space of real-valued functions.
\end{rem}

As a consequence elements in $H_{\kappa}$ are distributions, and therefore, via, $R_\Sigma$, correspond to distributional solutions $u$ of $P u =0$.
Another consequence is that any continuous map
$C^\infty_0(\Sigma,\R) \oplus C^\infty_0(\Sigma,\R)  \to H_{\kappa}$
has a unique distributional kernel.

For later purposes we will also define a scale of dense subspaces in $H_{\kappa}$ and $H_J$ as follows.
\begin{definition}
 For any $\ell \geq 0$ we define
 $$
  H_{\kappa}^\ell = \{ f + J g \mid f,g \in H^{\frac{1}{2}+\ell}_\compp(\Sigma,\R) \oplus H^{-\frac{1}{2}+\ell}_\compp(\Sigma,\R) \}.
 $$
 and its image $H_J^\ell$ in $H_J$ given by
 $$
  H_J^\ell = \mathrm{pr_{+\rmi}} \left( H^{\frac{1}{2}+\ell}_\compp(\Sigma) \oplus H^{-\frac{1}{2}+\ell}_\compp(\Sigma) \right) \subset H^{\frac{1}{2}+\ell}_\loc(\Sigma) \oplus H^{-\frac{1}{2}+\ell}_\loc(\Sigma).
 $$
 We also write 
 $$
  H_{\kappa}^\infty = \{ f + J g \mid f,g \in C^\infty_0(\Sigma,\R) \oplus C^\infty_0(\Sigma,\R) \}.
 $$
 and its image $H_J^\infty$ in $H_J$ given by
 $$
  H_J^\infty = \mathrm{pr_{+\rmi}} \left( C^\infty_0(\Sigma) \oplus C^\infty_0(\Sigma) \right) \subset   C^\infty(\Sigma) \oplus C^\infty(\Sigma).
 $$
\end{definition}

The notation does not imply that $H_J^0 = H_J$ or $H_\kappa^0 = H_\kappa$, but each of the above subspaces is dense.
Of course the map $p$ maps into $H_J^\infty$.

In the following assume that $W : C^\infty_0(\Sigma) \oplus C^\infty_0(\Sigma) \to C^\infty_0(\Sigma) \oplus C^\infty_0(\Sigma)$
is a Fourier integral operator with properly supported kernel. We assume furthermore that $W$ is a finite sum of Fourier integral operators whose canonical relation is the graph of an invertible canonical map.
Taking into account the different Sobolev weightings on the initial data space we call
$$
 W = \left( \begin{matrix} W_{11} & W_{12}\\ W_{21} & W_{22} \end{matrix}\right)
$$
of order $m$ if $W_{11},W_{22}$ are of order $m$, $W_{12}$ is of order $m-1$, and $W_{21}$ is of order $m+1$.
If $W$ has order $m \geq 0$ then, by the mapping properties of Fourier integral operators (\cite{MR0388464}*{Theorem 4.4.4}), it
extends continuously as maps
\begin{align}
W &: H^{\ell+\frac{1}{2}+m}_\compp(\Sigma) \oplus H^{\ell-\frac{1}{2}+m}_\compp(\Sigma) \to H^{\ell+\frac{1}{2}}_\compp(\Sigma) \oplus H^{\ell-\frac{1}{2}}_\compp(\Sigma)\\
W &: H^{\ell+\frac{1}{2}+m}_\loc(\Sigma) \oplus H^{\ell-\frac{1}{2}+m}_\loc(\Sigma) \to H^{\ell+\frac{1}{2}}_\loc(\Sigma) \oplus H^{\ell-\frac{1}{2}}_\loc(\Sigma) 
\end{align}
which will be denoted by the same letter, mildly abusing notations.

\begin{lemma} \label{ucontlem}
Assume $W, W^{-1}$ are, as above, properly supported zero order Fourier integral operators that are inverses to one another. Suppose that $W, W^{-1}$ are finite sums of Fourier integral operators whose canonical relations are the graph of an invertible canonical map. Assume further that $W - \mathbf{1}$ has compactly supported integral kernel. Then 
\begin{enumerate}
 \item also $W^{-1} - \mathbf{1}$ has compactly supported integral kernel.
 \item $W$ and $W^{-1}$ continuously map $H_{\kappa}$ to itself.
 \item $W$ and $W^{-1}$ continuously map $H^\ell_{\kappa}$ to itself for any $\ell \in [0,\infty]$.
 \end{enumerate}
\end{lemma}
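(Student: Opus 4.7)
The plan is to set $R = W - \mathbf{1}$, so that $W = \mathbf{1} + R$. By hypothesis $R$ has compactly supported Schwartz kernel, and since $W$ is an order-zero properly supported FIO, the operator $R$ shares its Sobolev continuity properties. Combining this with the compact support of the kernel yields, for every $s \in \R$, continuous maps
\[
  R : H^{s + \frac{1}{2}}_\loc(\Sigma) \oplus H^{s - \frac{1}{2}}_\loc(\Sigma) \;\longrightarrow\; H^{s + \frac{1}{2}}_\compp(\Sigma) \oplus H^{s - \frac{1}{2}}_\compp(\Sigma).
\]

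Item (1) follows from the identity $W^{-1} - \mathbf{1} = -W^{-1} R$ together with a support computation: in the composition kernel
\[
  (W^{-1} R)(x,y) = \int W^{-1}(x,z)\, R(z,y)\, \der z,
\]
the $y$-projection of the support is contained in the $y$-projection of $\supp R$ (compact), and the $x$-projection is contained in the image under the proper map $W^{-1}$ of the $z$-projection of $\supp R$, again compact.

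For (2), I concatenate the continuity of $R$ with $s = 0$ displayed above with the two continuous embeddings supplied by the preceding lemma,
\[
  H^{\frac{1}{2}}_\compp(\Sigma) \oplus H^{-\frac{1}{2}}_\compp(\Sigma) \hookrightarrow H_\kappa \hookrightarrow H^{\frac{1}{2}}_\loc(\Sigma) \oplus H^{-\frac{1}{2}}_\loc(\Sigma),
\]
which factorises $R : H_\kappa \to H_\kappa$ as a composition of continuous maps. Consequently $W = \mathbf{1} + R$ is continuous on $H_\kappa$, and item (1) lets the same argument apply to $W^{-1}$.

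For (3), given $v = f + J g$ with $f, g \in H^{\frac{1}{2} + \ell}_\compp(\Sigma) \oplus H^{-\frac{1}{2} + \ell}_\compp(\Sigma)$, I rewrite
\[
  W v = \bigl( f + R f + R(J g) \bigr) + J g.
\]
The matrix $J$, being a pseudodifferential operator of the orders recorded earlier, sends $g$ into $H^{\frac{1}{2}+\ell}_\loc \oplus H^{-\frac{1}{2}+\ell}_\loc$, and the displayed continuity of $R$ with $s = \ell$ then places both $R f$ and $R(J g)$ in $H^{\frac{1}{2}+\ell}_\compp \oplus H^{-\frac{1}{2}+\ell}_\compp$; hence $W v \in H^\ell_\kappa$. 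The same decomposition handles $W^{-1}$, and each step is continuous in the natural Sobolev topologies. The step requiring the most care is $\ell = \infty$, where one needs $R(J g)$ to be smooth with compact support when $g \in C^\infty_0 \oplus C^\infty_0$: compact support is automatic from the kernel of $R$, while smoothness follows because $J g$ is smooth and a properly supported FIO preserves smoothness via the wavefront inclusion $\WF(R u) \subset C \circ \WF(u) = \emptyset$ whenever $\WF(u) = \emptyset$.
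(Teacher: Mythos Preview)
Your proof is correct and follows essentially the same strategy as the paper's: both hinge on the identity $W^{-1}-\mathbf{1}=-W^{-1}(W-\mathbf{1})$ together with proper support for item (1), and on the key mapping property that a compactly supported zero-order FIO sends $H^{\frac{1}{2}+\ell}_\loc \oplus H^{-\frac{1}{2}+\ell}_\loc$ continuously into $H^{\frac{1}{2}+\ell}_\compp \oplus H^{-\frac{1}{2}+\ell}_\compp$ for items (2) and (3). Your version is simply more explicit, spelling out the decomposition $Wv = (f+Rf+R(Jg))+Jg$ and the role of $J$ that the paper leaves implicit; the phrase ``the proper map $W^{-1}$'' is slightly loose (it is the projections from the kernel support that are proper), but the intended argument is right.
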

\begin{proof}
 The first claim is immediate from $W^{-1} - \mathbf{1} =- W^{-1}(W-\mathbf{1}) = -(W-\mathbf{1})  W^{-1}$ and the fact that $W^{-1}$ is properly supported.
 The second and third  claim follow from the fact that any zero order Fourier integral operator whose kernel has compact support will give rise to a continuous map
 $$
  H^{\frac{1}{2}+\ell}_\loc(\Sigma) \oplus H^{-\frac{1}{2}+\ell}_\loc(\Sigma) \to H^{\frac{1}{2}+\ell}_\compp(\Sigma) \oplus H^{-\frac{1}{2}+\ell}_\compp(\Sigma).
 $$
\end{proof}

In much the same way one proves the following.
\begin{lemma} \label{diffulem}
 Assume that $W$ is a compactly supported Fourier integral operator of order $m \geq 0$. Suppose that $W$ is a finite sum of Fourier integral operators whose canonical relation is the graph of an invertible canonical map. 
 Then
 $W$ maps  $H_\kappa^{\ell+m} \to H^{\frac{1}{2}+\ell}_\compp(\Sigma) \oplus H^{-\frac{1}{2}+\ell}_\compp(\Sigma)$ continuously.
\end{lemma}

\subsection{Implementation of metric changes}

Assume $g_s$ is a smooth family of metrics in $\mathcal{M}$ depending on a parameter $s$ in a compact manifold $I$.
That is we assume there exists a compact subset $K \subset M$  and a smooth family of symmetric two-tensors $h_s$ with support in $I \times K$
such that $g_s = g_0 + h_s$, where $g_0$ is a fixed background metric.
We then have two families of symplectic maps $C_\pm(s) : \ker(P) \to \ker(P_s)$ defined by
\begin{align}
 C_\pm(s) = (R_{g_s,\Sigma_\pm})^{-1} \circ R_{g_0,\Sigma_\pm},
\end{align}
where $\Sigma_\pm$ is any Cauchy surface to the future/past of $K$. 
The maps $C_\pm(s)$ do not depend on the chosen Cauchy hypersurfaces. It should also be noted here that the existence of these maps is implied by the fact that the spacetimes $(M,g_s)$ and $(M,g_0)$ are identical to the future and to the past of the metric perturbation and the restriction to a Cauchy surface is merely a vehicle to facilitate the identification of solutions. The family of scattering maps is defined by $\Lambda(s) = C_-(s)^{-1} \circ C_+(s) : \ker P \to \ker P$.
We use the Cauchy data map $R_{\Sigma_-}$ to identify $\ker P$ with the space of Cauchy data $C^\infty_0(\Sigma_-,\R) \oplus C^\infty_0(\Sigma_-,\R)$.
We can also look at the Cauchy data evolution maps $V(g_s) : C^\infty_0(\Sigma_-,\R) \oplus C^\infty_0(\Sigma_-,\R) \to C^\infty_0(\Sigma_+,\R) \oplus C^\infty_0(\Sigma_+,\R)$ given by $R_{g_s,\Sigma_+} R_{g_0,\Sigma_-}^{-1}$.
One easily computes
\begin{align}
  W_s := R_{\Sigma_-} \Lambda(s) R_{\Sigma_-}^{-1} = V(g_s)^{-1} V(g_0)
\end{align}
This is a family of linear symplectic maps with each $W_s$ being a properly supported zero order Fourier integral operator. Upon adding the parameter we have from Theorem \ref{fiorepra}  that
\begin{align}
 W_s \in I^{-\frac{\mathrm{dim}(I)}{4}}(I \times \Sigma \times \Sigma,  \Lambda_\Sigma),
\end{align}
with $W_s^{-1} - \mathbf{1}$ having compactly supported kernel (see Appendix \ref{Appendparam}).
Here
$\Lambda_\Sigma$
is obtained by pulling back $\Lambda \subset T^*(I \times M \times M)$ to $T^*(I \times \Sigma \times \Sigma)$.
It follows from Lemma \ref{diffulem} that $W_s-\mathbf{1}$ is a $C^k$-family of maps from $H_\kappa^{\ell + k} \to H^{\frac{1}{2}+\ell}_\compp(\Sigma,\R) \oplus H^{-\frac{1}{2}+\ell}_\compp(\Sigma,\R)$ for any $\ell \in \R$.

Differentiating in $s$ therefore changes the order. For a fixed quasifree pure Hadamard state with symplectic structure $\sigma$ we  
also define the family of symplectic structures $J_s = W_s^{-1} J W_s$. 
This also gives rise to a family of projections $p_s$ and a parameter dependent quasifree pure Hadamard state.
For each $s \in I$ the map $W_s$ maps $\mathcal{H}_\kappa$ continuously to itself, however $W_s$ is not unitary.
Instead $W^*_s W_s$ is of the form $\mathrm{1} + r_s$, where $r_s$ is a smooth family of smooth kernels.

\subsection{Continuity properties of parameter dependent quasifree pure Hadamard states}

The operators $J$, and $p$ respectively, achieve a microlocal splitting of solutions of the equation $Pu=0$ into distributional solutions with wavefront sets in the forward and backward light cone.

\begin{proposition} \label{wvrestr}
 The $H_J$-valued distributions on $M$ defined by
 $p \circ R_\Sigma \circ G$ and the $\overline{H_J}$-valued distribution $(1-p) \circ R_\Sigma \circ G$ satisfy the relations
 \begin{align}
  \mathrm{WF}\left( p \circ R_\Sigma \circ G \right) &\subset  N_{+},\\
  \mathrm{WF}\left( (\mathbf{1}-p) \circ R_\Sigma \circ G \right) &\subset N_{-}.
 \end{align}
  \end{proposition}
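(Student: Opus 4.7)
The plan is to compute the wavefront set of the $H_J$-valued distribution $T(f) := p(R_\Sigma(G f))$ by reducing it to the wavefront set of the scalar two-point distribution $\omega_J$ on $M \times M$. By the Hadamard condition we have $\omega_J = \rmi S_+$ modulo $C^\infty$, and combined with $\mathrm{WF}'(S_+) \subset N_- \times N_-$ this yields $\mathrm{WF}(\omega_J) \subset N_- \times N_+$.

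The starting point is the identity
\begin{align*}
\| T(f) \|_{H_J}^2 = \langle p(G f), p(G f)\rangle_{H_J} = \omega_J(\bar f \otimes f), \qquad f \in C^\infty_0(M,\C),
\end{align*}
which follows at once from the formula $\omega_J(f \otimes h) = \langle p(G \bar f), p(G h)\rangle$ established in the preceding subsection. To prove $(x_0, \xi_0) \notin \mathrm{WF}(T)$ for $\xi_0 \notin N_+$, I would fix a cutoff $\chi \in C^\infty_0(M,\R)$ with $\chi(x_0) \neq 0$ and an open conic neighborhood $V$ of $\xi_0$ disjoint from $N_+$. For $\zeta \in V$ a direct computation in Hörmander's Fourier convention yields
\begin{align*}
\| \widehat{\chi T}(\zeta) \|_{H_J}^2 = \| T(\chi\, e^{-\rmi\zeta \cdot})\|_{H_J}^2 = \omega_J\bigl(\chi\, e^{\rmi\zeta \cdot} \otimes \chi\, e^{-\rmi\zeta \cdot}\bigr) = \widehat{(\chi \otimes \chi)\,\omega_J}(-\zeta, \zeta).
\end{align*}
Since $(-\zeta, \zeta) \in N_- \times N_+$ happens exactly when $\zeta \in N_+$, the covector $(-\zeta, \zeta)$ avoids $\mathrm{WF}(\omega_J)$ for all $\zeta \in V$, and a compactness argument on the unit sphere shows the scalar Fourier transform on the right decays rapidly in $|\zeta|$ uniformly on a (slightly shrunk) conic neighborhood of $\xi_0$. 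Hence $(x_0, \xi_0) \notin \mathrm{WF}(T)$, which proves the first inclusion.

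For the second assertion, $\mathbf{1}-p$ is the projection onto $\overline{H_J}$, the $-\rmi$-eigenspace of $J$. For real $f \in C^\infty_0(M,\R)$ the Cauchy data $R_\Sigma(G f)$ is real-valued, so $(\mathbf{1}-p)(R_\Sigma(G f)) = \overline{p(R_\Sigma(G f))}$; complex conjugation reverses the sign of covectors in the wavefront set, whence $\mathrm{WF}((\mathbf{1}-p) \circ R_\Sigma \circ G) \subset -N_+ = N_-$ on real test functions, and the statement extends to complex $f$ by the decomposition $f = f_1 + \rmi f_2$. The main obstacle lies in justifying the passage from the scalar decay estimate on $\widehat{(\chi \otimes \chi)\,\omega_J}(-\zeta, \zeta)$ to a genuine $H_J$-norm estimate on $\widehat{\chi T}(\zeta)$; here we are fortunate that the squared $H_J$-norm is itself a scalar distribution, namely $\omega_J$ paired with $\bar f \otimes f$, so no Banach--Steinhaus-type uniformity in the Hilbert space $H_J$ is needed and Hörmander's standard characterisation applies directly.
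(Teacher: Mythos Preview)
Your argument is correct. The paper itself does not give a proof but simply cites \cite{MR1936535}*{Prop.~6.1}; what you have written is essentially the standard argument behind that result, namely reducing the Hilbert-space-valued wavefront set computation to the scalar wavefront set of $\omega_J$ via the identity $\|T(f)\|_{H_J}^2 = \omega_J(\bar f \otimes f)$ and then restricting to the anti-diagonal in frequency space.

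Two minor comments on presentation. First, your use of $N_\pm$ for $\mathrm{WF}(\omega_J)$ (rather than the closed cones $V_\pm$ appearing in the paper's statement of the Hadamard condition) is justified because $\omega_J$ is a bisolution of $P$, forcing the wavefront set into the characteristic set; equivalently, one uses directly that $\omega_J = \rmi S_+$ modulo $C^\infty$ together with the explicit description of $\mathrm{WF}'(S_+)$. Second, the conjugation argument for the second inclusion is correct but can be phrased more cleanly: for all complex $f$ one has $(\mathbf{1}-p)R_\Sigma G f = \overline{p\,R_\Sigma G\,\overline f}$, so the $\overline{H_J}$-valued distribution is exactly the complex conjugate distribution of $T$ in the usual sense, and $\mathrm{WF}(\overline T) = -\mathrm{WF}(T)$ follows immediately. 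Your decomposition into real and imaginary parts is an equivalent route to the same identity.
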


 \begin{proof}
 This is the statement of \cite{MR1936535}*{Prop. 6.1} and its complex conjugate.
 \end{proof}

%As above we assume that a Cauchy surface $\Sigma$ not intersecting $K$ is given to the past of $K$, where as above $K \subset M$ is the compact set where $g_s -g$ is supported for all $s \in I$. 
In the following we will construct pseudodifferential operators $Q_\pm$ on $\Sigma$ that facilitate the microlocal splitting on the level of Cauchy data.
The construction of $Q_\pm$ is local near $\Sigma$ and will not depend on $s$.
We therefore suppress $s$ momentarily and replace $M$ by a globally hyperbolic open neighborhood $M'$ of the Cauchy surface $\Sigma$ so that $\Sigma$ is now a Cauchy surface in $M' \subset M$.
Let $\eta \in C^\infty(M')$ be a fixed smooth function such that $\mathrm{supp}(\eta)$ is past compact and $\mathrm{supp}(1-\eta)$ is future compact. This $\eta=1$ in the far future and $\eta=0$ in the far past. Such a function can easily be constructed utilising a global time function. We assume that the derivative $\eta'$ is supported near the Cauchy surface $\Sigma$ and its support is future and past compact. The operators $P, R_\Sigma^{-1},G$ on $M'$ are independent of $s$ and they arise by restriction of the corresponding operators on $M$.
It is easy to see that the map 
\begin{align}
 P(\eta R_\Sigma^{-1}): C^\infty_0(\Sigma) \oplus C^\infty_0(\Sigma) \to C^\infty_0(M')
 \end{align} 
 is a right-inverse of the operator 
 \begin{align}
  R_\Sigma \circ G: C^\infty_0(M') \to C^\infty_0(\Sigma) \oplus C^\infty_0(\Sigma).
 \end{align} 
We now fix properly supported pseudodifferential operators  $Q_\pm$ on $M'$ with microsupport away from $N_\mp$. 
In our application we are interested in the case when either $Q_+ + Q_- = \mathbf{1}$ or $Q_+ + Q_- = \chi$ for a given smooth cut-off function $\chi$. 
Such microlocal partitions can be constructed from the corresponding partition in phase space and a quantisation map that maps to properly supported operators. 
We then define the operators
\begin{align} \label{microprojectors}
  q_\pm : C^\infty_0(\Sigma) \oplus  C^\infty_0(\Sigma) \to C^\infty_0(\Sigma) \oplus C^\infty_0(\Sigma), \quad q_\pm = 
  R_\Sigma \circ G \circ Q_\pm \circ P(\eta R_\Sigma^{-1}).
\end{align}
It follows that $q_\pm$ are properly supported pseudodifferential operators. In case $Q_+ + Q_- = \mathbf{1}$
one has $q_+ + q_- = \mathbf{1}$. If $K' \Subset \Sigma$ is a fixed compact subset then we can find a compactly supported cutoff function $\chi \in C^\infty_0(M)$ such that $Q_+ + Q_- = \chi$ and $(q_+ + q_-) f = f$ for any $f$ supported in $K'$.

\begin{lemma} \label{twentyonelemma}
 The maps $p \circ q_+$ and $(\mathbf{1}-p) \circ q_-$ extend continuously to maps
 $$
  H^\ell_\compp(\Sigma) \oplus H^{\ell-1}_\compp(\Sigma) \to H_\kappa \otimes_\R \C.
 $$ 
 for any $\ell \in \R$.
\end{lemma}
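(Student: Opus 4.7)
I plan to show that both $p \circ q_-$ and $(\mathbf{1}-p) \circ q_+$ are in fact \emph{smoothing} operators on the space of Cauchy data; the required continuity from $H^\ell_\compp \oplus H^{\ell-1}_\compp$ into $H_\kappa \otimes_\R \C$ then follows for any $\ell \in \R$ because a smoothing operator extends continuously from any Sobolev space to $C^\infty$, and in the present setting it lands naturally in $H_J \subset H_\kappa \otimes_\R \C$.

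Expanding, one has
\[
  p \circ q_- \;=\; (p \circ R_\Sigma \circ G)\circ Q_- \circ P(\eta R_\Sigma^{-1}),
\]
so the core step is to verify that $(p \circ R_\Sigma \circ G)\circ Q_-$ has a smooth integral kernel. By Proposition \ref{wvrestr}, the $M$-side wavefront set of $p\circ R_\Sigma\circ G$ lies in $N_+$; under the convention $\mathrm{WF}' = \{(x,\xi,y,\eta) \mid (x,\xi,y,-\eta)\in \mathrm{WF}\}$ fixed in the paper's introduction, this translates to the $M$-side component of $\mathrm{WF}'(p \circ R_\Sigma\circ G)$ being contained in $-N_+ = N_-$. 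The pseudodifferential operator $Q_-$ has $\mathrm{WF}'(Q_-)$ diagonal with microsupport disjoint from $N_-$ by construction. The composition of canonical relations $\mathrm{WF}'(pR_\Sigma G)\circ \mathrm{WF}'(Q_-)$ would require an intermediate covector on $M$ lying simultaneously in $N_-$ and in the microsupport of $Q_-$, which is impossible; hence $\mathrm{WF}'((pR_\Sigma G)\circ Q_-) = \emptyset$ and the composition is smoothing.

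Postcomposition on the right with $P(\eta R_\Sigma^{-1})$ preserves the smoothing property because $[P,\eta]$ is a first-order differential operator whose commutator kernel is supported in the temporally compact slab $\{\eta'\ne 0\}$, and $R_\Sigma^{-1}f$ has spacelike compact support for compactly supported Cauchy data $f$; consequently $P(\eta R_\Sigma^{-1})$ is continuous from $H^\ell_\compp(\Sigma)\oplus H^{\ell-1}_\compp(\Sigma)$ into compactly supported distributions on $M$ for every $\ell\in\R$. It follows that $p\circ q_-$ is smoothing on Cauchy data and extends continuously into $H_\kappa \otimes_\R \C$. The argument for $(\mathbf{1}-p)\circ q_+$ is strictly parallel, using the companion assertion $\mathrm{WF}((\mathbf{1}-p)R_\Sigma G)\subset N_-$ of Proposition \ref{wvrestr}, which after the primed sign flip puts the $M$-side of $\mathrm{WF}'$ in $N_+$ and is thereby disjoint from the microsupport of $Q_+$.

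The main subtle point is the $\mathrm{WF}$-versus-$\mathrm{WF}'$ sign convention: it is precisely this sign flip in the target variable that makes the ``mixed'' pairings $p q_-$ and $(\mathbf{1}-p) q_+$ the smoothing ones, rather than the naively expected $p q_+$ and $(\mathbf{1}-p) q_-$, and the orientations must be tracked carefully through the composition. A minor technical point is to verify that the smoothing extension actually takes values in $H_\kappa \otimes_\R \C$ with continuous dependence on the input; this is inherited from the fact that $p \circ R_\Sigma \circ G$ is by construction $H_J$-valued on $C^\infty_0(M)$, so the smoothing composition produces smooth Cauchy data whose $\|\cdot\|_\kappa$-norm is controlled by any Sobolev norm of the compactly supported input.
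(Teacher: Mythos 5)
Your argument is correct and follows essentially the same route as the paper: write $p\,q_- = (p\circ R_\Sigma\circ G)\circ Q_-\circ P(\eta R_\Sigma^{-1})$, use Proposition \ref{wvrestr} together with the microsupport condition on $Q_\mp$ to handle the first factor, and use the mapping property of $P(\eta R_\Sigma^{-1}) = [P,\eta]R_\Sigma^{-1}$ into compactly supported distributions for the second. The only cosmetic difference is that you phrase the microlocal step as the composition being smoothing, while the paper phrases it as extendability of the pairing to arbitrary Sobolev orders; these are the same disjointness-of-wavefront-sets observation, and your sign bookkeeping is consistent with the paper's conventions.
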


\begin{proof}
Note that
\begin{align}
 p q_+ &= p \circ R_\Sigma \circ G \circ Q_+ \circ P(\eta R_\Sigma^{-1}),\\
 (\mathbf{1}-p) q_- &= (\mathbf{1}-p) \circ R_\Sigma \circ G \circ Q_- \circ P(\eta R_\Sigma^{-1}).
\end{align}
 The formal transpose operators $Q_\pm^t$ have their microsupport away from $N_\pm$.
 Then Prop. \ref{wvrestr} shows that the maps $Q_+^t (p \circ R_\Sigma \circ G) = p \circ R_\Sigma \circ G \circ Q_+$ and $(\mathbf{1}-p) \circ R_\Sigma \circ G \circ Q_-$
 extend to $H^\ell_\compp(\Sigma) \oplus H^{\ell-1}_\compp(\Sigma)$. The Lemma now follows as $P(\eta R_\Sigma^{-1})$ is a continuous map from 
 $H^\ell_\compp(\Sigma) \oplus H^{\ell-1}_\compp(\Sigma)$ to $H^{\ell-2}_\compp(M)$. Here we use that
 $R_\Sigma^{-1}$ is a continuous map from $H^\ell_\compp(\Sigma) \oplus H^{\ell-1}_\compp(\Sigma)$ to
 $H^{\ell}_\loc(M)$.
\end{proof}

\begin{lemma} \label{twentytwolemma}
The maps  $q_+ \circ  p$ and $q_- \circ (\mathbf{1}-p)$ extend continuously to maps
 $$
 \calH_\kappa \otimes_\R \C \to C^\infty(\Sigma) \oplus C^\infty(\Sigma).
 $$ 
\end{lemma}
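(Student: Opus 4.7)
The plan is to mirror the preceding lemma's argument applied to the reversed composition. Since $P\circ R_\Sigma^{-1} = 0$ on Cauchy data, one may rewrite
\[
q_- \circ p \;=\; R_\Sigma \circ G \circ Q_- \circ [P,\eta] \circ R_\Sigma^{-1} \circ p
\]
and track the wavefront set reading from right to left, aiming to show that by the time $Q_-$ has acted, the output is already smooth and compactly supported, after which the remaining $G$ and $R_\Sigma$ steps trivially preserve smoothness. The case of $q_+ \circ (\mathbf{1}-p)$ will be identical with $N_+$ and $N_-$ interchanged.

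The key microlocal input I would establish is that $R_\Sigma^{-1}\circ p\colon \calH_\kappa \otimes_\R \C \to \mathcal{D}'(M)$ is continuous, with image consisting of distributional solutions of $P u = 0$ whose wavefront set lies in $N_+$. This is essentially the content of Proposition \ref{wvrestr}: for $h \in C^\infty_0(M)$ the solution $R_\Sigma^{-1}(p\,R_\Sigma\,G\,h)$ is the positive-frequency part of $G h$, and the proposition identifies its wavefront as lying in $N_+$. By density of $p\,R_\Sigma\,G(C^\infty_0(M))$ in $H_J$, the identity $p^2 = p$, and continuity of $R_\Sigma^{-1}$ in a topology controlling wavefront sets, the statement extends to all of $\calH_\kappa \otimes_\R \C$. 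Once this is in hand, multiplication by $\eta$ preserves the wavefront, and $[P,\eta]\circ R_\Sigma^{-1}\circ p$ maps $\calH_\kappa \otimes_\R \C$ continuously into compactly supported distributions on $M$ (supported in $\mathrm{supp}(d\eta)$) whose wavefront still lies in $N_+$. As in the proof of the preceding lemma, the microsupport condition on $Q_-$ makes it annihilate this $N_+$-wavefront, so $Q_- \circ [P,\eta] \circ R_\Sigma^{-1} \circ p$ maps continuously into $C^\infty_\compp(M)$; composing with $G\colon C^\infty_\compp(M) \to C^\infty(M)$ and the restriction $R_\Sigma \colon C^\infty(M) \to C^\infty(\Sigma) \oplus C^\infty(\Sigma)$ then yields the claim.

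The main obstacle is the wavefront statement for $R_\Sigma^{-1}\circ p$ on all of $\calH_\kappa \otimes_\R \C$. While Proposition \ref{wvrestr} controls $p\circ R_\Sigma\circ G$ acting on $C^\infty_0(M)$, extending the wavefront information to the Hilbert-space continuous extension of $R_\Sigma^{-1}\circ p$ requires care in the choice of a Hörmander-type topology and use of the Fourier-integral-operator nature of $R_\Sigma^{-1}$ established in the earlier subsections. Everything else is a formal chase through the continuity of the constituent operators.
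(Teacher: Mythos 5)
Your overall strategy---reduce everything to showing that $R_\Sigma^{-1}\circ p$ maps $H_J$ continuously into distributional solutions whose wavefront set lies in a fixed half of the light cone, then use $[P,\eta]$ to produce compact support and the microlocal cutoff to kill the remaining singularities---is the paper's strategy. But the one concrete claim you make about the key step is wrong, and the step itself is left unproved. You assert that the image of $R_\Sigma^{-1}\circ p$ has wavefront set in $N_+$ and that $Q_-$ annihilates this. Proposition \ref{wvrestr} controls the wavefront set of the $H_J$-valued distribution $f\mapsto p\,R_\Sigma\,G f$ in the \emph{test-function} variable; it says nothing directly about the wavefront set of an individual $v\in H_J$ regarded, via $R_\Sigma^{-1}$, as a distributional solution on $M$. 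Passing between the two involves a transpose, which interchanges $N_+$ and $N_-$. The correct statement (and the one the paper proves) is that $R_\Sigma^{-1}$ maps $H_J$ continuously into distributions with wavefront set in $N_-$; this is exactly what makes $Q_-$, which is microsupported away from $N_-$, act smoothingly in $q_-\circ p$. With your $N_+$ the composition would not be smoothing, so the sign is not a harmless convention slip.

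More seriously, the point you defer as ``the main obstacle''---upgrading the wavefront information from the dense image $p\,R_\Sigma\,G(C^\infty_0(M))$ to all of $H_J$ ``in a topology controlling wavefront sets''---is where the entire proof lives, and a density argument cannot supply it, since wavefront sets are not stable under Hilbert-space limits. The paper avoids density altogether by a duality computation: for $v\in H_J$ and $f\in C^\infty_0(M)$ one has $(R_\Sigma^{-1}v,f)_{L^2(M)}=\sigma_\C(R_\Sigma G f,v)$, and since $\sigma_\C$ vanishes on $H_J\times H_J$ this equals $\sigma_\C\bigl((\mathbf{1}-p)R_\Sigma G f,v\bigr)$. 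For any compactly supported pseudodifferential operator $A$ with microsupport disjoint from $N_-$, the operator $A$ applied to the $\overline{H_J}$-valued distribution $(\mathbf{1}-p)R_\Sigma G$ is a smooth $\overline{H_J}$-valued function by Proposition \ref{wvrestr}, whence $v\mapsto AR_\Sigma^{-1}v$ is continuous from $H_J$ into $C^\infty(M)$. The appearance of $(\mathbf{1}-p)$ rather than $p$ in this pairing is precisely the source of the cone flip you missed, and this computation is the missing idea in your proposal.
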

\begin{proof}
We show first that the solution operator $R_\Sigma^{-1}$ to the Cauchy problem maps $H_J$ continuously to the subset $H^1_-(M)$ of $H^1_\loc(M)$ consisting of distributions with the additional property that their wavefront set is contained in set  $N_-$ of past-directed null covectors.
We equip this space with the weakest locally convex topology so that a $v_j$ converges to $v$ in $H^1_-(M)$ if and only if $v_j \to v$
in $H^1_\loc(M)$ and $A u_j \to A u$ in $C^\infty$ for all compactly supported pseudodifferential operators $A$ with microsupport disjoint from $N_-$.
If $R_\Sigma^{-1}$ maps $H_J$  continuously into this space we can conclude by the definition of $q_+$ that map $q_+ p$ is continuous and to $C^\infty(\Sigma) \oplus C^\infty(\Sigma)$ as claimed.
A similar argument applies to $q_- \circ (\mathbf{1}-p)$. Namely, taking the complex conjugate $\overline{H}_J$ maps continuously to $H^1_+(M)$, which is defined similarly with $N_-$ replaced by $N_+$.

Now let $f \in C^\infty_0(M)$ and $v \in H_J$. One computes 
\begin{align} \label{barbelqudsdf}
(R_\Sigma^{-1} v, f)_{L^2(M)} &= (G P \eta R_\Sigma^{-1} v, f)_{L^2(M)} = G(f, P \eta R_\Sigma^{-1} v)  \nonumber \\ &= \sigma_\C(R_\Sigma G f, R_\Sigma G P \eta R_\Sigma^{-1} v) =
\sigma_\C(R_\Sigma G f,v),
\end{align}
where $\sigma_\C$ is $\sigma$ extended bilinearly to the space of complex valued functions and it is understood as a pairing 
between $H^\frac{1}{2}_\compp(\Sigma) \oplus H^{-\frac{1}{2}}_\compp(\Sigma)$ and $H^\frac{1}{2}_\loc(\Sigma) \oplus H^{-\frac{1}{2}}_\loc(\Sigma)$.
It can also be understood as a continuous bilinear form on $H_J \oplus \overline{H_J}$ and we do not distinguish this notationally.
The form $\sigma_\C$ on $H_\kappa \otimes \C = H_J \oplus \overline{H_J}$ can be expressed in terms of the inner product on $H_J$ and is
$$
 \sigma_\C(v,w) = \frac{\rmi}{2} \left( \langle \overline v, w \rangle - \langle \overline w, v \rangle \right).
$$
In particular it follows that $\sigma_\C(w,v) = 0$ if $v,w \in H_J$. Hence Equ. \eqref{barbelqudsdf} becomes
\begin{align} 
 (R_\Sigma^{-1} v, f)_{L^2(M)} = \sigma_\C((\mathbf{1}-p)R_\Sigma G f, v) 
 %= \frac{1}{2} \left( \langle \overline{(\mathbf{1}-p)R_\Sigma G f}, v \rangle_{H_J} - \langle \overline{v}, p R_\Sigma G f \rangle_{H_J} \right) .
\end{align}

Assume now $A$ is a compactly supported pseudodifferential operator with microsupport disjoint from $N_-$.  Then the formal transpose $A^\mathtt{t}$ and the complex conjugate $\overline{A}$ have their microsupports disjoint from $N_+$. The operator $A$ applied to the distribution $R_\Sigma^{-1} v$ is defined by the linear map
\begin{align}\label{asfdasdfsdgse5}
 f \mapsto (R_\Sigma^{-1} v, A^\mathtt{t} f)_{L^2(M)} =  \sigma_\C((\mathbf{1}-p)R_\Sigma G A^\texttt{t} f, v).
 \end{align}

By Prop. \ref{wvrestr} the $\overline{H_J}$-valued distribution $(\mathbf{1} -p) R_\Sigma G$ has its wavefront set in $N_-$ and therefore $A (\mathbf{1} -p) R_\Sigma G$ regarded as an  $\overline{H_J}$-valued distribution is a smooth $\overline{H_J}$-valued function.
 It follows from \eqref{asfdasdfsdgse5} that map $v \mapsto A R_\Sigma^{-1} v$ is continuous from $H_J$ to $C^\infty(M)$.
Since $R_\Sigma^{-1}$ continuously maps $H^{\frac{1}{2}}_\loc(\Sigma) \oplus H^{-\frac{1}{2}}_\loc(\Sigma)$ to $H^1_\loc(M)$ the statement is proved.
\end{proof}

\subsection{Implementation in Fock space and differentiability} \label{implsecdiff}

 We will now show that the scattering map $\Lambda(g',g) = C_-(g',g)^{-1} \circ C_+(g',g) : \ker P \to \ker P$ for any two metrics $g',g \in \mathcal{M}$ always satisfies Shale's criterion with respect to any pure quasifree Hadamard state and therefore can be implemented by a unitary in the Fock space.
 We will need a parameterised version of this and therefore assume as before that the metric $g'=g_s$ depends smoothly on a parameter $s \in I$ in a compact parameter manifold.
 As before $g_s-g$ will have support in the compact set $I \times K$, where $K \subset M$ is compact. We also assume that the background metric $g$ equals the metric for a distinguished point $o \in I$, i.e. $g = g_o$.
Given a quasifree pure Hadamard state this fixes an inclusion $\ker(P) \subset H_\kappa$ and a complex structure $J$. 
 The family of scattering maps $\Lambda(g_s, g)$ then gives rise to a family of complex structures $J_s$ and real inner products on $H_\kappa$ with the resulting symplectic structure being independent of $s$.

The following theorem shows implementability of the family of scattering maps $\Lambda(g_s, g)$ on the Fock space $\mathcal{F}(H_J)$. As before we denote by $\mathcal{F}_\fp(H_J)$
the subspace of finite particle vectors and by $\mathcal{F}_{\sfp}(H_J^\infty) \subset \mathcal{F}_\fp(H_J)$ the subset of vectors in the algebraic direct sum $\oplus_{k=0}^\infty \otimes^k_S H_J^\infty$, where the tensor product is also the algebraic tensor product. Its elements are therefore finite linear combinations of simple tensor products of vectors in $H_J^\infty$.
Similarly, we denote by $\mathcal{F}_{\sfp}(H_J^\ell) \subset \mathcal{F}_\fp(H_J)$ the set $\oplus_{k=0}^\infty \otimes^k_S H_J^\ell$. These subspaces are all dense in the Hilbert space $\mathcal{F}(H_J)$.

\begin{theorem}\label{fptheorem}
 The family $\Lambda(g_s, g) J - J \Lambda(g_s, g)$ is a smooth family of Hilbert-Schmidt operators on $H_\kappa$. Hence, the scattering map $\Lambda(g_s, g)$ can be implemented for each $s \in I$ on the Fock space $\mathcal{F}(H_J)$ to define a family of unitary maps $U_s : \mathcal{F}(H_J) \to \mathcal{F}(H_J)$. Then for each vector $v \in \mathcal{F}_{\sfp}(H_J^\infty)$ the family $U_s v$ is a smooth function on $I$ if $U_s$ is chosen as the standard implementer.
\end{theorem}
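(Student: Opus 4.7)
The plan is to verify Shale's Hilbert--Schmidt criterion for the family $\Lambda(g_s,g)$ with smooth parameter dependence, and then transfer this smoothness to Fock space vectors via the explicit form of the implementer in Appendix \ref{Shalendix}. Setting $J_s = W_s^{-1} J W_s$, the algebraic identity
\begin{align*}
 J W_s - W_s J = W_s(J_s - J),
\end{align*}
combined with the uniform boundedness of $W_s$ on $H_\kappa$ provided by Lemma \ref{ucontlem}, reduces Shale's criterion to proving that $J_s - J$ is a smooth $I$-family of Hilbert--Schmidt (in fact trace class) operators on $H_\kappa$.

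The heart of the argument is that $J$ and $J_s$ are the Cauchy-surface complex structures of two pure Hadamard states on the fixed background $(M, g_0)$: the reference state $\omega_J$, and the state obtained by pulling $\omega_J$ back under the classical scattering map $\Lambda(g_s, g)$. Both states satisfy the Hadamard wavefront condition on $(M, g_0)$ (the latter because it agrees with the $(M,g_s)$-Hadamard state near $\Sigma_-$, where $g_s = g_0$, and is then propagated by the $g_0$-equation), so by Radzikowski's theorem their two-point functions---and hence the associated Cauchy-data pseudodifferential operators, projections, and complex structures---differ by a smooth kernel on $\Sigma \times \Sigma$. Because $W_s^{\pm 1} - \mathbf{1}$ has compactly supported integral kernel, this smoothing difference is compactly supported in $\Sigma \times \Sigma$. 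The sandwich of continuous inclusions $H^{\frac{1}{2}}_\compp \oplus H^{-\frac{1}{2}}_\compp \hookrightarrow H_\kappa \hookrightarrow H^{\frac{1}{2}}_\loc \oplus H^{-\frac{1}{2}}_\loc$ then promotes any such compactly supported smoothing operator to a trace class operator on $H_\kappa$. Smooth $s$-dependence in the trace norm comes from the parameter-dependent Fourier integral operator and Hadamard theory of Section \ref{Appendparam}: both the Hadamard parametrix and the scattering operator belong to Lagrangian classes on $I \times \Sigma \times \Sigma$ with smoothing remainders smooth in $s$ of uniformly compact support. Shale's theorem then produces a unitary implementer $U_s$ on $\mathcal{F}(H_J)$, uniquely fixed by the normalisation $\langle \Omega, U_s \Omega \rangle \geq 0$.

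For the smoothness of $U_s v$, I use the standard Bogoliubov form of the implementer. Decomposing the $\mathbb{C}$-linear extension of $W_s$ to $H_J \oplus \overline{H_J}$ as $W_s = \alpha_s + \beta_s$ with $\alpha_s$ commuting and $\beta_s$ anticommuting with $J$, the preceding paragraph shows that $\beta_s$ is a smooth family of trace class operators; symplecticity of $W_s$ forces $\|\alpha_s^{-1}\beta_s\|_{\mathrm{op}} < 1$. The standard implementer acts on the vacuum as the squeezed vector
\begin{align*}
 U_s \Omega = c(s) \sum_{n=0}^\infty \frac{(-1)^n}{2^n\, n!} \bigl(a^*(\alpha_s^{-1}\beta_s)\, a^*\bigr)^n \Omega,
\end{align*}
an absolutely convergent series in $\mathcal{F}(H_J)$ that depends smoothly on $s$ because of the smooth trace-norm dependence of $\alpha_s^{-1}\beta_s$ and of the Fredholm determinant normalisation $c(s)$. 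For $v = a^*(f_1)\cdots a^*(f_n)\Omega$ with $f_k \in H_J^\infty$, the Bogoliubov identity $U_s a^*(f) = (a^*(\alpha_s f) + a(\overline{\beta_s f})) U_s$ gives
\begin{align*}
 U_s v = \prod_{k=1}^n \bigl(a^*(\alpha_s f_k) + a(\overline{\beta_s f_k})\bigr)\, U_s \Omega,
\end{align*}
a finite sum of products of creation and annihilation operators with smoothly varying $H_J^\infty$-valued arguments applied to the smooth family $U_s\Omega$. Smoothness in $\mathcal{F}(H_J)$ follows, and linearity extends the conclusion to all of $\mathcal{F}_{\sfp}(H_J^\infty)$.

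The main obstacle I anticipate is the uniform, parameter-dependent smoothing control in the second paragraph: pointwise smoothing of $J_s - J$ at each $s$ is immediate from the Hadamard condition, but promoting it to a smooth family of trace-class operators requires the polyhomogeneous parameter-dependent symbol calculus of Section \ref{Appendparam}, together with a careful bookkeeping of the supports of the kernels uniformly in $s \in I$.
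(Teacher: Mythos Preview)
Your overall strategy---verifying Shale's criterion via the Hadamard property of a second state and then transferring smoothness through the explicit implementer---is the right idea, and your observation that $J_s$ is the complex structure of a pure Hadamard state on $(M,g_0)$ is correct and insightful. However, there is a genuine gap at the compact-support step, and it is not the obstacle you flagged at the end.

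The claim that the smoothing difference $J_s-J$ (or the restricted two-point function difference) has compactly supported kernel on $\Sigma\times\Sigma$ is false in general. Writing $J_s-J=W_s^{-1}[J,W_s-\mathbf{1}]$ and recalling that $J$ is a pseudodifferential operator that is \emph{not} compactly (or even properly) supported, the commutator $J(W_s-\mathbf{1})-(W_s-\mathbf{1})J$ has kernel supported only in $(\Sigma\times K')\cup(K'\times\Sigma)$, not in $K'\times K'$. On a non-compact Cauchy surface this is fatal: the sandwich $H^{1/2}_\compp\oplus H^{-1/2}_\compp\hookrightarrow H_\kappa\hookrightarrow H^{1/2}_\loc\oplus H^{-1/2}_\loc$ promotes to trace class only operators with compact support in \emph{both} variables. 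Radzikowski tells you the difference is smoothing, but smoothness gives no decay at infinity and hence no Schatten-class control. Note also that neither $J(W_s-\mathbf{1})$ nor $(W_s-\mathbf{1})J$ is individually Hilbert--Schmidt (since $W_s-\mathbf{1}$ is an order-zero Fourier integral operator, not smoothing), so the cancellation in the commutator is doing real work that your argument does not capture.

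The paper resolves this by working directly with $p(W_s-\mathbf{1})(\mathbf{1}-p)$ and inserting the compactly supported microlocal projectors $q_\pm$ of \eqref{microprojectors}: one writes $p(W_s-\mathbf{1})(\mathbf{1}-p)=p(q_++q_-)(W_s-\mathbf{1})(q_++q_-)(\mathbf{1}-p)$, uses propagation of singularities to see that $q_+(W_s-\mathbf{1})q_-$ has smooth compactly supported kernel, and invokes the lemmas immediately preceding the theorem to show that $pq_-$ and $q_+(\mathbf{1}-p)$ absorb arbitrary Sobolev losses while remaining bounded into $H_\kappa$. A factorisation through powers of a Laplacian on a compact manifold containing $K'$ then yields the trace-class property with smooth $s$-dependence. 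Your Hadamard-state observation is the conceptual reason this works, but it does not replace the microlocal bookkeeping.

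For the smoothness of $U_sv$, your Bogoliubov-identity route is viable but incomplete as stated: applying $a^*(\alpha_sf_k)+a(\overline{\beta_sf_k})$ to the infinite series $U_s\Omega$ requires convergence in a topology strong enough to carry derivatives, which is exactly the content of Proposition~\ref{smoothytwo}. The paper instead applies the formula $U_s=c_s\,e^{-\frac{1}{2}a^*(K_s)}\Gamma((q_s^{-1})^*)e^{\frac{1}{2}a(L_s)}$ directly to $v$: the annihilation exponential terminates, $\Gamma((q_s^{-1})^*)$ preserves $\mathcal{F}_{\sfp}$, and Proposition~\ref{smoothytwo} handles the creation exponential. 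This also requires establishing that $r_s$ is smooth into Hilbert--Schmidt operators and that $q_s,q_s^{-1}$ are $C^k$ as maps $H_J^{\ell+k}\to H_J^\ell$, which you have not addressed.
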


\begin{proof}
 We will use  the map $R_{\Sigma_-}$ to identify the space $H_\kappa$ with a set of functions on $\Sigma$ as before:
$$
  H^\frac{1}{2}_\compp(\Sigma_-,\R) \oplus H^{-\frac{1}{2}}_\compp(\Sigma_-,\R) \hookrightarrow H_\kappa \hookrightarrow H^\frac{1}{2}_\loc(\Sigma_-,\R) \oplus H^{-\frac{1}{2}}_\loc(\Sigma_-,\R).
 $$
The inner product of the state is given by
\begin{align}
 \langle (f,\dot f), (f,\dot f) \rangle_\kappa =\langle (f,\dot f), A (f,\dot f) \rangle_{L^2(\Sigma_\pm) \oplus L^2(\Sigma_\pm)},
\end{align}
where $A$ is a pseudodifferential operator. Let $W(g_s,g) = R_{\Sigma_-} \Lambda(g_s,g) R_{\Sigma_-}^{-1} = V(g_s)^{-1} V(g)$.
We will now simply write $W_s$ for $W(g_s,g)$ in this proof.
We have already seen that the norm $\| W_s \cdot \|_\kappa$ is equivalent to the norm $\| \cdot \|_\kappa$ for each $s \in I$.
Indeed, the map $W_s$ is a properly supported zero order Fourier integral operator and, by finite propagation speed, $W_s-\mathbf{1}$ has compactly supported kernel.
Thus $W_s$ extends continuously to a map $H_\kappa \to H_\kappa$ with continuous dependence on $s$. This is, for the same reason, also true for the inverse $W^{-1}_s$. Hence, the topologies induced by the two inner products coincide with uniform bounds in the parameter $s \in I$.
%The complex structures are related by $J_s = W^{-1}_s J W_s$.
%We can write 
%$$
% J_s - J = W^{-1}_s J W_s -J = W^{-1}_s \left( J W_s - W_s J \right).
%$$
%We will show that $J W_s - W_s J$ is a smooth family of trace-class operators from $H_\kappa$ to $H_\kappa$. 
To show that $W_s$ can be implemented to a family of unitary operators on Fock space we  show that
\begin{align}
 p W_s - W_s p  = p (W_s -\mathbf{1}) - (W_s-\mathbf{1}) p = p (W_s-\mathbf{1}) (\mathbf{1}-p) - (\mathbf{1}-p)(W_s-\mathbf{1}) p
\end{align}
is a trace-class operator $H_\kappa \otimes \C \to H_\kappa\otimes \C$. 
The kernel of $W_s-\mathbf{1}$, $W^{-1}_s-\mathbf{1}$ has support in some compact set $K' \times K', K' \Subset \Sigma$ independent of $s \in I$.
Using a microlocal splitting, as before, we define the compactly supported pseudodifferential operators $q_\pm$ so that $(q_+ + q_-) f =f$ for initial data $f$ supported in $K'$. 
Then we have
\begin{align}
 p (W_s-\mathbf{1}) (\mathbf{1}-p) =  p (q_+ + q_-)  (W_s-\mathbf{1}) (q_+ + q_-) (\mathbf{1}-p).
\end{align}
As a consequence of the propagation of singularities theorem the operators
$$
 q_+  (W_s-\mathbf{1})  q_-, \quad q_-  (W_s-\mathbf{1})  q_+
$$ 
have smooth integral kernels with compact support that are also smooth in $s$. They are therefore smooth families of continuous maps from
$H^\frac{1}{2}_\loc(\Sigma_-) \oplus H^{-\frac{1}{2}}_\loc(\Sigma_-)$ to
$H^\ell_\compp(K) \oplus H^\ell_\compp(K)$ for any $\ell$. It is hence a smooth family of nuclear maps, and therefore a smooth family of trace-class operators
$H_\kappa \otimes \C \to H_\kappa\otimes \C$.
To show that $p (W_s-\mathbf{1}) (\mathbf{1}-p)$ is a smooth family of trace-class operators it is therefore enough to prove that 
the operators $p q_+(W_s-\mathbf{1})  ,  (W_s-\mathbf{1}) q_-(\mathbf{1}-p)$ are smooth families of trace-class operators.
By Lemma \ref{twentyonelemma} the map $p q_+$ is continuous from $H^{\ell}(K) \oplus H^{\ell-1}(K)$ to $H_\kappa \otimes \C$ for any $\ell$.
We can think of $K$ as a subset of a compact Riemannian manifold $N$ and use the Laplace operator $\Delta_N$ on
that compact Riemannian manifold to write for any integer $\ell>0$
$$
 p q_+ = p q_+ (\Delta_N +\mathbf{1})^{\ell+k}  (\Delta_N +\mathbf{1})^{-\ell-k}.
$$
Since $p q_+ (\Delta_N +\mathbf{1})^{\ell+k}$ is continuous for any $\ell>0,k>0$ and $(\Delta_N +\mathbf{1})^{-\ell-k} (W_s-\mathbf{1}) $ is a $C^k$ family of trace-class operators
on $H^{\frac{1}{2}}_\compp(K) \oplus H^{-\frac{1}{2}}_\compp(K) $ for sufficiently large 
$\ell$ this shows that the composition $p q_+ (W_s-\mathbf{1}) $ is a smooth family of trace-class operators.
For the trace-class property of $(W_s-\mathbf{1})  q_-(\mathbf{1}-p)$ we use Lemma \ref{twentytwolemma} in a similar way.
Namely, $$(W_s-\mathbf{1})  q_-(\mathbf{1}-p) = (W_s-\mathbf{1})  (\Delta_N +\mathbf{1})^{-\ell-k}  (\Delta_N +\mathbf{1})^{\ell+k} q_-(\mathbf{1}-p)$$ where $(\Delta_N +\mathbf{1})^{\ell+k} q_{-}(\mathbf{1}-p)$ is bounded and 
$(W_s-\mathbf{1})  (\Delta_N +\mathbf{1})^{-\ell-k}$ is a $C^k$ family of trace-class operators for $\ell$ sufficiently large.
The statement for the operator $(\mathbf{1}-p) (W_s-\mathbf{1}) p$ follows in a similar manner.

Since we would also like to prove differentiability we will give more precise statements following the Appendix \ref{Shalendix}. Using the splitting $H_\kappa \otimes \C = H_J \oplus \overline{H_J}$ we have the decomposition
\begin{align}
W_s = \left( \begin{matrix} q_s & \overline{r}_s \\ r_s & \overline{q}_s \end{matrix} \right).
\end{align}
Here $q_s = p W_s p$ and $r_s = (\mathbf{1}-p) W_s p = (\mathbf{1}-p)(W_s-\mathbf{1}) p$.
We recall that $W_s-\mathbf{1}$ is a $C^k$ function of $s$ taking values in the operators from $H^{\frac{1}{2}+ \ell +k}_\loc(\Sigma) \oplus H^{-\frac{1}{2}+ \ell+k}_\loc(\Sigma)$
to $H^{\frac{1}{2}+ \ell}_\compp(\Sigma) \oplus H^{-\frac{1}{2}+ \ell}_\compp(\Sigma)$ for any $\ell \in \R$. The map $W_s$ is invertible and symplectic on $H_\kappa$ and therefore
satisfies $W_s J W_s^* J =  J W_s^* J W_s= -\mathbf{1}$. Thus, 
\begin{align} \label{grfnklasasp}
 W_s^{-1} = \left( \begin{matrix} q_s^* & -r_s^* \\ -\overline{r_s}^* & \overline{q}_s^* \end{matrix} \right).
\end{align}
Since the previous arguments also apply to the inverse $W_s^{-1}$ we have already shown that
\begin{itemize}
 \item[(i)] $\overline{r_s}^*$ and $r_s$ are smooth families of Hilbert-Schmidt operators on $H_J$,
 \item[(ii)]  $\overline{r_s}^*$ and $r_s$ are smooth families of maps $H_J$ to $\overline{H_J^\ell}$ for any $\ell \geq 0$.
 \item[(iii)]  $q_s^*$ and $q_s$ are $C^k$-functions of $s$ as maps $H_J^{\ell +k}$ to $H_J^{\ell}$ for any $\ell \geq 0$.
\end{itemize}
From \eqref{grfnklasasp} one obtains
\begin{align}
 q_s^* q_s - r_s^* r_s = \mathbf{1}
\end{align}
and therefore $q_s$ is invertible as a map from $H_J$ to $H_J$ for every $s$ with inverse given by
\begin{align}
 q_s^{-1} =  \left( \mathbf{1} + r_s^* r_s \right)^{-1} q^*_s= q^*_s -  \left( \mathbf{1} + r_s^* r_s \right)^{-1} r_s^* r_s q_s^*.
\end{align}
The family $\left( \mathbf{1} + r_s^* r_s \right)$ is a $C^k$ family of maps from $H_J^\ell$ to $H_J^\ell$. Since the inverse from $H_J \to H_J$
satisfies
\begin{align}
 \left( \mathbf{1} + r_s^* r_s \right)^{-1} = \mathbf{1} - r_s^* r_s \left( \mathbf{1} + r_s^* r_s \right)^{-1} 
\end{align}
it also continuously maps $H_J^\ell \to H_J^\ell$ for any $\ell \geq 0$. We conclude that
$q_s^{-1}$ is a $C^k$ family of maps from $H_J^{\ell +k}$  to $H_J^{\ell}$.
All together $q_s^{-1}$ is a $C^k$ family of maps from $H_J^{\ell +k}$  to $H_J^{\ell}$. The same argument applies to $(q_s^{-1})^*$.

As in Appendix \ref{Shalendix} we write
$$
 K_s = \overline{r_s q_s^{-1}}, \quad L_s = - q_s^{-1} \overline{r}_s.
$$

The operators $K_s, L_s$ are then smooth symmetric families of Hilbert-Schmidt operators.
For the canonical implementer we have 
\begin{align}
 U_s v = \left(\mathrm{det}(1 - K^*_s K_s) \right)^\frac{1}{4} e^{-\frac{1}{2} a^*(K_s)} \Gamma( (q_s^{-1})^*) e^{\frac{1}{2} a(L_s)} v.
\end{align}
The family of vectors $e^{\frac{1}{2} a(L_s)} v$ is a finite linear combination of $s$-independent vectors in $\mathcal{F}_{\sfp}(H_J)$ with smooth functions in $s$ as coefficients.
Thus, $\Gamma( (q_s^{-1})^*) e^{\frac{1}{2} a(L_s)} v$ is a smooth function of $s$ taking values in $\mathcal{F}_{\sfp}(H_J)$. The statement of smoothness now follows from Prop. \ref{smoothytwo}.
\end{proof}

This theorem is in line with physics intuition that the total number of particle anti-particle pairs being created out of a quasifree pure Hadamard state by a compactly supported metric perturbation is finite. This translates to the unitary implementability as captured by Shale's criterion. 

\subsection{The Klein-Gordon field is a quantum field theory with stress energy tensor} \label{exkgex}

As before we assume that $\mathcal{M}$ is a moduli space of globally hyperbolic metrics on a fixed background spacetime $(M,g_0)$.
We assume that for the Klein-Gordon field for mass $m \geq 0$ a pure quasifree Hadamard state has been fixed.
Denote by $\calH = \mathcal{F}(H_J)$ the corresponding Hilbert space and let $\calA_{g_0}(\calO)$ be the local algebras acting on $\calH$.
Using past-identification we can also represent the local algebras for the Klein-Gordon field $\calA_{g}(\calO)$ on the same Hilbert space
$\calH$. 
This Hilbert space is the symmetric Fock space over the one particle Hilbert space $H_J$.
We can fix a Cauchy surface $\Sigma$ and the isomorphisms will allow us to think of 
$H_J$ as a subspace of $H^{\frac{1}{2}}_\loc(\Sigma) \oplus H^{-\frac{1}{2}}_\loc(\Sigma)$.
We define the dense subspace 
\begin{align}
 \calH^\infty_{g_0} = \mathcal{F}_{\sfp}(H_J^\infty)
\end{align}
consisting of finite linear combinations of simple vectors in the symmetric tensor product of $H_J^\infty$.

For each path $\gamma: [a,b] \to \mathcal{M}$ in $\mathcal{M}$ with $\gamma(a)=g$ and $\gamma(b)=g'$ we can now construct a scattering matrix $S(\gamma(t))$ as
\begin{align}
 S(\gamma) = S_c(g',g_0) S_c(g,g_0)^*,
\end{align}
where $S_c$ is the canonical implementer of the map $\Lambda(g',g)$. We define
\begin{align}
 \calH^\infty_{g} =  S(\gamma) \calH^\infty_{g_0} 
\end{align}
where $\gamma$ is any path connecting $g_0$ to $g$. Since different choices of paths implement the same map $\Lambda(g',g)$ the equivalence class of
$S(\gamma)$ in $U(\calH)/U(1)$ is path independent. Hence, the definition of the space above does not depend on the choice of path. The holonomy is trivial and therefore central.

\begin{theorem} \label{KGExamplTh}
 The above defines a theory with stress energy tensor: the scattering matrix with the dense set of smooth vectors $\calH^\infty_g$ defines a unitary connection in the sense of Definition \ref{defconnection} that satisfies all the requirements of  Definition \ref{defstress}.
 \end{theorem}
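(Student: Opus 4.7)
The plan is to verify first the unitary-connection axioms of Definition \ref{defconnection} and then the three groups of axioms of Definition \ref{defstress} (covariance, locality, causality). Most of the connection data is already in hand: by construction $S(\gamma) = S_c(g',g_0)S_c(g,g_0)^{*}$ depends only on the endpoints of $\gamma$ modulo $U(1)$, since the classical scattering symplectomorphism $\Lambda(g',g)$ depends only on $g'$ and $g$ (not on an interpolating family) and the canonical Shale implementer is unique up to a phase. This gives composition, inverses, and central holonomy in one stroke. Strong continuity, the invariance $S(\gamma)\calH^\infty_g\subset\calH^\infty_{g'}$, and smoothness of $s\mapsto S(\gamma_s)v$ for $v\in\calH_{\sfp}(H_J^\infty)$ are exactly the content of Theorem \ref{fptheorem}, applied to any smooth family emanating from a fixed metric; differentiating in $s$ at a single point extracts the operator-valued distribution $T_g(h)$ acting on $\calH^\infty_g$, which is the connection one-form.

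For \textbf{covariance}, let $\phi_t\in\mathrm{Diff}_{0,\compp}(M)$ be a smooth path and $\gamma(t)=\phi_t^{*}g$. Choose Cauchy surfaces $\Sigma_\pm$ for $g$ lying outside $\bigcup_{t\in[a,b]}\supp(\phi_t - \mathrm{id})\cup\supp(\phi_t^{*}g-g)$. Then $\phi_t$ is the identity on a neighborhood of $\Sigma_\pm$, so $R_{\phi_t^{*}g,\Sigma_\pm} = R_{g,\Sigma_\pm}$ and consequently $C_+(t)=C_-(t)=\mathrm{id}$, whence $\Lambda(\gamma(t),g)=\mathrm{id}$ as a symplectic map on $\ker P_g$. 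By uniqueness up to $U(1)$ of the canonical implementer of $\mathrm{id}$ we get $S(\gamma)\in U(1)$; the condition $\langle\Omega,S_c(\gamma(t),g_0)S_c(g,g_0)^{*}\Omega\rangle\ge 0$ fixes the phase and yields $S(\gamma)=\mathbf 1$.

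The step I expect to be the main obstacle is \textbf{locality}. For the algebra identification $\calA_{\gamma(t)}(\calO)=\calA_g(\calO)$, I would use the past-identification framework: since $\gamma(t)-g$ is supported in $\calO$, the fields $\Phi_{\gamma(t)}(f)$ and $\Phi_g(f)$ with $\supp f\subset\calO\setminus\overline{\supp(\gamma(t)-g)}$ coincide, and combining this with the classical time-slice argument (every smeared field with support in $\calO$ can be rewritten through Cauchy data on a slice inside $\calO$, and these data are propagated from the past where the metrics agree) gives equality of the von Neumann algebras. The crucial point $S(\gamma)\in\calA_g(\calO)$ I would prove by differentiation: partition $\gamma$ into small pieces, use the Dyson--type series for the parallel transport along each piece driven by $T_{\gamma(t)}(\dot\gamma(t))$, and observe that $T_g(h)$ with $\supp h\subset\calO$ is a Wick-quadratic expression in the fields $\Phi_g(f)$ for $f$ supported in $\calO$ (this is visible from its formula as the connection one-form of the Bogoliubov flow generated by $h$, which at the classical level is supported in $\calO$). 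Hence each factor lies in the weak closure of polynomials in $\Phi_g(f)$ for $\supp f\subset\calO$, and the iterated product converges strongly to $S(\gamma)\in\calA_g(\calO)$. The technical subtlety is that $T_g(h)$ is unbounded; I would circumvent this by working with the bounded Weyl operators $e^{\rmi\Phi_g(f)}$ and approximating the Shale implementer by short-time Bogoliubov transformations whose generators are controlled by $T_g(h_j)$ on the smooth finite-particle subspace.

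For \textbf{causality}, condition (1) follows from the fact that when $\supp(g'-g)\cap J^-_g(\calO)=\emptyset$ the Cauchy problem for $P_{g'}$ and $P_g$ produces identical fields on $\calO$ from the same past data, giving equal local algebras; condition (2) is the same argument after conjugation by $S(g'(t),g)$, which implements identification of past data between the two spacetimes. For condition (3), the classical scattering factorizes as $\Lambda(g,g+h_1+h_2+h_3)=\Lambda(g,g+h_2+h_3)\circ\Lambda(g+h_2,g+h_1+h_2)$ whenever a Cauchy surface of $g+h_2$ separates $\supp h_1$ from $\supp h_3$; this identity is verified by splitting the propagation across such a Cauchy surface and using that the two perturbations act independently on either side. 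Lifting to the Fock-space implementers produces equality modulo $U(1)$ by uniqueness of the canonical Shale implementer. Finally, the fact that the resulting connection has central holonomy then places us in the setting of Section \ref{implsec}, confirming the full statement.
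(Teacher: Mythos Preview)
Your overall structure matches the paper's, and the arguments for the connection axioms, covariance, and causality are essentially the same as those given there (with the paper additionally citing \cite{MR2007173}*{Th.~4.3} for the identification of the derivative with the Wick-ordered stress tensor, i.e.\ property~(6) of Definition~\ref{defconnection}).

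The substantive divergence is in \textbf{locality}, and here your proposed route is precisely the one the paper explicitly abandons. You suggest proving $S(\gamma)\in\calA_g(\calO)$ via a Dyson-type product with generators $T_{\gamma(t)}(\dot\gamma(t))$ realised as local Wick polynomials. The paper remarks that this ``leads to rather unpleasant technicalities'' (unboundedness, renormalisation of the Wick square, control of the product in the von Neumann closure) and instead uses a completely different and much cleaner argument: one decomposes $h$ so that each piece has small support, observes that the classical symplectic map $W$ satisfies $W-\mathbf 1$ compactly supported in a small patch $\Sigma_0\subset\Sigma$, hence the implementer commutes with all Weyl operators $B(v)$ for $v$ supported in $\overline{\Sigma_0}^{\,c}$, and then invokes \emph{abstract Haag duality} for the free Bose field (\cite{MR0158666}*{Th.~1,~(5)}) to conclude that the implementer lies in the algebra generated by Cauchy data supported in a neighbourhood of $\overline{\Sigma_0}$, hence in $\calA_g(\calO)$. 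Your workaround of ``approximating the Shale implementer by short-time Bogoliubov transformations'' does not obviously evade the unboundedness problem, since the short-time factors are themselves Shale implementers whose affiliation with $\calA_g(\calO)$ is exactly what is to be proved.

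Your argument for the algebra identity $\calA_{\gamma(t)}(\calO)=\calA_g(\calO)$ is also too quick: fields with $\supp f\subset\calO\setminus\overline{\supp(\gamma(t)-g)}$ do \emph{not} automatically coincide in the two theories, because in the past-identification picture the field to the future of the perturbation is conjugated by $S$. The paper handles this (``Locality Part~2'') by a careful decomposition of $h$ along a foliation, reducing to perturbations supported near a single time-slice, and then covering $\calO$ by sets on which either causality or the already-established Part~1 applies, concluding via the additivity result \cite{MR0158666}*{Th.~1,~(3)}.
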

 \begin{proof}
 We split the proof into several parts. We can rely on well-established properties of the field algebra of the Klein-Gordon field such as additivity of the net (\cite{MR0158666}*{Th. 1, (3), Equ. (3.12)}), Einstein causality, and the time-slice property. These can be inferred directly from properties of the CCR algebra and their representations as laid out in \cite{MR0158666} and the well known properties of the solutions of the Klein-Gordon equation.
 \\
 
\underline{Check that $S$ defines a unitary connection:}
 Since the canonical implementer $S_c(g',g_0)$ is unitary we have $S(g,g)=\mathbf{1}$ but also
 path additivity:
 $$
  S(g'',g) = S_c(g'',g_0) S_c(g,g_0)^* = S_c(g'',g_0) S_c(g',g_0)^*  S_c(g',g_0) S_c(g,g_0)^* = S(g'',g') S(g',g).
 $$
 We will now show strong continuity for smooth paths $\gamma(s)$. Smooth families of paths result in smooth families of symplectic maps
 of the form
 \begin{align}
  W_s = \left( \begin{matrix} q_s & \overline{r}_s \\ r_s & \overline{q}_s \end{matrix} \right).
 \end{align}
 As we have seen in the proof of Theorem \ref{fptheorem} the maps $q_s$ are strongly continuous in $s$ and the  $r_s$ are smooth families of Hilbert-Schmidt operators. The statement now follows from Prop. \ref{strongprop}.
 This shows properties (1) - (5) of Definition \ref{defconnection}. The first order condition (6) needs to be considered separately and is in fact a computation 
 of the derivative. Let therefore $I\subset\mathbb R^N$ be a neighbourhood of $0$, and $h: I \to \Gamma^\infty_0(S^2T^*M)$ be a smooth family of symmetric two-tensors with $h(0)=0$, compactly supported between the Cauchy surfaces $\Sigma_-$ and $\Sigma_+$.
We set
$$
  g_s=g+h(s),
  \qquad
  P_s=P_{g_s},
  \qquad
  W_s=W(g_s,g),
$$
and write
$$
  \dot h=
  \left.\frac{\der}{\der s}\right|_{s=0}h(s),
  \qquad
  \dot P=
  \left.\frac{\der}{\der s}\right|_{s=0}P_s .
$$
We first compute the derivative of the classical scattering map $W_s$ at
$s=0$.

Let $u_s$ be the solution of $P_su_s=0$
with fixed Cauchy data on the past Cauchy surface $\Sigma_-$. Differentiating
at $s=0$ gives
$$
  P_g\dot u+\dot P u=0,
  \qquad
  u=u_0.
$$
Since the Cauchy data on $\Sigma_-$ are fixed, $\dot u$ has vanishing Cauchy
data on $\Sigma_-$ and therefore
$$
  \dot u=-G_{g,\mathrm{ret}}\dot P u.
$$
It follows that the derivative of the forward Cauchy evolution $V(g_s)$ from
$\Sigma_-$ to $\Sigma_+$ is
$$
  \dot V R_{\Sigma_-}u
  =
  -R_{\Sigma_+}G_{g,\mathrm{ret}}\dot P u .
$$
Since
$$
  W_s=V(g_s)^{-1}V(g),
$$
we get
$$
  \dot W=-V(g)^{-1}\dot V.
$$
Thus
$$
  \dot W R_{\Sigma_-}u
  =
  V(g)^{-1}R_{\Sigma_+}G_{g,\mathrm{ret}}\dot P u .
$$
The homogeneous solution whose future Cauchy data agree with
$G_{g,\mathrm{ret}}\dot P u$ is $G_g\dot P u$, because the advanced term
vanishes to the future of $\supp(\dot P u)$. Hence $
  \dot W R_{\Sigma_-}u
  =
  R_{\Sigma_-}G_g\dot P u.$
In particular, for $u=G_gf$,
$$
  \dot W R_{\Sigma_-}G_gf
  =
  R_{\Sigma_-}G_g\dot P G_gf .
$$

We now differentiate the canonical implementer. As before write
$$
  W_s=
  \begin{pmatrix}
    q_s &   \overline{r_s}\\
   r_s & \overline{q_s}
  \end{pmatrix},
  \qquad
  q_0=\mathbf 1,
  \qquad
  r_0=0.
$$
Let
$$
  \dot q=
  \left.\frac{\der}{\der s}\right|_{s=0}q_s,
  \qquad
  \dot r=
  \left.\frac{\der}{\der s}\right|_{s=0}r_s .
$$
The canonical implementer is
$$
  U_s
  =
  \left(\det(1-K_s^*K_s)\right)^{1/4}
  e^{-\frac12a^*(K_s)}
  \Gamma((q_s^{-1})^*)
  e^{\frac12a(L_s)},
$$
where
$$
  K_s=\overline{r_sq_s^{-1}},
  \qquad
  L_s=-q_s^{-1}\overline{r_s}.
$$
Since $q_0=\mathbf 1$ and $r_0=0$, we have
$$
  K_0=0,
  \qquad
  L_0=0,
$$
and therefore
$$
  \dot K=\overline{\dot r},
  \qquad
  \dot L=-\overline{\dot r}.
$$
Moreover, the determinant factor has vanishing first derivative at $s=0$,
because $K_s^*K_s$ is second order in $s$. Hence, on smooth simple finite particle subspace $\mathcal{F}_{\sfp}(H_J^\infty)$,
$$
  \left.\frac{\der}{\der s}\right|_{s=0}U_s
  =
  -\frac12a^*(\dot K)
  +
  \der\Gamma(-\dot q^*)
  +
  \frac12a(\dot L).
$$
Equivalently,
$$
  \left.\frac{\der}{\der s}\right|_{s=0}U_s
  =
  -\frac12a^*(\overline{\dot r})
  +
  \der\Gamma(-\dot q^*)
  -
  \frac12a(\overline{\dot r}).
$$

This expression depends linearly only on the first variation $\dot W$, and hence only
on the first variation $\dot h$ of the metric, again in a linear fashion. This shows the first order condition.

\underline{Covariance \eqref{defstress:covariance}:}\\ This follows directly from the functorial properties of the Klein-Gordon operator, namely for any diffeomorphism $\phi: M \to M$ we have
\begin{align}
 P_g( \phi^* f) = \phi^*(P_{\phi^* g} f).
\end{align}
In particular, if $\phi$ is supported between two Cauchy hypersurfaces the Cauchy evolution maps between these surfaces for the metrics $g$ and $\phi^* g$ are identical.

\underline{Causality \eqref{defstress:causality}:}\\
We start with \eqref{defstress:causality}{\sl (3)}.

We fix $t \in [a,b]$ and construct $\Sigma_1, \Sigma_2, \Sigma_3$
depending on $t$. The assumptions and Prop.~\ref{sepper} imply the
existence of a Cauchy hypersurface $\Sigma_2$ with respect to the
metric $g+h_2(t)$ separating $\supp h_1(t)$ and $\supp h_3(t)$. Since
$\Sigma_2$ is disjoint from $\supp h_1(t) \cup \supp h_3(t)$, the four
metrics $g+h_2(t)$, $g+h_1(t)+h_2(t)$, $g+h_2(t)+h_3(t)$, 
$g+h_1(t)+h_2(t)+h_3(t)$ agree near $\Sigma_2$, and as they agree outside compact sets disjoint
from $\Sigma_2$ we have that $\Sigma_2$ is a Cauchy hypersurface for all
four metrics.
The formula for \eqref{defstress:causality}{\sl (3)} will be verified now for the symplectic map $W$.
We let
\begin{align}
V_1(h)&: C^\infty_0(\Sigma_1) \oplus C^\infty_0(\Sigma_1) \to C^\infty_0(\Sigma_2) \oplus C^\infty_0(\Sigma_2),\\
V_2(h)&: C^\infty_0(\Sigma_2) \oplus C^\infty_0(\Sigma_2) \to C^\infty_0(\Sigma_3) \oplus C^\infty_0(\Sigma_3),\\
V(h)&: C^\infty_0(\Sigma_1) \oplus C^\infty_0(\Sigma_1) \to C^\infty_0(\Sigma_3) \oplus C^\infty_0(\Sigma_3),
\end{align}
be the Cauchy evolution maps with respect to the metric $g+h$. We then have, of course, the factorisation $V(h) = V_2(h) \circ V_1(h)$ but also
\begin{align}
 V(h_1+h_2+h_3) &= V_2(h_2+h_3) \circ V_1(h_1+h_2),\\
 V(h_1+h_2) &= V_2(h_2) \circ V_1(h_1+h_2),\\
 V(h_2+h_3) &= V_2(h_2+h_3) \circ V_1(h_2).
\end{align}
With $W(h) = V(0)^{-1} V(h)$ this shows
\begin{align}
 W(h_1+h_2 +h_3) = W(h_2+h_3) W(h_2)^{-1} W(h_1+h_2).
\end{align}
Applying the canonical implementers in Fock space shows the relation for the scattering matrices up to a phase factor.

It remains to show the other two properties {\sl \eqref{caus1}} and
{\sl \eqref{caus2}} of causality \eqref{defstress:causality}. 
In this proof we briefly denote the Pauli-Jordan distribution, 
the difference between retarded and advanced fundamental solutions, by $G_g$ to emphasise the dependence on the metric $g$.
We note that
the map $R_\Sigma\circ G_g$, restricted to test functions supported in
$\calO$, only depends on the metric in the causal region between $\calO$
and $\Sigma$. If $\Sigma_-$ is chosen in the far past of the support of
the metric perturbation, this region intersects the support of the perturbation
only if the perturbation intersects $J^-_g(\calO)$. This proves
{\sl \eqref{caus1}}.

To prove {\sl \eqref{caus2}}, we choose instead a Cauchy surface $\Sigma_+$
in the far future of the support of the perturbation. If
$$
  \supp(g'-g)\cap J^+_g(\calO)=\emptyset,
$$
then for every $f\in C^\infty_0(\calO)$ the future Cauchy data
$$
  R_{\Sigma_+}G_g(f)
$$
are the same for the metrics $g$ and $g'$. Let $V(g)$, respectively
$V(g')$, denote the Cauchy evolution map from $\Sigma_-$ to $\Sigma_+$ for the
metrics $g$, respectively $g'$, c.f. Section \ref{implsecdiff}. Then
$$
  V(g)R_{\Sigma_-}G_g(f)
  =
  R_{\Sigma_+}G_g(f)
  =
  R_{\Sigma_+}G_{g'}(f)
  =
  V(g')R_{\Sigma_-}G_{g'}(f).
$$
Hence
$$
  R_{\Sigma_-}G_{g'}(f)
  =
  V(g')^{-1}V(g)R_{\Sigma_-}G_g(f).
$$
The scattering matrix $S(g',g)$ implements the symplectic map
$V(g')^{-1}V(g)$. Therefore
$$
  \calA_{g'}(\calO)
  =
  S(g',g)\calA_g(\calO)S(g',g)^*,
$$
which is precisely {\sl \eqref{caus2}}.

\underline{Locality \eqref{defstress:locality}:}\\
We first prove $\eqref{loc2}$ under the additional assumption that the support
of the metric perturbation is sufficiently small. Let $h$ be compactly
supported in $\calO$, and assume that $\supp h$ is contained in $D(\Sigma_0)$
for some relatively compact open subset $\Sigma_0$ of a Cauchy surface
$\Sigma$, with $\overline{\Sigma_0}\subset\calO$. Let
$$
  W:C^\infty_0(\Sigma)\oplus C^\infty_0(\Sigma)
  \longrightarrow
  C^\infty_0(\Sigma)\oplus C^\infty_0(\Sigma)
$$
be the symplectic map induced by the change from $g$ to $g+h$. By finite
propagation speed, $W-\mathbf 1$ is supported in $\Sigma_0$. Equivalently,
$W$ acts trivially on
$$
  C^\infty_0(\overline{\Sigma_0}^{\,c})
  \oplus
  C^\infty_0(\overline{\Sigma_0}^{\,c}).
$$
Hence the implementer $S(g+h,g)$ commutes with all Weyl operators $B(v)$ with
$$
  v\in
  C^\infty_0(\overline{\Sigma_0}^{\,c})
  \oplus
  C^\infty_0(\overline{\Sigma_0}^{\,c}).
$$
By Haag duality, see \cite{MR0158666}*{Th. 1, (5), Equ. (3.14)}, it follows
that $S(g+h,g)$ belongs to the von Neumann algebra generated by the Cauchy data
supported in some open neighbourhood $\tilde\Sigma_0$ of $\overline{\Sigma_0}$
with $\overline{\tilde\Sigma_0}\subset\calO$. By the local time-slice property,
these Cauchy data are generated by fields $Gf$ with $f\in C^\infty_0(\calO)$.
Thus
$$
  S(g+h,g)\in\calA_g(\calO).
$$
This proves $\eqref{loc2}$ for perturbations with sufficiently small support.

We now prove $\eqref{loc1}$. Let $g'=g+h$ with $h$ compactly supported in
$\calO$. By isotony and additivity of the net, it is enough to prove the
assertion for relatively compact open subsets of $\calO$. Thus we may assume
without loss of generality that $\overline{\calO}$ is compact. We also reduce
to small changes of the metric: choose a piecewise smooth path in
$\mathcal M$ from $g$ to $g'$, and subdivide it if necessary. It is enough to
prove the assertion for one sufficiently small step. By compactness of the
curve parameter space we may assume that $g+h$ lies in a sufficiently small
$C^1$-neighbourhood $\calN_0$ of $g$ in $\mathcal M$, so that all pointwise convex combinations
$g+\chi h$, $0\leq \chi \leq1$, are globally hyperbolic metrics in
$\mathcal M$, by Propositions \ref{prop-app-1} and \ref{prop-app-2}.

We next reduce to perturbations supported in a sufficiently thin slab. Let
$M=\mathbb R_t\times\Sigma$ be a smooth temporal splitting for the metrics under consideration. For $t\in\mathbb R$ and $\epsilon>0$, set
$$
  C_{t,\epsilon}
  =
  \calO\cap\bigcup_{\tau\in(t-\epsilon,t+\epsilon)}\Sigma_\tau .
$$
The sets $C_{t,\epsilon}$ cover $\supp h$. By compactness, we may choose a
finite subcover and a partition of unity $(\chi_j)$ subordinate to it. We
choose the partition so that $\chi_j\geq0$ and $\sum_j\chi_j=1$ on $\supp h$.
Then the partial sums $\sum_{j\leq k}\chi_j$ take values in $[0,1]$. If the
claim is proved for perturbations supported in such thin slabs, then applying
it successively to the partial sums
$$
  g_k=g+\sum_{j\leq k}\chi_jh
$$
gives
$$
  \calA_g(\calO)=\calA_{g+h}(\calO).
$$
It remains to prove the thin-slab case.

Let $C_{t,\epsilon}$ be one of the slabs defined above, and set
$$
  K_{t,\epsilon}=\supp h\cap C_{t,\epsilon}.
$$
After decreasing $\epsilon>0$, and after shrinking the neighbourhood
$\calN_0$ of $g$ in $\mathcal M$ if necessary, we may assume that for every
metric $\tilde g\in\calN_0$,
$$
  J^-_{\tilde g}(K_{t,\epsilon})\cap\Sigma_{t-2\epsilon}
  \subset
  \calO\cap\Sigma_{t-2\epsilon}.
$$
We choose a relatively compact open neighbourhood
$\Sigma_0\subset\calO\cap\Sigma_{t-2\epsilon}$ of
$$
  \bigcup_{\tilde g\in\calN_0}
  \left(J^-_{\tilde g}(K_{t,\epsilon})\cap\Sigma_{t-2\epsilon}\right).
$$

\begin{figure}[h!]
\centering
\begin{tikzpicture}[scale=0.85]

  % Bubble
  \def\bubblepath{(0,1.1) ellipse (6.4 and 3.5)}

  % Outer region O
  \filldraw[thin,fill=gray!10,draw=gray!50,opacity=0.7] \bubblepath;

  % Future and past cones J^\pm(K_chi), clipped to O
  \begin{scope}
    \clip \bubblepath;
    \filldraw[fill=gray!35,draw=gray!60,opacity=0.5]
       (-0.7,1.0) -- (-3.6,4.7) -- (3.6,4.7) -- (0.7,1.0) -- cycle;
    \filldraw[fill=gray!35,draw=gray!60,opacity=0.5]
       (-0.7,1.0) -- (-4.0,-3.5) -- (4.0,-3.5) -- (0.7,1.0) -- cycle;
  \end{scope}

  % Slab C_{t,epsilon}
  \fill[gray,opacity=0.18] (-6.5,0.7) rectangle (6.5,1.3);
  \draw[thin,gray!70,dashed] (-6.5,0.7) -- (6.5,0.7);
  \draw[thin,gray!70,dashed] (-6.5,1.3) -- (6.5,1.3);

  % Region O_0: ellipse inside the diamond
  \filldraw[thin,fill=gray!55,draw=gray!70,opacity=0.6]
     (0,1.05) ellipse (0.95 and 0.75);

  % D^+(Sigma_0): dashed triangular boundary (drawn last so visible)
  \draw[dashed,thick,gray!70] (-2.6,0.0) -- (0,2.6) -- (2.6,0.0);

  % K_chi (inside O_0)
  \fill[gray!90,opacity=0.9] (0,1.0) ellipse (0.5 and 0.28);

  % Sigma_0
  \draw[line width=1.6pt] (-2.6,0.0) -- (2.6,0.0);

  % Labels
  \node at (-5.5,3.4) {$\calO$};
  \node at (0,1.0) {$K_\chi$};
  \node[anchor=west] at (0.0,1.5) {$\calO_0$};
  \node at (-5.0,1.7) {$\calO_+$};
  \node at (5.0,1.7) {$\calO_+$};
  \node at (-5.0,-0.) {$\calO_-$};
  \node at (5.0,-.0) {$\calO_-$};
  \node[anchor=north] at (-1.5,-0.05) {$\Sigma_0$};
  \node[anchor=west] at (6.55,1.0) {$C_{t,\epsilon}$};
  %\node[anchor=south west] at (1.6,1.7) {$D^+(\Sigma_0)$};

\end{tikzpicture}
\caption{Illustration of the thin-slab argument. The perturbation is supported in the compact set $K_\chi$ inside the slab $C_{t,\epsilon}$, and the two cones depict the causal future and past of $K_\chi$ with respect to $g$. The region $\calO_0\subset D^+(\Sigma_0)$ contains $K_\chi$. The region $\calO$ is covered by $\calO_-$, $\calO_0$, $\calO_+$.}\label{thinslab}
\end{figure}
We choose a relatively compact open subset $\calO_0\subset\calO$ such that
$$
  K_{t,\epsilon}\subset\calO_0 \quad \textrm{and} \quad
  \calO_0\subset D^+_{\tilde g}(\Sigma_0)
$$
for all metrics $\tilde g\in\calN_0$. After decreasing $\epsilon$ and shrinking
$\calN_0$ if necessary, we also choose $\calO_0$ so that
$$
  \calO\cap J^+_{\tilde g}(K_{t,\epsilon})
  \cap J^-_{\tilde g}(K_{t,\epsilon})
  \subset \calO_0
$$
for all metrics $\tilde g\in\calN_0$. 

For every cutoff function $\chi$ subordinate to $C_{t,\epsilon}$, we have
$$
  K_\chi:=\supp(\chi h)\subset K_{t,\epsilon}.
$$
Define
$$
  \calO_-
  =
  \calO\cap\left(M\setminus J^+_g(K_\chi)\right),
  \qquad
  \calO_+
  =
  \calO\cap\left(M\setminus J^-_g(K_\chi)\right).
$$
By construction, the three open sets $\calO_-$, $\calO_0$, $\calO_+$ cover
$\calO$.

On the middle region $\calO_0$, the local time-slice property for the metric
$g+\chi h$ gives
$$
  \calA_{g+\chi h}(\calO_0)
  \subset
  \calA_{g+\chi h}(\calV),
$$
where $\calV\subset\calO$ is a sufficiently small neighbourhood of
$\Sigma_0$ in $\calO$ chosen so that
$\supp(\chi h)$ does not intersect $J^-_g(\calV).$
This is possible because $\Sigma_0\subset\Sigma_{t-2\epsilon}$ lies below the
thin slab containing $\supp(\chi h)$. Hence causality $\eqref{caus1}$ gives

Since $\calV$ lies in the past of $\supp(\chi h)$, causality
$\eqref{caus1}$ gives
$$
  \calA_{g+\chi h}(\calV)=\calA_g(\calV).
$$
Therefore
$$
  \calA_{g+\chi h}(\calO_0)
  \subset
  \calA_g(\calV)
  \subset
  \calA_g(\calO).
$$

On the region $\calO_-$, the perturbation is not in the causal past of
$\calO_-$. Hence causality $\eqref{caus1}$ gives
$$
  \calA_{g+\chi h}(\calO_-)
  =
  \calA_g(\calO_-)
  \subset
  \calA_g(\calO).
$$

On the region $\calO_+$, the perturbation is not in the causal future of
$\calO_+$. Hence causality $\eqref{caus2}$ gives
$$
  \calA_{g+\chi h}(\calO_+)
  =
  S(g+\chi h,g)\calA_g(\calO_+)S(g+\chi h,g)^*.
$$
By the small-support case of locality $\eqref{loc2}$ 
we already know that $S(g+\chi h,g)\in\calA_g(\calO).$
Since $
  \calA_g(\calO_+)\subset\calA_g(\calO),$
we conclude that $
  \calA_{g+\chi h}(\calO_+)\subset\calA_g(\calO).$

By additivity, the three inclusions imply $
  \calA_{g+\chi h}(\calO)\subset\calA_g(\calO).$
Applying the same argument with $g$ and $g+\chi h$ interchanged gives the
reverse inclusion. Hence
$$
  \calA_{g+\chi h}(\calO)=\calA_g(\calO).
$$
This proves the thin-slab case, and hence $\eqref{loc1}$.

It remains to remove the small-support assumption in $\eqref{loc2}$. Let $h$
be any compactly supported metric perturbation in $\calO$. Choose a finite
decomposition
$h=h_1+\cdots+h_N$
such that each $h_j$ has sufficiently small support in the sense used above,
and such that the partial sums
$g_k=g+h_1+\cdots+h_k$
belong to $\mathcal M$. Then $S(g+h,g) = S(g_N,g_{N-1})\cdots S(g_1,g_0).$
By the small-support case of $\eqref{loc2}$, each factor belongs to
$\calA_{g_{k-1}}(\calO)$. By $\eqref{loc1}$, already proved above,
$$
  \calA_{g_{k-1}}(\calO)=\calA_g(\calO).
$$
Hence all factors belong to $\calA_g(\calO)$, and therefore
$$
  S(g+h,g)\in\calA_g(\calO).
$$
This proves $\eqref{loc2}$ in general.
\end{proof}
 
\begin{rem}
 The above construction results in a scattering matrix $S(\gamma)$ that does not depend on the chosen path but only on the endpoints. The price to pay for this is the dependence on the reference metric $g_0$. It would be interesting to see if there are other more natural choices, in particular allowing for the causality relation to be satisfied exactly without the phase factor correction. The choice of phase factor corresponds to the choice of additive constants in the definition of the stress energy tensor.

It is instructive to see what the stress energy tensor is concretely for the canonical implementer.
Using
$$
  \dot W R_{\Sigma_-}G_gf
  =
  R_{\Sigma_-}G_g\dot P G_gf
$$
and the canonical commutation relations, the resulting quadratic operator is, up to scalar multiple of the identity,
the normally ordered expression obtained from the first variation of the
classical Klein-Gordon action in the direction $\dot h$. Therefore
$$
  \left.\frac{\der}{\der s}\right|_{s=0}U_s
  =
  \rmi \; \tilde T_g(\dot h)
$$
on the smooth simple finite particle vectors. Here $\tilde T_g(\dot h)$ is the normally ordered stress energy tensor, normally ordered with respect to the Hadamard state.
This computation is carried out in the proof of \cite{MR2007173}*{Th. 4.3} in the weak sense of quadratic forms.
Now we had set $S(\gamma) = S_c(g',g_0) S_c(g,g_0)^*,$
with respect to a reference metric $g_0$. This shows that
$$
 T_g(\dot h) = \frac{\der}{\der s }|_{s=0}(S)(g_s,g) = S_c(g,g_0) \tilde T_g(\dot h) S_c(g,g_0)^*,
$$
which is the classical stress energy tensor, normally ordered with respect to the Hadamard state obtained by evolving from the reference metric, using $S_c(g,g_0)$.
\end{rem}

\appendix

\section{Some properties of globally hyperbolic spacetimes}

By a spacetime we will mean a smooth Lorentzian manifold $(M,g)$ that is oriented and time-oriented.
A spacetime is called globally hyperbolic if it admits a Cauchy surface $\Sigma$, i.e. a smooth spacelike \footnote{We restrict ourselves here to smooth spacelike Cauchy surfaces.} hypersurface that is met by every inextendible timelike curve exactly once. 
We define curves to be piecewise $C^1$ maps from an interval to $M$. A curve $\gamma: (a,b) \to M$ is called inextendible if it has no endpoint, i.e.
neither $\lim_{s \to a} \gamma(s)$ nor $\lim_{s \to b} \gamma(s)$ exist.
A function $T: M \to \R$ is called a temporal function if $\der T(X)>0$ for each future directed causal vector in $M$. A temporal function is called a Cauchy temporal function if the image of $T \circ \gamma$ is all of $\R$ for any inextendible causal curve $\gamma$.

In the following we will assume that $(M,g)$ is a globally hyperbolic spacetime.
By results of Geroch (\cite{MR0270697}) and Bernal-Sanchez (\cite{MR2254187,MR2029362})  there is a smooth Cauchy temporal function $t$ that induces a splitting $M = \R_t \times \Sigma$ as a smooth manifold such that each level set $\Sigma_t = \{t\} \times \Sigma$ is a Cauchy hypersurface and the metric takes the simple form
\begin{align}
 g = N^2 \der t^2 -h_t,
\end{align}
where $h_t$ is a smooth family of Riemannian metrics on $\Sigma$, and $N$ is a smooth positive function on $M$. Moreover, given any spacelike smooth Cauchy hypersurface one can choose the above so that this Cauchy hypersurface is the zero set of the Cauchy temporal function, in other words gets identified with $\{0\} \times \Sigma$.
The manifold then splits into past and future of the Cauchy surface as $M = M^+ \cup M^-,  M^+ \cap M^- = \Sigma$ with
$M^\pm = J^\pm(\Sigma)$. 

If $A \subset M$ is a closed subset of a globally hyperbolic spacetime $M$ then the sets $J^\pm(A)$ and $J(A)= J^+(A) \cup J^-(A)$ are closed (see \cite{ON}*{Lemma 22}). We say that two closed subsets $A_1,A_2 \subset M$ are causally separated if $A_1 \cap J(A_2)= \emptyset$ (or equivalently $A_2 \cap J(A_1)= \emptyset$). 
We also note that $A_2 \cap J^\pm(A_1) = \emptyset$ if and only if $A_1 \cap J^\mp(A_2) = \emptyset$.

\begin{proposition} \label{sepper}
 Let $A_1,A_2$ be compact subsets of a globally hyperbolic spacetime $M$.
 In case $A_2 \cap J^+(A_1) = \emptyset$ the sets $A_1$ and $A_2$ can be separated by a Cauchy surface, i.e. there exists a Cauchy surface $\Sigma$ so that
 $A_1 \subset J^+(\Sigma)$ and $A_2 \subset J^-(\Sigma)$.
\end{proposition}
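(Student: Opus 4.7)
The strategy is to produce a smooth Cauchy temporal function $\tau\colon M\to\R$ with $\tau>0$ on $A_1$ and $\tau<0$ on $A_2$; then $\Sigma:=\tau^{-1}(0)$ is a smooth spacelike Cauchy hypersurface with $A_1\subset I^+(\Sigma)\subset J^+(\Sigma)$ and $A_2\subset I^-(\Sigma)\subset J^-(\Sigma)$ as required. Two consequences of the hypothesis are useful at the outset. First, since $A_1\subset J^+(A_1)$, the assumption $A_2\cap J^+(A_1)=\emptyset$ forces $A_1\cap A_2=\emptyset$. Second, $J^+(A_1)\cap J^-(A_2)=\emptyset$: any $p$ in the intersection would yield $a_1\in A_1$ and $a_2\in A_2$ with $a_1\le p\le a_2$, so $a_2\in J^+(A_1)$, a contradiction. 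Both $J^+(A_1)$ and $J^-(A_2)$ are closed as causal futures/pasts of compact sets in a globally hyperbolic spacetime, so $A_1$ is disjoint from the closed set $J^-(A_2)$ and admits an open neighborhood $U_1\supset A_1$ with $\overline{U_1}\cap J^-(A_2)=\emptyset$; symmetrically there is an open $U_2\supset A_2$ with $\overline{U_2}\cap J^+(A_1)=\emptyset$, and after shrinking we may take $U_1\cap U_2=\emptyset$.

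By the Bernal--Sanchez splitting theorem \cite{MR2254187,MR2029362}, I would fix a smooth Cauchy temporal function $\tau_0\colon M\to\R$, and seek $\tau$ of the form $\tau=\tau_0+\phi$ where $\phi$ is a smooth bounded function, positive on a neighborhood of $A_1$ and negative on a neighborhood of $A_2$, whose differential has small enough Lorentzian norm that $d\tau$ remains past-directed timelike. Such a $\tau$ is automatically Cauchy temporal: its level sets are smooth spacelike hypersurfaces by temporality, and along any inextendible causal curve $\gamma$ the difference $\tau\circ\gamma-\tau_0\circ\gamma=\phi\circ\gamma$ is bounded, so $\tau\circ\gamma$ attains every real value because $\tau_0\circ\gamma$ does. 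Concretely, I would take smooth cutoffs $\psi_i\in C^\infty(M,[0,1])$ supported in $U_i$ with $\psi_i=1$ on a neighborhood of $A_i$, set $\phi=c(\psi_1-\psi_2)$, and choose $c>\sup_{A_1\cup A_2}|\tau_0|$; then $\tau|_{A_1}=\tau_0|_{A_1}+c>0$ and $\tau|_{A_2}=\tau_0|_{A_2}-c<0$, so $\Sigma=\tau^{-1}(0)$ is the desired Cauchy hypersurface.

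The main obstacle is the simultaneous fulfilment of the lower bound on $c$ required for the sign separation and the upper bound on $c\,|d\psi_i|$ imposed by temporality of $d\tau_0+c(d\psi_1-d\psi_2)$. This is resolved by choosing the cutoffs $\psi_i$ with sufficiently broad transition regions inside the (not necessarily compact) open sets $U_i$, so that their Lorentzian gradients can be made arbitrarily small. The sets $U_i$ need not be relatively compact—only $\phi$ itself must be bounded—and since $J^\pm(A_i)$ extends to timelike infinity, there is ample room for the transitions in the timelike direction. The disjointness $U_1\cap U_2=\emptyset$ ensures that the contributions near $A_1$ and $A_2$ do not interact, so the sign budget at $A_i$ reduces to the single contribution $\pm c$ and the construction goes through.
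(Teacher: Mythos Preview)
Your overall strategy---build a Cauchy temporal function $\tau$ with $\tau>0$ on $A_1$ and $\tau<0$ on $A_2$---is sound and genuinely different from the paper's route, but the execution has a real gap at the point you yourself flag as the ``main obstacle.'' The assertion that one can choose $\psi_i$ with ``sufficiently broad transition regions \ldots\ so that their Lorentzian gradients can be made arbitrarily small'' is not justified and is not obviously true. The temporality condition for $d\tau_0+c\,d\phi$ is a pointwise comparison between $c\,d\phi$ and the opening of the light cone at $d\tau_0$; on the non-compact support of $d\psi_i$ the geometry (the lapse $N$ and the spatial metric $h_t$ in the Bernal--S\'anchez splitting) is uncontrolled, so ``small $d\psi_i$'' in any auxiliary Riemannian norm need not be small relative to that opening. ``Ample room in the timelike direction'' does not help when $A_1$ and $A_2$ overlap in $\tau_0$-time (e.g.\ spacelike separated points), since then $\psi_i$ must carry spatial gradients, and those are exactly what can spoil temporality. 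A clean sufficient condition would be that $d\psi_1$ is everywhere future-causal and $d\psi_2$ everywhere past-causal, but producing such cutoffs with the required supports is essentially as hard as the proposition itself.

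For comparison, the paper avoids the perturbation issue entirely. It observes that $M\setminus J^+(A_1)$ is open, contains $A_2$, and is itself globally hyperbolic (any Cauchy surface of $M$ lying strictly to the past of $A_1$ is still Cauchy there, since an inextendible timelike curve in the subset either is inextendible in $M$ or has a \emph{future} endpoint on $J^+(A_1)$). One then repeats the trick inside this smaller spacetime to pass to the future of $A_2$, obtaining a Cauchy surface $\Sigma$ of $M\setminus J^+(A_1)$ disjoint from $J^-(A_2)$; a final inextendibility argument shows that $\Sigma$ is Cauchy in $M$ as well. This ``excise $J^+(A_1)$, foliate the remainder, excise $J^-(A_2)$'' approach needs no gradient estimates and no control of the metric at infinity---only the basic causal facts that $J^\pm$ of a compact set is closed and that foliations by Cauchy surfaces exist. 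If you want to salvage your route, you would need either a genuine construction of monotone (causal-gradient) cutoffs supported in $U_i$, or to invoke one of the stronger Bernal--S\'anchez extension theorems rather than a bare perturbation argument.
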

\begin{proof}
 Consider the open subset $M \setminus J^+(A_1)$ and note that $A_2$ is a compact subset in it. Next consider any Cauchy surface $\Sigma_-$ in $M$ that does not intersect 
 $J^+(A_1)$. Such a Cauchy surface exists because $A_1$ is compact and therefore for any foliation $\R_t \times \Sigma_0$ as above and $t$ sufficiently negative the Cauchy surface in $M$ associated to $\{t\} \times \Sigma_0$ does not intersect $A_1$. Choosing this Cauchy surface in the past of $A_1$ it will then not intersect $J^+(A_1)$. We now argue that $\Sigma_-$ is also a Cauchy surface in $M \setminus J^+(A_1)$. Consider an inextendible timelike curve in $M \setminus J^+(A_1)$. Such a curve is either inextendible as a curve in $M$ or an endpoint is in $J^+(A_1)$. Such an endpoint must be a future endpoint and therefore the curve intersects $\Sigma_-$ precisely once. 
Since $M \setminus J^+(A_1)$ admits a Cauchy surface it is globally hyperbolic and $A_2$ is a compact subset in it. We now apply the same argument to the spacetime $N= M \setminus J^+(A_1)$ with its time-orientation reversed and apply the argument to the compact subset $A_2 \subset N$. Since $J_N^-(A_2)= J^-(A_2) \subset N$, we conclude that
 there exists a Cauchy surface in $N \setminus J^-(A_2) = (M \setminus J^+(A_1)) \setminus J^-(A_2)$. 
 
 Finally we show that any Cauchy surface $\Sigma_N$ in $N = M \setminus J^+(A_1)$ is also a Cauchy surface in $M$. 
 Indeed, let $\gamma$ be an inextendible timelike
curve in $M$. Since $A_1$ is compact, $\gamma$ cannot be entirely
contained in $J^+(A_1)$. Hence $\gamma\cap N$ is a non-empty initial
segment of $\gamma$. This segment is inextendible as a timelike curve in
$N$. Therefore it
meets the Cauchy surface of $N$ exactly once. Consequently $\gamma$ itself
intersects $\Sigma_N$ exactly once.
The same two-step argument as above now also shows that any Cauchy surface $\Sigma$ in $(M \setminus J^+(A_1)) \setminus J^-(A_2)$ is also a Cauchy surface in $M$. The proof is finished. 
\end{proof}
 
The following is an expression of the fact that metric changes do not change the causality relations in causally separate regions.
 
\begin{proposition}
 Assume that two Lorentzian metrics $g',g$ coincide outside a compact set $K \subset M$ and assume that $(M,g)$ and $(M,g')$ are globally hyperbolic spacetimes with coinciding time-orientations outside of $K$. Assume $A \subset M$ is a compact set with $A \cap J_g^+(K) =A \cap J_{g'}^+(K) = \emptyset$. Then $J^-_{g'}(A) = J^-_g(A)$. 
\end{proposition}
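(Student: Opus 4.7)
The plan is to exploit the fact that on $M \setminus K$ the two metrics and their time-orientations coincide, so any causal curve lying entirely in $M \setminus K$ is simultaneously $g$-causal and $g'$-causal (with the same future direction). The hypothesis $A \cap J^+(K) = \emptyset$, which I read as an assertion about $J^+_g(K)$, will be used to prevent a causal curve landing in $A$ from ever entering $K$.

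\textbf{Step 1: symmetrise the hypothesis.} I first show $A \cap J^+_{g'}(K) = \emptyset$. Suppose to the contrary that there is a future-directed $g'$-causal curve $\gamma\colon[0,1]\to M$ with $\gamma(0)\in K$ and $\gamma(1)=a\in A$. Since $a\notin K$ (else $a\in K\subset J^+_g(K)$) and $K$ is closed, $t_0:=\sup\{t : \gamma(t)\in K\}$ lies in $[0,1)$ and $\gamma(t_0)\in K$. For every sufficiently small $\epsilon>0$ the restriction $\gamma|_{[t_0+\epsilon,1]}$ lies in $M\setminus K$; since metrics and time-orientations coincide there, this piece is also a future-directed $g$-causal curve, so $\gamma(t_0+\epsilon)\in J^-_g(\{a\})$. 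The set $J^-_g(\{a\})$ is closed by global hyperbolicity of $(M,g)$ (cf.\ \cite{ON}*{Lemma 22}), so passing $\epsilon\to 0^+$ gives $\gamma(t_0)\in J^-_g(\{a\})$, i.e.\ $a\in J^+_g(K)$ --- contradicting the hypothesis.

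\textbf{Step 2: prove the two inclusions by the same template.} Take $p\in J^-_g(A)$ and a future-directed $g$-causal curve $\gamma\colon[0,1]\to M$ with $\gamma(0)=p$, $\gamma(1)=a\in A$. If $\gamma(t_1)\in K$ for some $t_1<1$, then $\gamma|_{[t_1,1]}$ would be a $g$-causal curve from a point of $K$ to $a$, forcing $a\in J^+_g(K)$ and contradicting the hypothesis. Hence $\gamma$ lies entirely in $M\setminus K$, where the two causal structures coincide, so $\gamma$ is also a future-directed $g'$-causal curve and $p\in J^-_{g'}(A)$. The reverse inclusion $J^-_{g'}(A)\subseteq J^-_g(A)$ is proved by exactly the same argument with the roles of $g$ and $g'$ interchanged, using $A\cap J^+_{g'}(K)=\emptyset$ established in Step 1.

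\textbf{Main obstacle.} The only delicate point is the boundary behaviour at $\gamma(t_0)\in K$ in Step 1: the tangent vector there is $g'$-causal but might fail to be $g$-causal, so one cannot simply read off $g$-causality of $\gamma$ at $t_0$. The fix is to approach $\gamma(t_0)$ from strictly inside $M\setminus K$ and invoke the closedness of the causal past of a point in a globally hyperbolic spacetime. Once this symmetrisation is achieved, the two inclusions are a routine application of the ``causal curves to $A$ cannot cross $K$'' principle.
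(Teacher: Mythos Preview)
Your proof is correct and follows essentially the same approach as the paper: both argue that any causal curve ending in $A$ must avoid $K$ (else the endpoint would lie in $J^+(K)$), so such curves are causal for both metrics simultaneously. Your Step~1 makes explicit the symmetrisation $A \cap J^+_{g'}(K) = \emptyset$ that the paper leaves implicit when it says ``equality follows from interchanging the two metrics''; this is a genuine detail and your closure argument for it is clean.
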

\begin{proof}
 It is obviously enough to show the inclusion $J^-_g(A) \subset J^-_{g'}(A)$, since equality follows from interchanging the two metrics.
 Assume that $x \in J^-_g(A)$, i.e. there exists $y \in A$ and a past-directed causal curve from $y$ to $x$. Since $K \cap J_g^-(A) = \emptyset$ this curve does not intersect $K$ and hence is also causal for $g'$. It follows that $x \in J^-_{g'}(A)$.
\end{proof}

Unlike in Riemannian geometry convex linear combinations of Lorentzian metrics may fail to be Lorentzian even if they share a common time orientation. A counterexample are the two metrics
$$
  g_1 = \der t^2 + 4 \der t \der x +\der x^2, \quad g_2 = \der t^2 - 4 \der t \der x +\der x^2
$$
on $\R^2$, each of which has signature $(+,-)$ and determinant $-3$. The vector field $\partial_t$ is timelike for both metrics, hence defines a common time orientation. Nevertheless, $\frac{1}{2}(g_1+g_2)= \der t^2 + \der x^2$ has Riemannian signature.

There is however a natural condition that ensures the convex linear combinations remain Lorentzian. Once this condition is satisfied such metrics may be glued using smooth gluing functions.

\begin{proposition} \label{prop-app-1}
 Assume $M$ is a smooth manifold with Lorentzian metrics $g_1$ and $g_2$. Assume there exists a function $t: M \to \R$ which is temporal with respect to both $g_1$ and $g_2$. Assume in addition there exists a vector field $Z$ on $M$ which is timelike with respect to both metrics. Then, given any two smooth positive functions $\chi_1, \chi_2: M \to \R_+$ we have that $g= \chi_1 g_1 + \chi_2 g_2$ is a Lorentzian metric and $t$ is a temporal function for $g$.
\end{proposition}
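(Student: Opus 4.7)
The plan is to work pointwise and observe that the $(n-1)$-dimensional subspace $W_p = \ker(\der t|_p) \subset T_p M$ is common to both geometries and will automatically be spacelike for any convex combination. Once that is established, a Sylvester argument upgrades the convex combination to a Lorentzian form, and the hypothesis on $Z$ supplies the positive direction.

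First I would check that for each $i \in \{1,2\}$ the restriction $g_i|_{W_p}$ is negative definite. Indeed, if $X \in W_p$ were $g_i$-causal and nonzero, then one of $\pm X$ would be $g_i$-future-causal, and temporality of $t$ with respect to $g_i$ would force $\der t(X) \neq 0$, contradicting $X \in W_p$. So $W_p$ contains no $g_i$-causal nonzero vectors, which for a Lorentzian form in the signature $(+, -, \ldots, -)$ is equivalent to $g_i|_{W_p}$ being negative definite. A positive linear combination of negative-definite forms is negative definite, so $g|_{W_p} = \chi_1(p)\, g_1|_{W_p} + \chi_2(p)\, g_2|_{W_p}$ is negative definite on the $(n-1)$-dimensional subspace $W_p$. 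Meanwhile the hypothesis on $Z$ gives
$$
g(Z_p, Z_p) = \chi_1(p)\, g_1(Z_p, Z_p) + \chi_2(p)\, g_2(Z_p, Z_p) > 0,
$$
so $g$ also admits a positive direction at $p$. By Sylvester's law of inertia the signature of $g$ at $p$ must be exactly $(1, n-1)$, and by smoothness of $\chi_1, \chi_2, g_1, g_2$ we conclude that $g$ is a smooth Lorentzian metric on $M$.

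To finish I would show that $t$ is temporal for $g$. Since $W_p$ is $g$-spacelike of codimension one, its $g$-orthogonal complement is a $g$-timelike line, and it contains the metric dual vector $(\der t|_p)^{\sharp}$; in particular $\der t$ is a nowhere-vanishing $g$-timelike covector field. Continuity of $\der t$ then allows one to orient $(M, g)$ globally by declaring $(\der t)^{\sharp}$ to be future-directed. For any future-directed $g$-causal vector $X$ one then has $\der t(X) = g(X, (\der t)^{\sharp}) > 0$, so $t$ is a temporal function for $g$. The only step that requires any genuine thought is the implication ``no $g_i$-causal vectors in $W_p$'' $\Rightarrow$ ``$g_i|_{W_p}$ negative definite''; this is a standard fact of Lorentzian linear algebra and is really the only place where the argument uses more than bookkeeping.
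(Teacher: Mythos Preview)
Your proof is correct and follows essentially the same argument as the paper's: both exploit that $\ker(\der t)$ is spacelike for each $g_i$ (hence for any positive combination) and use $Z$ to supply the positive direction, concluding by a dimension count. Your version is in fact more complete, since you justify why the kernel is spacelike and explicitly verify that $t$ is temporal for $g$, a step the paper's proof leaves implicit.
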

\begin{proof}
 Since $\chi_1 g_1$ and $\chi_2 g_2$ are Lorentzian metrics it suffices to prove this for $\chi_1=\chi_2 = 1$. We only need to check that $g$ has Lorentzian signature on $T_xM$ for every $x$. Consider the subspace $V = \{X \in T_xM \mid \der t(X)=0 \}$. Then $g_1,g_2$ are negative definite on $V$ and hence so is $g$.
 For the timelike vector $Z_x$ at the point $x$ we have $g(Z_x,Z_x) = g_1(Z_x,Z_x)  + g_2(Z_x,Z_x)>0$. For dimensional reasons this implies that $g$ is non-degenerate of Lorentzian signature. 
 Since $\ker \der t$ is spacelike for $g$, the covector $\der t$ is
timelike for $g^{-1}$. Choosing the time-orientation for $g$ so that
$\der t$ is positive on future-directed causal vectors, $t$ is temporal
for $g$.
\end{proof}

\begin{proposition}\label{prop-app-2}
 Assume that $(M,g)$ is a globally hyperbolic spacetime
 and assume further that $h \in C^\infty_0(M; \mathrm{Sym}^2 T^*M)$ is such that $g + h$ is a Lorentzian metric on $M$.
 Assume that there exists a global time function $t: M \to \R$ with respect to the metric $g$ such that $\der t$ is timelike with respect to the metric $(g+h)^{-1}$.
 Then  
 $(M, g +  h)$ is globally hyperbolic.
\end{proposition}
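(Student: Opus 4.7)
My plan is to exhibit an explicit Cauchy hypersurface $\Sigma$ for $(M, g+h)$ by taking a suitable level set of a Cauchy temporal function for $g$ that is chosen to lie outside $\mathrm{supp}(h)$; global hyperbolicity then follows from the existence of such a surface.

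First, from the hypothesis that $\mathrm{grad}(t)$ is timelike for $g+h$, $t$ is also a time function for $g+h$, so $(M, g+h)$ is stably causal and in particular strongly causal. Combined with compactness of $\mathrm{supp}(h)$, the standard fact that a causal curve in a strongly causal spacetime cannot accumulate at any point implies: for every future-inextendible $g+h$-causal curve $\gamma : (a,b) \to M$ there is $s_0 \in (a,b)$ with $\gamma([s_0, b)) \cap \mathrm{supp}(h) = \emptyset$. On this tail the two metrics coincide, so $\gamma|_{[s_0,b)}$ is $g$-causal, and it is still future-inextendible as a curve in $(M, g)$.

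Next, invoke the Bernal--Sanchez theorem applied to the globally hyperbolic $(M,g)$ to produce a Cauchy temporal function $\tau$ for $g$. Since $\mathrm{supp}(h)$ is compact, fix $T > 0$ with $\mathrm{supp}(h) \subset \{|\tau| \le T\}$ and set $\Sigma := \tau^{-1}(T+1)$. This is a smooth spacelike $g$-Cauchy hypersurface disjoint from $\mathrm{supp}(h)$, and since $g = g+h$ on an open neighborhood of $\Sigma$, it is smooth and spacelike for $g+h$ as well. I will prove $\Sigma$ is a Cauchy hypersurface of $(M,g+h)$.

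The ``meets at least once'' property is immediate from the first step: $\gamma|_{[s_0,b)}$ is $g$-causal and future-inextendible in the globally hyperbolic $(M,g)$, so $\tau(\gamma(s)) \to +\infty$ as $s \to b$, while $\tau(\gamma(s_0)) \le T < T+1$, and the intermediate value theorem supplies a crossing of $\Sigma$; the past direction is symmetric. For acausality, suppose for contradiction that a future-directed $g+h$-causal curve has both endpoints on $\Sigma$. Near the start it is $g$-causal (since $\Sigma$ misses $\mathrm{supp}(h)$), hence $\tau$ strictly increases past $T+1$; to return to $\Sigma$ the curve must re-enter $\{\tau \le T+1\}$, and since $\tau$ is strictly monotone along $g$-causal curves outside $\mathrm{supp}(h)$, this forces a prior entry into $\mathrm{supp}(h) \subset \{\tau \le T\}$. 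Such an entry requires first crossing $\Sigma = \{\tau = T+1\}$ from above while still outside $\mathrm{supp}(h)$, which contradicts the monotonicity of $\tau$ on the $g$-causal crossing arc.

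The main obstacle is the acausality argument: it relies on the $\tau$-separation between $\Sigma$ and $\mathrm{supp}(h)$ (supplied by compactness of $\mathrm{supp}(h)$) and on the strict monotonicity of $\tau$ on the $g$-causal portions of any curve lying outside $\mathrm{supp}(h)$. The hypothesis that $\mathrm{grad}(t)$ is timelike for $g+h$ plays the complementary, equally indispensable role of guaranteeing strong causality of $(M, g+h)$, without which inextendible $g+h$-causal curves could conceivably be trapped inside $\mathrm{supp}(h)$ and never exit to reach $\Sigma$.
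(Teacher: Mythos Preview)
Your overall strategy is sound, but there is a slip in the ``meets at least once'' step: you assert $\tau(\gamma(s_0)) \le T$, where $s_0$ is merely a parameter beyond which $\gamma$ avoids $\mathrm{supp}(h)$. This is unjustified---if $\gamma$ never enters $\mathrm{supp}(h)$ at all, then $\tau(\gamma(s_0))$ can be arbitrarily large. The repair is immediate: either case-split on whether $\gamma$ meets $\mathrm{supp}(h)$ (if so, let $s_1 = \max\{s : \gamma(s) \in \mathrm{supp}(h)\}$, which exists since $\mathrm{supp}(h)$ is compact and the curve eventually leaves it, and then $\tau(\gamma(s_1)) \le T$; if not, $\gamma$ is an inextendible $g$-causal curve throughout and meets the $g$-Cauchy surface $\Sigma$ directly), or simply combine your future and past exits to get $\tau \circ \gamma \to \pm\infty$ and apply the intermediate value theorem on the whole domain.

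With that corrected, your argument is valid but organized differently from the paper's. The paper takes its candidate Cauchy surface $\Sigma_0$ to be a level set of the given time function $t$ itself (placed to the past of $\mathrm{supp}(h)$); since by hypothesis $t$ is temporal for $g+h$, acausality of $\Sigma_0$ for $g+h$ is then immediate and no separate barrier argument is needed. You instead bring in an auxiliary Bernal--S\'anchez Cauchy temporal function $\tau$ for $g$, which forces you to prove acausality by your $\tau$-gap argument. Conversely, for the ``meets'' step the paper restricts to null geodesics (via the O'Neill criterion that an acausal hypersurface is Cauchy iff every inextendible null geodesic meets it) and uses a lower bound on $\der t/\der s$ along the geodesic Hamiltonian flow over the compact set $K$ to force geodesics out of $K$; your non-imprisonment argument via strong causality is more elementary and works directly for arbitrary causal curves. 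Net effect: the paper's acausality comes for free but its ``meets'' step invokes the Hamiltonian machinery; your ``meets'' step is softer but you pay with a hand-rolled acausality argument.
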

\begin{proof}
The timelike covector $\der t$ determines the time-orientation on
$(M,g+h)$, which we will use.
We assume without loss of generality that $h$ is supported in a compact set  $K \Subset M$ contained in $[a,\infty) \times \Sigma$ for some $a>0$ so that  $\Sigma_0$ is in the past of $K$.
We will show that $\Sigma_0$ is a Cauchy surface for $(M,g+h)$ and therefore that $(M,g+h)$ is globally hyperbolic.
For any future directed piecewise smooth causal curve $\gamma$ with respect to  the metric $g'=g+h$ we have
 $\frac{\der}{\der s}  t(\gamma(s)) = \der t(\gamma'(s)) > 0$.  Therefore,
 $t \circ \gamma$ is a strictly increasing function of the parameter $s$ and it thus follows that $\Sigma_0$ is acausal.
 Since $\Sigma_0$ is acausal it is sufficient to show that any maximal affinely parametrised null-geodesic $\gamma$ intersects $\Sigma_0$ exactly once ( \cite{ON}*{54 Corollary}).
To control the parametrisation we choose an arbitrary Riemannian metric $\tilde g$ and define the function
$q(\xi) =( \tilde g^{-1}(\xi,\xi) )^{-\frac{1}{2}}= \|\xi \|_{\tilde g}^{-1}$. Denote by $N_{g'}$ the set of null covectors in $\dot T^* M$ with respect to the metric $g'$.
Then the geodesic vector field on $N_{g'}$ with respect to the metric $g'$
is proportional to the Hamiltonian vector field generated by 
$q(\xi) H_{g'}(\xi) = -\frac{1}{2}q(\xi) (g')^{-1}(\xi,\xi)$. Indeed, for the Poisson brackets we have
\begin{align}
 \{q H_{g'} , f\} = q \{H_{g'},f\} + H_{g'} \{q,f\} = q \{ H_{g'},f\}
\end{align}
since $H_{g'}(\xi)$ vanishes on $N_{g'}$.
The factor $q(\xi)$ turns this into a function on cotangent space that is homogeneous of degree one and therefore the corresponding Hamiltonian vector field is homogeneous of degree zero.
In local coordinates it is given by
\begin{align}
 X = -\|\xi \|_{\tilde g}^{-1} \left(  \frac{1}{2}\frac{\partial (g')^{jk}}{\partial  x^m}  \xi_j \xi_k \partial_{\xi_m} - (g')^{jk} \xi_j \partial_{x^k} \right).
\end{align}
Null geodesics are the orbits of this vector field. 
Since $\der t$ is timelike for $g'$, and since the time-orientation is chosen
so that $t$ increases on future-directed $g'$-causal curves, the quantity
\[
  \|\xi\|_{\tilde g}^{-1}(g')^{-1}(\der t,\xi)
\]
has a fixed positive sign on the future null cone. In particular, on the
compact set $K$, after restricting to the $\tilde g$-unit null covectors, it is
bounded below by a positive constant.

It follows that an orbit of $X$ cannot remain in $K$ for infinite parameter
time. Hence every inextendible null geodesic of $g'$ eventually leaves $K$ in both time directions. In particular, it leaves $K$ when followed to the past.
Outside $K$, the metrics $g'$ and $g$ coincide.
Therefore the geodesic agrees there with a null geodesic of $g$. Since
$\Sigma_0$ is a Cauchy surface for $g$, this past part intersects $\Sigma_0$.
Thus every inextendible null geodesic of $g'$ intersects $\Sigma_0$.

Since $t$ is strictly increasing along future-directed $g'$-causal curves,
$\Sigma_0$ is acausal, and the intersection with any inextendible null geodesic
is unique. By the cited criterion, $\Sigma_0$ is a Cauchy surface for $g'$.
 \end{proof}
 
 \begin{proposition}
  Assume that $(M,g)$ is a globally hyperbolic spacetime.
  Given any $h \in C^\infty_0(M; \mathrm{Sym}^2 T^*M)$ there exists $\epsilon>0$ such that
  $(M,g+ s h)$ is a globally hyperbolic spacetime for any $s \in [-\epsilon,\epsilon]$.
 \end{proposition}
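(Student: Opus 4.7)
The plan is to reduce the statement to Proposition \ref{prop-app-2}. Fix a smooth Cauchy temporal function $t:M\to\R$ for $(M,g)$, whose existence is guaranteed by the Bernal--Sanchez theorem cited above. Since $t$ is temporal for $g$ one has $g^{-1}(\der t,\der t)>0$ on all of $M$. It then suffices to find $\epsilon>0$ so that for every $s\in[-\epsilon,\epsilon]$ the tensor $g+sh$ is a non-degenerate Lorentzian metric on $M$ and $(g+sh)^{-1}(\der t,\der t)>0$ on all of $M$; Proposition \ref{prop-app-2}, applied with the perturbation $sh$ in place of $h$, will then produce global hyperbolicity of $(M,g+sh)$.

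Both conditions are open at $s=0$ pointwise on $M$. Outside the compact set $K=\mathrm{supp}(h)$ they are trivial because $g+sh=g$ for all $s$. On $K$ one uses compactness: in any local frame the map $(s,x)\mapsto\det\bigl((g+sh)_{jk}(x)\bigr)$ is continuous and nonzero at $(0,x)$ for every $x\in K$, with the sign characterising Lorentzian signature, so by compactness of $K$ there is $\epsilon_1>0$ such that $g+sh$ is nondegenerate and of Lorentzian signature on $K$ for all $|s|\leq\epsilon_1$. For such $s$ the inverse $(g+sh)^{-1}$ depends continuously on $(s,x)$, and the function $(s,x)\mapsto(g+sh)^{-1}(\der t,\der t)(x)$ is strictly positive at $(0,x)$ for every $x\in K$; a second compactness argument yields $\epsilon_2>0$ with this function still strictly positive on $K$ for $|s|\leq\epsilon_2$. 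Setting $\epsilon=\min(\epsilon_1,\epsilon_2)$ gives both conditions on all of $M$.

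The time-orientation needed for the definition of a spacetime is transported along the family: for each such $s$ the vector field $\mathrm{grad}_{g+sh}(t)$ is timelike by the second condition and coincides with $\mathrm{grad}_g(t)$ off $K$, so declaring it future-directed gives a time orientation on $(M,g+sh)$ that agrees with the given one on $(M,g)$ outside $K$. Proposition \ref{prop-app-2} then delivers global hyperbolicity of $(M,g+sh)$ for every $s\in[-\epsilon,\epsilon]$. There is no real obstacle here; the argument is an open-condition plus compactness argument, the only bookkeeping being to verify that the hypotheses of Proposition \ref{prop-app-2} survive with a uniform $\epsilon$ along the one-parameter family.
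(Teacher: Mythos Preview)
Your proof is correct and follows essentially the same approach as the paper: fix a global time function $t$ for $(M,g)$, use compactness of $K=\mathrm{supp}(h)$ to obtain an $\epsilon_1$ preserving Lorentzian signature and a further $\epsilon\leq\epsilon_1$ preserving $(g+sh)^{-1}(\der t,\der t)>0$, then apply Proposition~\ref{prop-app-2}. Your treatment is slightly more detailed (explicit discussion of the time-orientation and the two separate compactness steps), but the argument is the same.
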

 \begin{proof}
  Let $K = \mathrm{supp}(h)$.
  Since the set of matrices of signature $(+,-,\ldots,-)$ is an open subset in the set of matrices (also as a consequence of Prop. \ref{prop-app-1}) we can choose
  $\epsilon_1>0$ such that $g+ s h$ has Lorentzian signature for all $s \in [-\epsilon_1,\epsilon_1]$.
  Next we choose a global time function $t$ for $(M,g)$ and note that $g^{-1}(\der t, \der t) > \delta >0$ on $K$.
  It follows that there exists $0<\epsilon< \epsilon_1$ with $(g+s h)^{-1}(\der t, \der t) >0$ for all
  $s \in [-\epsilon,\epsilon]$. By Prop.  \ref{prop-app-2} this shows that such metrics are globally hyperbolic.
 \end{proof}

\begin{lemma} \label{vectorfieldZ}
 Assume that $(M,g)$ is a globally hyperbolic spacetime with smooth spacelike Cauchy surface $\Sigma \subset M$.
 Let $\calO$ be an open neighborhood of $\Sigma$ and $\calU$ an open subset with compact closure in $\mathrm{int}(D^+(\Sigma))$.
 Then there exists a vector field $Z$ on $M$ with compact support in $\mathrm{int}(D^+(\Sigma))$ and a number $s_0 >0$ such that the following holds
 for $Z$ and its flow $\phi_s$: 
 \begin{itemize}
  \item[(a)] for each $x \in M$ the vector $Z(x)$ is either zero or timelike and past-directed.
  \item[(b)] $\phi_{s_0}(J^-(\overline{\calU}) \cap D^+(\Sigma)) \subset \calO$.
  \item[(c)] there exists a temporal function $t: M \to \R$ for the metrics $\phi^*_{s} g$ for all $s \in [0,s_0]$.
  \item[(d)] there exists a vector field that is timelike for the metrics  $\phi^*_{s} g$ for all $s \in [0,s_0]$.
 \end{itemize}
\end{lemma}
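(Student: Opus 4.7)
The approach is to construct $Z$ as a smooth compactly supported multiple of $-\partial_t$ with respect to a Cauchy foliation adapted to $\Sigma$. First I would apply Bernal--S\'anchez to obtain a Cauchy temporal function $t: M \to \R$ with $\Sigma = t^{-1}(0)$ and a splitting $M \cong \R_t \times \Sigma$ with $g = N^2 \der t^2 - h_t$, so that $\partial_t$ is future-directed timelike. Since $(M,g)$ is globally hyperbolic, $K := J^-(\overline{\calU}) \cap D^+(\Sigma)$ is compact, and because $\overline{\calU} \subset \mathrm{int}(D^+(\Sigma))$ is bounded away from $\Sigma$, one can choose a box $[0,T] \times K_\Sigma$ containing $K$ together with $a \in (0,T)$ such that $[0,a] \times K_\Sigma \subset \calO$.

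Next, I would fix $\psi \in C^\infty(\R,[0,1])$ supported in $[a/8, T+2]$ with $\psi \equiv 1$ on $[a/4, T+1]$, and $\rho \in C^\infty_0(\Sigma,[0,1])$ with $\rho \equiv 1$ on $K_\Sigma$, and set $Z := -\psi(t)\rho(x)\partial_t$. Then $\supp Z \subset [a/8, T+2] \times \supp \rho$ is compact and contained in $\mathrm{int}(D^+(\Sigma))$, and $Z$ is past-directed timelike or zero, giving (a). The flow preserves $x$: writing $\phi_s(t_0,x_0) = (F(s;t_0,x_0),x_0)$, the ODE $\dot F = -\psi(F)\rho(x_0)$ becomes $\dot F = -\psi(F)$ whenever $x_0 \in K_\Sigma$, and a monotonicity argument together with $\psi \equiv 1$ on $[a/4, T+1]$ produces $s_0 > 0$ such that $F(s_0;t_0,x_0) < a$ uniformly for $(t_0,x_0) \in K$, yielding (b). The variational equation gives $d\phi_{s_0}(\partial_t) = F_t\, \partial_t$ with $F_t > 0$, so $\phi_{s_0}^* g(\partial_t,\partial_t) = F_t^2 N^2 > 0$, and $\partial_t$ is timelike for both metrics; this is (d).

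The remaining condition (c) is the delicate one and is the main obstacle. A block-matrix computation using $d\phi_{s_0}(\partial_{x^j}) = F_{x^j}\partial_\tau + \partial_{y^j}$ and $d\phi_{s_0}(\partial_t) = F_t \partial_\tau$ yields
\[
(\phi_{s_0}^* g)^{-1}(\der t, \der t) \;=\; F_t^{-2}\bigl(N^{-2} - h^{jk} F_{x^j} F_{x^k}\bigr),
\]
with $N$ and $h^{jk}$ evaluated at $\phi_{s_0}(t,x)$, so $\der t$ is $\phi_{s_0}^*g$-timelike iff $h^{jk}F_{x^j}F_{x^k} < N^{-2}$ on $\supp Z$. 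Differentiating the flow equation in $x^j$ gives an integral formula for $F_{x^j}(s_0)$ bounded by $C(\psi, s_0)\, |\partial_{x^j}\rho|$; on the compact set $\phi_{s_0}(\supp Z)$, $N^{-1}$ and $h$ are uniformly controlled. Hence, choosing $\rho$ so that $|\nabla\rho|$ is sufficiently small --- which is always possible while keeping $\rho \equiv 1$ on $K_\Sigma$ by enlarging $\supp \rho$ into a larger compact subset of $\Sigma$ --- secures the inequality, so $\der t$ is $\phi_{s_0}^* g$-timelike. The future-orientation is preserved because the whole family $\phi_s$, $s \in [0, s_0]$, is an isotopy of the identity along which the same inequality holds uniformly, so the sign of $\der t$ on the future-directed timelike vector $\partial_t$ cannot change. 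Hence $t$ is temporal for both metrics, establishing (c).

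The principal difficulty is the interdependence between (b) and (c): pushing $K$ into $\calO$ requires the flow to translate in $t$ by a finite amount of order $T$, fixing the scale of $s_0$, while (c) demands that the spatial derivatives $F_{x^j}$ remain small, and the variational estimate is proportional to $s_0\, |\nabla \rho|$. The resolution --- that $\rho$ may be chosen with arbitrarily small gradient by enlarging its support within $\Sigma$ while keeping it compactly supported and equal to $1$ on $K_\Sigma$ --- decouples the two constraints, so the construction can be completed consistently.
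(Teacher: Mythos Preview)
Your approach is essentially the same as the paper's: reduce to a Bernal--S\'anchez product $M\cong\R_t\times\Sigma$, take $Z$ to be a compactly supported multiple of $-\partial_t$, obtain (b) by pushing the $t$-coordinate down, and verify (c), (d) by an explicit block-matrix computation of $\phi_{s_0}^*g$. Your formula $(\phi_{s_0}^*g)^{-1}(\der t,\der t)=F_t^{-2}\bigl(N^{-2}-h^{jk}F_{x^j}F_{x^k}\bigr)$ is correct and matches the paper's computation.

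The one genuine gap is the closing step of (c). You assert that one can always choose $\rho$ with $|\nabla\rho|$ sufficiently small ``by enlarging $\supp\rho$ into a larger compact subset of $\Sigma$'', and that ``on the compact set $\phi_{s_0}(\supp Z)$, $N^{-1}$ and $h$ are uniformly controlled''. These two statements interact badly: the bounds on $N^{-1}$ and on the comparison between $h_\tau$ and whatever metric you are measuring $\nabla\rho$ in are taken on a compact set that itself grows with $\supp\rho$. As you enlarge $\supp\rho$ to flatten the gradient, the relevant constants may degenerate, so the argument is circular as written. On a general (possibly incomplete) Riemannian manifold it is \emph{not} automatic that compactly supported cutoffs with arbitrarily small gradient exist, and even when they do, the required inequality involves $N$ and $h_\tau$ evaluated over the enlarged support.

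The paper resolves this with two extra ingredients you should incorporate. First, it observes that the statement is conformally invariant and rescales to $N\equiv 1$, removing one of the uncontrolled quantities. Second, it fixes a single Riemannian metric $\tilde h$ on $\Sigma$ with $c\,\tilde h\le h_t\le\tilde h$ for all $t$ in the relevant compact interval, and proves---using global hyperbolicity, via a contradiction with $\Sigma$ being Cauchy---that $(\Sigma,\tilde h)$ is complete. Completeness then guarantees the existence of cutoffs $\eta_\epsilon$ with $|\nabla_{\tilde h}\eta_\epsilon|<\epsilon$ (via a regularised distance function), and the uniform comparison $h_t^{-1}\le c^{-1}\tilde h^{-1}$ converts this into the required bound on $h^{jk}F_{x^j}F_{x^k}$ with constants independent of $\supp\rho$. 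With these two additions your argument goes through; without them the decoupling you describe in your final paragraph is not justified.
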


\begin{proof}
 Since $(M,g)$ is globally hyperbolic it is isometric to a product of the form $$(\R_t \times \Sigma, N^2 \der t^2 -h_t),$$ where $h_t$ is a smooth family of metrics and $N$ is a positive function on $\R_t \times \Sigma$. This isometry can be chosen so that $\Sigma$ is mapped to $\{0\} \times \Sigma$. Since the statement is invariant under conformal transformations of the metric we can therefore assume without loss of generality that $(M,g)$ equals $(\R_t \times \Sigma, \der t^2 - h_t)$.
 Since $\left(J^-(\overline{\calU}) \cap D^+(\Sigma)\right) \setminus \calO$ is compact there exist $T_2>T_1>0$ and a compact subset $Q \subset \Sigma$ such that $\left(J^-(\overline{\calU}) \cap D^+(\Sigma)\right) \setminus \calO \subset (T_1,T_2) \times Q$ and $[0,T_1] \times Q \subset \calO$.
 It is easy to see that any vector field of the form $$-\chi(t) \partial_t, \chi \in C^\infty_0((0,1)),0 \leq \chi \leq 1$$ the corresponding flow will satisfy (a),(c) and (d) for any $s_0 \in \R$. Indeed,
 in that case the flow is of the form $\phi_s(t,x) = (G_s(t),x)$ and the pull-back $\phi^*_{s} g$ of the metric $g$ is given by $(\frac{\der G_s(t)}{\der t})^2(t) \der t^2 - h_{G(t)}$, where  
 $\frac{\der G_s(t)}{\der t}>0$. Therefore, 
  the function $t$ is a time function for $g$ and for $\phi^*_{s} g$, the vector field $\partial_t$ is a time-orientation for both $g$ and $\phi^*_{s} g$. It is easy to choose $\chi$ in such a way that (b) is satisfied for sufficiently large $s_0$. Some more work is required to modify $\chi(t) \partial_t$ to make it compactly supported, retaining the other properties.
 
 By compactness of the interval $[\frac{1}{2}T_1,2 T_2]$ there exist a constant $0<c<1$ and a Riemannian metric $\tilde h$ on $\Sigma$ such that $c \tilde h \leq h_t \leq  \tilde h$.
 We will now argue by contradiction that the metric $\tilde h$ is automatically complete. If it were incomplete there would be a speed one geodesic $\gamma: [0,\delta) \to \Sigma$ without endpoint.
 By changing the start point we can assume without loss of generality that $\delta<\frac{1}{2}(2T_2-\frac{1}{2}T_1)$.  Since
 $4 - h_t(\dot  \gamma(t), \dot \gamma(t)) \geq 4-\tilde h(\dot \gamma(t), \dot  \gamma(t)) \geq 3$ the curve $\tilde \gamma: t \mapsto (T_2-2 t,\gamma(t))$ starts to the future of $\Sigma$, is timelike, has no past endpoint, but does not intersect the Cauchy surface. This is in contradiction to $\Sigma$ being a Cauchy surface.
 
 Since $(\Sigma, \tilde h)$ is a complete Riemannian manifold there exists a regularised smooth distance function $\rho$ (see for example \cite{MR2280987}). This allows one to construct a family of compactly supported functions $\eta_\epsilon: \Sigma \to \R$ depending on a parameter $\epsilon$ such that $\eta=1$ near $Q$ and 
 $\|\der \eta_\epsilon\|_{\tilde h^{-1}}<\epsilon$.
We also choose a compactly  supported smooth function $\chi \in C^\infty_0((\frac{1}{2} T_1,2T_2))$ that equals $1$ near $[T_1,T_2]$. Then $Z_\epsilon(t,x) = -\eta_\epsilon(x) \chi(t) \partial_t$ is a family of smooth compactly supported vector fields. The flow generated by this vector field is given by $\phi_s(t,x) = (G_{\eta_\epsilon(x)  s}(t),x)$, where
 $G_s$ is the flow generated by $-\chi(t) \partial_t$ on $\R$. 
 Since the flow is simply $\phi_s(t,x) = (t-s,x)$ on $[T_1,T_2] \times Q$ there is a finite time $s_0>0$ such that
 $\phi_{s_0}(J^-(\overline{\calU}) \cap D^+(\Sigma)) \subset \calO$. 

We compute for $s \in [-s_0,s_0]$,
$$
  \phi_s^*(\der t)
  =
  \partial_tG_{\eta_\epsilon(x)s}(t)\,\der t
  +
  s\,\partial_yG_y(t)|_{y=\eta_\epsilon(x)s}\,\der\eta_\epsilon(x).
$$

Since
$$
  g^{-1}=\partial_t\otimes\partial_t-h_t^{-1},
$$
the covector $\phi_{\pm s}^*(\der t)$ is timelike provided
$$
\begin{aligned}
  \bigl(\partial_tG_{\pm\eta_\epsilon(x)s}(t)\bigr)^2
  &>
  s^2
  \bigl(\partial_yG_y(t)|_{y=\pm\eta_\epsilon(x)s}\bigr)^2
  h_t^{-1}(\der\eta_\epsilon,\der\eta_\epsilon).
\end{aligned}
$$
The functions $\partial_t G_y(t)$ and $\partial_yG_y(t)$ are uniformly
bounded for $y$ in the compact range under consideration, and
$\partial_tG_y(t)$ is bounded away from zero there. Hence the above
inequality holds after choosing $\epsilon>0$ sufficiently small, since
$\|\der\eta_\epsilon\|_{\tilde h^{-1}}$ can be made arbitrarily small.

By the identity $(\phi_s^*g)^{-1}(\der t,\der t)|_x
= g^{-1}(\phi_{-s}^*\der t,\phi_{-s}^*\der t)|_{\phi_s(x)}$,
timelikeness of $\phi_{-s}^*(\der t)$ for $g^{-1}$ shows that $t$ is a
temporal function for $\phi_s^*g$ for all $s \in [0,s_0]$.
 \end{proof}

\section{Shale's theorem and further properties of the implementation} \label{Shalendix}

Given a real Hilbert space $H$ with a complex structure $J$ we can write its complexification $H \otimes_\R \C$ as $H_J \oplus \overline{H_J}$.
We view $H$ as a real symplectic vector space with symplectic form $\sigma(\cdot,\cdot) = -\langle J \cdot,\cdot \rangle$.
As in Section \ref{Fockspace} we denote the symmetric Fock space over $H_J$ by $\mathcal{F}(H_J)$ and the dense finite particle subspace by
$\mathcal{F}_\fp(H_J)$. For $v \in H$ the field operators $\phi(v)$ are then maps $\phi(v): \mathcal{F}_\fp(H_J) \to \mathcal{F}_\fp(H_J)$.
Here $H_J$ is fixed and we will therefore simply write $\mathcal{F}$, $\mathcal{F}_\fp$, etc.

We will denote by $\numb$ the number operator, i.e. the self-adjoint operator that acts by multiplication by
$k$ on the $k$-particle subspace. The domain of smoothness $\cap_{k=1}^\infty \mathrm{dom}(\numb^k)$ will be denoted by
$\mathcal{F}_\infty$. This is the space of vectors
$(v_0,v_1,\ldots,v_n,\ldots)$ with $v_n$ in the completed tensor product $\hat\otimes_\mathrm{s}^n H_J$ with the property that
$\| v_n \|$ is rapidly decreasing in the sense that the sequence $(n^k \| v_n \|)_n$ is bounded for any $k \in \mathbb{N}_0$.
The best constants in the implied estimates introduce a Fr\'echet space topology on $\mathcal{F}_\infty$.
The field operators $\phi(v)$ extend continuously to maps $\mathcal{F}_\infty \to \mathcal{F}_\infty$. Of course the finite particle subspace
$\mathcal{F}_\fp$ is a dense subspace of  $\mathcal{F}_\infty$.
Given $n >0$ we will denote by $\mathcal{F}_n$ the range of the spectral projection of $\numb$ on the interval $[0,n]$, in other words the finite direct sum
\begin{align}
 \mathcal{F}_n = \bigoplus_{k=0}^n {\hat\bigotimes_S}^k H_J.
\end{align}

Assume now that $W: H \to H$ is an invertible real linear symplectic map, and we will denote the complex linear map on $H_J \oplus \overline{H_J}$ by 
the same letter. Then, with respect to the decomposition $H \otimes_\R \C = H_J \oplus \overline{H_J}$, we have
\begin{align}
W = \left( \begin{matrix} q & \overline{r} \\ r & \overline{q} \end{matrix} \right).
\end{align}
The terms $r, \overline{r}$ appear because of the possible failure of $W$ to commute with $J$.

We are looking for an implementer of $W$ on the bosonic Fock space, i.e. we are looking for a unitary operator $U$ on Fock space with $U \mathcal{F}_\fp \subset \mathcal{F}_\infty$ such that
 \begin{align}
 U \phi(f) U^* = \phi( W f)
\end{align}
on $\mathcal{F}_\fp$.
If such an implementer exists we say $W$ can be implemented. In this case of course
\begin{align}
 U  p(\phi(f_1) \cdots \phi(f_n)) U^* =  p(\phi(W f_1) \cdots \phi(W f_n))
\end{align}
for any (non-commutative) polynomial expression $p$ in the fields.

In the following we describe the theory of implementation in Fock space which is now very well established (\cite{MR0452323,MR0516713,MR2105911,MR2297950}, see also the textbook \cite{MR3060648}*{Ch. 11} for a comprehensive treatment).
We will however need several statements that go beyond what we could find in the literature. We first describe some known results that are largely contained in \cite{MR0452323,MR0516713} and introduce convenient notations where we follow mostly
\cite{MR2297950}.

In case $W J - J W$ is Hilbert-Schmidt 
the operator $q$ is invertible and the operator $r$ is Hilbert-Schmidt. Then the operators
\begin{align}
 K = \overline{r q^{-1}}, \quad L = - q^{-1} \overline{r}.
\end{align}
are Hilbert-Schmidt operators with $\| K \| < 1$ and $\| L \|<1$. These operators are symmetric in the sense that $\overline{K} = K^*$ and $\overline{L} = L^*$.
Given a symmetric rank one operator $A = \langle \overline{v}, \cdot \rangle v$ one defines $a^*(A) = a^*(v) a^*(v)$.
The map $A \mapsto a^*(v) a^*(v)$ extends by linearity to the set of symmetric finite rank operators. On the $n$-particle subspace we have
\begin{align}
 \| a^*(A) \psi \| \leq \| A\|_{\mathrm{HS}} \cdot \|\sqrt{(\numb+2)(\numb+1)} \psi\|
\end{align}
and we can extend the map by continuity to the set of symmetric Hilbert-Schmidt operators.
For a symmetric Hilbert-Schmidt operator we also define
\begin{align}
 a(A) = \left(a^*(A)\right)^*
\end{align}
and note
\begin{align}
 \| a(A) \psi \| \leq \| A\|_{\mathrm{HS}} \| \numb \psi\|.
\end{align}

For any vector $\psi$ that is a finite linear combination of vectors of the form $a^*(f_1) \cdots a^*(f_N) \Omega$ one can define
\begin{align}
 e^{-\frac{1}{2} a^*(K)} \psi &= \sum_{j=0}^\infty (-1)^j \frac{1}{2^j} \frac{1}{j!} \left( a^*(K) \right)^j \psi,\\
 e^{\frac{1}{2} a(L)} \psi &= \sum_{j=0}^\infty \frac{1}{2^j} \frac{1}{j!} \left( a(L) \right)^j \psi
\end{align}
with the sums converging in the norm. 
Vectors of this form are exactly the finite linear combinations of simple tensor products and the space of these vectors is denoted by
$\mathcal{F}_{\sfp}$, where the subscript stands for simple finite particle space.
Given any bounded operator $A: H_J \to H_J$ one defines the second quantisation $\Gamma(A)$ as 
\begin{align*}
 \Gamma(A) = \mathbf{1} \oplus A \oplus A \otimes A \oplus \cdots 
\end{align*}
as defined on $\mathcal{F}_{\sfp}$. The space $\mathcal{F}_{\sfp}$ is left invariant under this map.
One then has the following result and formula.

\begin{theorem}[Shale's theorem]
 The symplectic map $W$ can be implemented if and only if $W J - J W$ is Hilbert-Schmidt. Then the operator
\begin{align} \label{impl}
 U = \left(\mathrm{det}(1 - K^* K) \right)^\frac{1}{4} e^{-\frac{1}{2} a^*(K)} \Gamma( (q^{-1})^*) e^{\frac{1}{2} a(L)}.
\end{align}
defined on the subspace of finite linear combinations of vectors of the form $a^*(f_1) \cdots a^*(f_n) \Omega$ 
extends by continuity to a unitary operator that implements the transformation $W$.
\end{theorem}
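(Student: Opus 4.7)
The plan is to treat the necessity and sufficiency directions separately. For necessity, one assumes a unitary implementer $U$ exists. Complexifying the identity $U\phi(v)U^* = \phi(Wv)$ yields $U^*a(v)U = a(qv) + a^*(\overline{r}v)$ on a natural dense domain, so that $a(v)U\Omega = Ua^*(\overline{r}v)\Omega$ is defined for every $v\in H_J$. A standard extension argument places $U\Omega$ in $\mathrm{dom}(N^{1/2})$ with
$$\langle U\Omega, N U\Omega\rangle = \sum_j \|\overline{r}e_j\|^2 = \|r\|_{\mathrm{HS}}^2$$
for any orthonormal basis $\{e_j\}$ of $H_J$. Finiteness of this expectation value forces $r \in \mathcal{L}^2(H_J)$, which is equivalent to $WJ-JW$ being Hilbert-Schmidt by the block decomposition.

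For sufficiency the strategy splits into algebraic, analytic, and verification steps. The algebraic step exploits $W^{-1} = -JW^*J$, which in block form delivers
\begin{align*}
q^*q - r^*r &= \mathbf{1}, & qq^* - \overline{r}\,\overline{r}^* &= \mathbf{1},\\
q^*\overline{r} - r^*\overline{q} &= 0, & q\overline{r}^* - \overline{r}q^* &= 0.
\end{align*}
The first identity gives $q^*q \geq \mathbf{1}$, so $q$ has bounded inverse on $H_J$; then $K = \overline{rq^{-1}}$ and $L = -q^{-1}\overline{r}$ are Hilbert-Schmidt with $\mathbf{1}- K^*K = (qq^*)^{-1}$, hence $\|K\|<1$ and $\det(\mathbf{1}-K^*K)>0$, and the remaining identities show $K$ and $L$ are symmetric so that $a^*(K)$ and $a(L)$ are defined.

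The analytic step makes sense of each factor in \eqref{impl} on $\mathcal{F}_{\sfp}$: since $a(L)$ lowers particle number by two, the series $e^{\frac{1}{2}a(L)}\psi$ is a finite sum for each $\psi \in \mathcal{F}_{\sfp}$; the middle factor $\Gamma((q^{-1})^*)$ is defined on $\mathcal{F}_{\sfp}$ because $(q^{-1})^*$ is bounded on $H_J$; and $e^{-\frac{1}{2}a^*(K)}$ converges in norm on any finite-particle vector, using Hilbert-Schmidt estimates on $a^*(K)^j$ together with $\|K\|<1$. The verification step then checks $U\phi(v)U^{-1} = \phi(Wv)$ on $\mathcal{F}_{\sfp}$ by conjugating $a(v)$ and $a^*(v)$ through the three factors in turn using the commutation rules $[a(v), a^*(K)] = 2a^*(Kv)$, $\Gamma(A)a^*(v)\Gamma(A)^{-1} = a^*(Av)$, and $[a(L), a^*(v)] = 2a(Lv)$; these compose to produce exactly $U^*a(v)U = a(qv) + a^*(\overline{r}v)$, which is the complexified form of the $W$-action. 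Unitarity follows from the Gaussian identity $\|e^{-\frac{1}{2}a^*(K)}\Omega\|^2 = \det(\mathbf{1}-K^*K)^{-1/2}$, which fixes the prefactor so that $\|U\Omega\|=1$; combined with the intertwining identity just verified, this yields unitarity on all of $\mathcal{F}(H_J)$, and extension by continuity is immediate from density of $\mathcal{F}_{\sfp}$.

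The principal obstacle is the rigorous manipulation of the unbounded exponentials $e^{\pm\frac{1}{2}a^{\sharp}(\cdot)}$ on an invariant dense domain, together with the Gaussian/Wick combinatorics underpinning both the norm identity and the intertwining computation. Once the algebraic step has secured $\|K\|<1$ and symmetry of $K, L$, these manipulations are classical but combinatorially involved; a clean way to organise them is to first verify all identities on the coherent-type vectors $e^{\frac{1}{2}a(L)}\Omega$ and related exponentials in the algebraic tensor algebra, and only at the end extend by continuity.
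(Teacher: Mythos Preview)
The paper does not supply its own proof of Shale's theorem. It is stated as a known result, with the surrounding text explicitly referring the reader to the literature (Shale, Ruijsenaars, Bruneau--Derezi\'nski, and the textbook of Derezi\'nski--G\'erard) for the argument; the paper only records the algebraic preliminaries (invertibility of $q$, the Hilbert-Schmidt and norm bounds on $K$ and $L$, their symmetry) and then moves on to the propositions it actually needs, namely the parameter-dependent smoothness and strong continuity statements that follow the theorem. So there is no ``paper's own proof'' to compare against.

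Your outline is essentially the classical Ruijsenaars argument that those references contain, and the architecture is sound: the necessity direction via the expected particle number in $U\Omega$, the algebraic step extracting the block identities from $W^{-1}=-JW^*J$, and the three-factor conjugation to verify the intertwining relation. Two small points worth tightening if you flesh this out. First, the commutator you quote as $[a(v),a^*(K)]=2a^*(Kv)$ needs the correct interpretation of how the symmetric Hilbert--Schmidt operator $K$ acts on a single vector, since in the paper's convention $K$ is built from $\overline{rq^{-1}}$ and one has to keep track of which copy of $H_J$ or $\overline{H_J}$ it maps between; the formula is morally right but the bookkeeping is delicate. Second, your claim that unitarity follows from $\|U\Omega\|=1$ together with the intertwining relation is not quite complete: one also needs that $U$ preserves inner products on a total set, which in practice is obtained by computing $\langle U\psi,U\psi'\rangle$ for $\psi,\psi'\in\mathcal{F}_{\sfp}$ via the Gaussian combinatorics, or by exhibiting the inverse explicitly as in the cited references. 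These are exactly the ``classical but combinatorially involved'' manipulations you flag at the end, so you have correctly identified where the work lies.
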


\begin{definition}
The $U$ defined above in \eqref{impl} is called the canonical implementer. 
\end{definition}
Unitary implementers are unique up to a phase in $U(1)$, the
canonical implementer is uniquely characterised by $\langle \Omega, U  \Omega\rangle >0$.

\begin{proposition} \label{smoothy}
Given a Hilbert-Schmidt operator $K$ with $\| K \|_{\mathrm{HS}}< 1$  the sum
 \begin{align}
 e^{-\frac{1}{2} a^*(K)} &= \sum_{j=0}^\infty (-1)^j \frac{1}{2^j} \frac{1}{j!} \left( a^*(K) \right)^j
\end{align} 
converges for any $n \in \mathbb{N}_0$ in the norm as maps $\mathcal{F}_n \to \mathcal{F}_\infty$. 

Assume that $K=K(s)$ depends additionally on a parameter $s$ in a compact parameter manifold $I$ and we assume that $K(s)$ is a $C^k$-function of the parameter $s$ with values in the Banach space of Hilbert-Schmidt operators and such that $\| K(s) \|_{\mathrm{HS}}< 1$ for all $s$ in the parameter manifold. Then for all $n \in \mathbb{N}_0$ the sum converges in the Banach space of $C^k$ functions on the parameter manifold with values in the Banach space of bounded linear maps $\mathcal{F}_n \to \mathcal{F}_\infty$.
\end{proposition}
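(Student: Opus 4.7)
The key tool is the given $L^2$-estimate $\|a^*(A)\psi\|\leq\|A\|_{\mathrm{HS}}\,\|\sqrt{(N+2)(N+1)}\psi\|$ for symmetric Hilbert--Schmidt $A$. For $\psi\in\mathcal{F}_n$ the vector $(a^*(K))^j\psi$ lies in $\mathcal{F}_{n+2j}$, so iterating this bound $j$ times and using that $\sqrt{(N+2)(N+1)}$ acts on the $m$-particle subspace as multiplication by $\sqrt{(m+2)(m+1)}$ yields
\begin{align*}
\|(a^*(K))^j\psi\|\leq\|K\|_{\mathrm{HS}}^j\sqrt{\frac{(n+2j)!}{n!}}\,\|\psi\|.
\end{align*}
The Fr\'echet topology on $\mathcal{F}_\infty$ is generated by the seminorms $\psi\mapsto\|N^\ell\psi\|$, and $N^\ell\leq(n+2j)^\ell$ on $\mathcal{F}_{n+2j}$. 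Consequently the $\ell$-th seminorm of the partial sums is dominated by
\begin{align*}
\sum_{j\geq 0}\frac{(n+2j)^\ell}{2^j\,j!}\,\|K\|_{\mathrm{HS}}^j\sqrt{\frac{(n+2j)!}{n!}},
\end{align*}
whose ratio of successive terms tends to $\|K\|_{\mathrm{HS}}<1$ as $j\to\infty$ (polynomial prefactors being irrelevant in the limit). Hence the series converges uniformly in $\psi$ on the unit ball of $\mathcal{F}_n$, proving the first claim.

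For the parameter-dependent case set $M_l:=\sup_{s\in I}\|K^{(l)}(s)\|_{\mathrm{HS}}$, which is finite for $0\leq l\leq k$ by compactness of $I$ and the $C^k$ assumption, with $M_0<1$. Applying Leibniz's rule to the $j$-fold product $(a^*(K(s)))^j$ writes any mixed partial $\partial_s^m$ of total order $m\leq k$ as
\begin{align*}
\partial_s^m(a^*(K(s)))^j=\sum_{l_1+\cdots+l_j=m}\binom{m}{l_1,\dots,l_j}\,a^*(K^{(l_1)}(s))\cdots a^*(K^{(l_j)}(s)).
\end{align*}
In each composition at most $m$ of the $l_i$ are nonzero, so iterating the basic inequality with each factor $K^{(l_i)}(s)$ gives
\begin{align*}
\bigl\|\partial_s^m(a^*(K(s)))^j\psi\bigr\|\leq j^m\,M_0^{j-m}\max(1,M_1,\dots,M_k)^m\sqrt{\frac{(n+2j)!}{n!}}\,\|\psi\|,
\end{align*}
using the multinomial identity $\sum_{l_1+\cdots+l_j=m}\binom{m}{l_1,\dots,l_j}=j^m$ to collect the combinatorial factor. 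A ratio test on the resulting numerical series again returns limit $M_0<1$, so the formal term-by-term differentiated series up to order $k$ all converge uniformly in $s\in I$ in every seminorm of the target $\mathcal{F}_\infty$. This is exactly $C^k$ convergence of the partial sums in the space of bounded operators $\mathcal{F}_n\to\mathcal{F}_\infty$.

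The main obstacle is checking that neither the multinomial combinatorics from Leibniz's rule nor the polynomial factor $(n+2j)^\ell$ from the Fr\'echet seminorm destroys summability. Both contribute only polynomial-in-$j$ growth of fixed degree, while the factor $M_0^{j-m}$ with $M_0<1$ provides geometric decay in $j$. The assumption $\|K(s)\|_{\mathrm{HS}}<1$ is used precisely here: it controls the many factors $K^{(l_i)}=K$ with $l_i=0$, while only at most $m\leq k$ factors need the larger constants $M_l$ and these appear with bounded exponent.
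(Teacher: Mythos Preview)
Your proof is correct and follows essentially the same strategy as the paper's: iterate the basic estimate $\|a^*(A)\psi\|\leq\|A\|_{\mathrm{HS}}\|\sqrt{(N+2)(N+1)}\psi\|$ to control $\|(a^*(K))^j\psi\|$, then observe that differentiation in $s$ introduces only a polynomial-in-$j$ prefactor which is harmless for the ratio test. The one point of difference is in handling the $s$-derivatives. The paper exploits the commutativity $[a^*(A),a^*(B)]=0$ to write $\partial_{s_r}(a^*(K(s)))^j=j\,a^*(\partial_{s_r}K(s))(a^*(K(s)))^{j-1}$ directly, whereas you use the full Leibniz/multinomial expansion and then collapse the combinatorics via $\sum\binom{m}{l_1,\dots,l_j}=j^m$. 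Both routes produce a polynomial factor of degree $m$ in $j$; the paper's shortcut is a little cleaner, while your version has the virtue of not relying on the commutativity and making the $j^m$ bound explicit.
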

\begin{proof}
 Let $\psi$ be a vector in the $n$-particle subspace and $m \in \mathbb{N}_0$ . Then we have the bound
 \begin{align}
 \| \numb^m (a^*(K))^j \psi \| &  \leq (n+2j)^m(n+1)^\frac{1}{2}(n+2)^\frac{1}{2} \ldots (n+2j)^\frac{1}{2} \| K \|_\mathrm{HS}^j \nonumber \\ & \leq  (n+2j)^m \left(\frac{(n+2j)!}{n!}\right)^\frac{1}{2} \| K \|_\mathrm{HS}^j \| \psi \|.
\end{align}
For the sums we therefore get
\begin{align}
 \sum_{j=0}^\infty &  \|\numb^m \frac{1}{2^j} \frac{1}{j!} \left( a^*(K) \right)^j \psi \| \leq
 \sum_{j=0}^\infty \frac{1}{2^j} \frac{(n+2j)^m}{j!} \left( \frac{(n+2j)!}{n!}\right)^\frac{1}{2} \| K \|_\mathrm{HS}^j \| \psi \| \leq C_{m,n} \| \psi \|,
\end{align}
as the sum converges by the quotient criterion. 

It remains to show convergence in case we have parameter dependence. By compactness of the parameter space
we have $\| K(s) \|_{\mathrm{HS}} < 1 - \delta$ for some $\delta>0$. The above argument then shows also that we have convergence in the Banach space of 
continuous functions in $s$ taking values in $\mathcal{L}(\mathcal{F}_n, \mathcal{F}_\infty )$.
For the first derivative with respect to any coordinate in a local chart we have 
\begin{align}
 \frac{\partial}{\partial s_r} \left( a^*(K(s)) \right)^j &= j\, a^*(  \frac{\partial}{\partial s_r}  K(s)) \left( a^*(K(s)) \right)^{j-1}.
\end{align}
For the higher derivatives of order $q$ of $\left( a^*(K(s)) \right)^j$ we obtain finite linear combination of terms that are
polynomial expressions in $j$ and in derivatives of $a^*(K(s))$ which is positively homogeneous of degree $j$ in $K$. 
In the sums this produces an additional factor $p(j)$,
where $p(j)$ is a polynomial in $j$ of degree $k$. To test the convergence in the $C^k$-norm for $\psi$ in the $n$-particle subspace we end up having to estimate
 \begin{align}
 \sum_{j=0}^\infty &  p(j) \|\numb^m \frac{1}{2^j} \frac{1}{j!} \left( a^*(K) \right)^{j-k} \psi \| 
 \end{align}
We obtain convergence in the Banach space $C^k$-functions on the parameter manifold taking values in the Banach space 
$\mathcal{L}(\mathcal{F}_n, \mathcal{F}_\infty)$ for any $n \in \mathbb{N}_0$.
\end{proof}

A slightly weaker conclusion holds without the smallness assumption on the Hilbert-Schmidt norm but rather on the norm.
\begin{proposition} \label{smoothytwo}
Assume $K=K(s),L=L(s)$ are $C^k$-functions of the parameter $s$ in some compact parameter manifold taking values in the Banach space of Hilbert-Schmidt operators 
such that $\| K(s) \|< 1$ for all $s$ in the parameter manifold. 
Assume also that $v(s)$ is a smooth family
of functions taking values in $\mathcal{F}_\sfp$ which is a finite linear combination of simple vectors of the form
$$
 v_1(s) \otimes_\mathrm{s} \ldots \otimes_\mathrm{s} v_N(s)
$$
where $v_1(s), \ldots, v_N(s)$ are smooth functions taking values in $H_J$.
Then the sums
 \begin{align*}
 e^{-\frac{1}{2} a^*(K)} v(s)&= \sum_{j=0}^\infty (-1)^j \frac{1}{2^j} \frac{1}{j!} \left( a^*(K(s)) \right)^j v(s),\\
 e^{\frac{1}{2} a(L)} v(s)&= \sum_{j=0}^\infty \frac{1}{2^j} \frac{1}{j!} \left( a(L(s)) \right)^j v(s)
 \end{align*} 
converge in $C^k(I,\mathcal{F})$.
\end{proposition}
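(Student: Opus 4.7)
The plan is to reduce the problem to a single simple tensor $v(s) = a^*(v_1(s)) \cdots a^*(v_N(s))\Omega$ by linearity. Since all creation operators mutually commute, I would move the factors $a^*(v_k(s))$ past the powers $(a^*(K(s)))^j$, so that controlling the full series reduces to controlling a fixed product of first-order creation operators applied to the ``squeezed-vacuum'' series $\sum_j \frac{(-1)^j}{2^j j!} (a^*(K(s)))^j \Omega$.

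The central analytic input is a sharp bound on $\|(a^*(K))^j \Omega\|$. Via a Takagi decomposition $K = \sum_n \mu_n \langle \overline{e_n}, \cdot\rangle e_n$ with $\mu_n \geq 0$, the operators $a^*(e_n)^2$ pairwise commute; expanding $(a^*(K))^j$ multinomially and using the identity $\sum_{k \geq 0} \binom{2k}{k} x^k = (1-4x)^{-1/2}$ yields the generating-function formula
\begin{equation*}
\sum_{j \geq 0} \frac{z^j}{(j!)^2} \|(a^*(K))^j \Omega\|^2 = \prod_n (1 - 4 \mu_n^2 z)^{-1/2} = \det(1 - 4 z K^*K)^{-1/2}.
\end{equation*}
Since $\|K\|< 1$ and $K^*K$ is trace class, the right-hand side is analytic on the disc $|z| < 1/(4\|K\|^2)$; a Cauchy estimate on a slightly smaller circle then gives $\|(a^*(K))^j \Omega\| \leq C\, j!\, (2\beta)^j$ for some $\beta \in (\|K\|, 1)$. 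Combined with the standard bound $\|a^*(v_1) \cdots a^*(v_N) w\| \leq \prod_k \|v_k\| \cdot \sqrt{(n+1)\cdots(n+N)} \|w\|$ for $w \in \mathcal{F}_n$, the $j$-th term of the series is dominated in norm by $C' \beta^j (2j+N)^{N/2}$, giving convergence in $\mathcal{F}$ by the Weierstrass $M$-test.

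For the $C^k$ statement in $s$, compactness of $I$ together with continuity of $K(s)$ in the Hilbert--Schmidt norm (hence in the operator norm) yields $\sup_{s \in I} \|K(s)\| \leq 1 - 2\eta$ for some $\eta > 0$, so all constants in the preceding estimates can be chosen uniformly in $s$. For $q \leq k$, the Leibniz rule and commutativity of creation operators express $\partial_s^q[(a^*(K(s)))^j v(s)]$ as a finite sum (indexed by compositions of $q$) of terms of the form
\begin{equation*}
\binom{j}{m}\, c_{\mathrm{comb}} \cdot a^*(v_1^{(p_1)}) \cdots a^*(v_N^{(p_N)}) \cdot (a^*(K))^{j-m} \cdot a^*(K^{(b_1)}) \cdots a^*(K^{(b_m)}) \Omega,
\end{equation*}
with $m \leq q$ and $\sum p_k + \sum b_l = q$. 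The same norm bounds, applied uniformly on $I$ using the $C^k$-continuity of $K$ and of the $v_k$'s, dominate each such term by $C_q \beta^{j-m} (2j)^{q+N/2}$; after the $\frac{1}{2^j j!}$ coefficient this is bounded by $C_q' \beta^j j^{O(1)}$, again summable. The Weierstrass $M$-test applied to each derivative of order $\leq k$ yields convergence in $C^k(I, \mathcal{F})$.

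The main obstacle is that the hypothesis controls only the operator norm of $K$, not its Hilbert--Schmidt norm, which may be arbitrarily large even when $\|K\| < 1$. This rules out the elementary bound $\|a^*(K) w\| \leq \|K\|_{\mathrm{HS}} \sqrt{(n+1)(n+2)}\,\|w\|$ used in Proposition \ref{smoothy}, and forces one to exploit the finer combinatorial structure of $(a^*(K))^j \Omega$ via the generating-function identity above. It is precisely the simple-tensor form of $v(s)$, allowing the $a^*(v_k(s))$'s to be pulled out of the series, that makes this replacement possible.
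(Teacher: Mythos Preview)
Your proof is correct and follows essentially the same route as the paper's. Both arguments reduce to a simple tensor, commute the first-order creation operators $a^*(v_k(s))$ past the series, and then control $\sum_j \frac{1}{2^j j!}(a^*(K))^j\Omega$ using the fact that the power series $\sum_j \frac{z^j}{(j!)^2}\|(a^*(K))^j\Omega\|^2$ has radius of convergence governed by $\|K\|^{-2}$ rather than $\|K\|_{\mathrm{HS}}^{-2}$; derivatives are handled in both cases by Leibniz and commutativity, producing only polynomial-in-$j$ corrections. The differences are cosmetic: the paper cites \cite{MR0516713}*{Lemma 4.3} for the radius of convergence and routes the argument through convergence in $\mathcal{F}_\infty$ (so that the outer $a^*(v_k(s))$ act continuously), whereas you derive the radius explicitly via the Takagi decomposition and the identity $\sum_k \binom{2k}{k}x^k=(1-4x)^{-1/2}$, and bound directly in $\mathcal{F}$ using the particle-number-dependent estimate $\|a^*(v)w\|\le\|v\|\sqrt{n+1}\,\|w\|$ on $\mathcal{F}_n$. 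Your version is thus slightly more self-contained, but the underlying mechanism is identical.
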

\begin{proof}
 The family of vectors is a finite linear combination of vectors of the form $v(s) = a^*(v_1(s)) \cdots a^*(v_N(s)) \Omega$. 
 Since $a^*(v_1(s)) \cdots a^*(v_N(s))$ commutes with the individual terms in the sum it will be sufficient to show that the sum
 $$
   \sum_{j=0}^\infty (-1)^j \frac{1}{2^j} \frac{1}{j!} \left( a^*(K(s)) \right)^j \Omega
 $$
 converges in $C^k(I,\mathcal{F}_\infty)$. 
 Taking derivatives and bearing in mind that the terms $a^*(K(s))$ commute with $a^*(K')$ for any Hilbert-Schmidt operator $K'$ the same argument shows that we only need to
 show convergence of
 $$
   \sum_{j=0}^\infty (-1)^j \frac{1}{2^j} \frac{1}{j!} p(j) \left( a^*(K(s)) \right)^{j-k} \Omega
 $$
 in $\mathcal{F}_\infty$ where $p(j)$ is a polynomial of degree $k$. This is implied by the convergence of 
 $$
  \sum_{j=0}^\infty \left( \frac{1}{2^j} \frac{1}{j!} p(j) \right)^2 \| a^*(K(s))^{j-k} \Omega \|^2
 $$
 for any polynomial $p$ of degree $k + 2 \ell$, which follows by comparison to series 
 $$
  \sum_{j=0}^\infty \left(\frac{1}{2^{j}} \frac{1}{(j!)} \right)^2 \| a^*(K(s)) ^j \Omega \|^2 z^j
 $$
 whose radius of convergence is $\|K(s)\|^{-2}$ (see \cite{MR0516713}*{Lemma 4.3}).
 For $e^{\frac{1}{2} a(L)}$ the sum is in fact finite, so the statement follows immediately.
 \end{proof}

The set of symplectic maps $W : H_J \oplus \overline{H_J} \to H_J \oplus \overline{H_J}$ of the form
$$
\left( \begin{matrix} q & \overline{r} \\ r & \overline{q} \end{matrix} \right)
$$
is sometimes also called the restricted symplectic group (see for example \cite{MR3060648}*{Ch. 11}). We endow this group with
the topology induced by using the strong topology for the component $q$ and the Hilbert-Schmidt norm topology for the $r$ component.
Hence, a sequence $W_j$ converges to $W$ if and only if $q_j \to q$ strongly and $\| r_j -r\|_\mathrm{HS} \to 0$. A side-remark is that our Hilbert spaces are separable by construction, hence this topology is in fact metrisable.
We have the following continuity statement.
\begin{proposition} \label{strongprop}
 For each $W$ in the restricted symplectic group let $U_W$ be its natural Bogoliubov implementer.
 Then the maps $W \mapsto U, W \mapsto U^*$ are continuous from the restricted symplectic group to the unitary operators endowed with the strong topology.
\end{proposition}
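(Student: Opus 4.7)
The plan is to exploit the explicit formula \eqref{impl} for the natural implementer and to prove that each of its four factors --- the Fredholm prefactor $\det(1 - K^*K)^{1/4}$, $e^{-\frac{1}{2} a^*(K)}$, $\Gamma((q^{-1})^*)$, and $e^{\frac{1}{2} a(L)}$ --- depends continuously on $W$ in an appropriate topology. Since each $U_W$ is unitary and hence uniformly bounded by one, strong convergence on the dense subspace $\mathcal{F}_\sfp$ of finite linear combinations of simple tensors will suffice to conclude strong convergence on all of $\mathcal{F}$. The continuity of $W \mapsto U_W^*$ will then be deduced from the continuity of the group inversion $W \mapsto W^{-1}$ in the restricted-symplectic topology, combined with the identity $U_{W^{-1}} = U_W^*$; the latter holds because both sides are unitary implementers of $W^{-1}$ with the same positive real vacuum expectation $\det(1 - K_W^*K_W)^{1/4}$.

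Assume $W_j \to W$, i.e.\ $q_j \to q$ strongly and $\|r_j - r\|_{\mathrm{HS}} \to 0$. The first step is to extract the auxiliary convergences of $q_j^{-1}$, $(q_j^{-1})^*$, $K_j$, and $L_j$. The symplectic identities $q^*q - r^*r = 1$ and $qq^* - \overline{r}\,\overline{r}^* = 1$ give $\|q_W^{-1}\| \leq 1$, and the uniform boundedness principle gives $C := \sup_j \|q_j\| < \infty$. Strong convergence $q_j^{-1} \to q^{-1}$ then follows from $q_j^{-1} - q^{-1} = q_j^{-1}(q - q_j)q^{-1}$. The HS convergence of $r_j$ upgrades $q_j^*q_j = 1 + r_j^*r_j$ to trace-norm convergence, and the identity $q_j^* = (q_j^*q_j)\,q_j^{-1}$ then gives $q_j^* \to q^*$ strongly; repeating the previous argument with $q_j^*$ yields $(q_j^{-1})^* \to (q^{-1})^*$ strongly. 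The difference $r_j q_j^{-1} - r q^{-1}$ splits into $(r_j - r)q_j^{-1}$, controlled in HS by $\|r_j - r\|_{\mathrm{HS}}$, and $r(q_j^{-1} - q^{-1})$, which tends to zero in HS because an SVD expansion of $r$ together with dominated convergence converts a strong-null uniformly bounded factor composed with an HS operator into an HS-null sequence. Hence $K_j \to K$ and $L_j \to L$ in HS, and the identity $K_j^*K_j = 1 - (q_j^*q_j)^{-1}$ combined with $\|q_j\| \leq C$ forces $\|K_j\|^2 \leq 1 - C^{-2}$ uniformly.

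With these ingredients in hand, each factor in \eqref{impl} is shown to converge strongly on $\mathcal{F}_\sfp$: the Fredholm determinant is continuous in trace norm and $K_j^*K_j \to K^*K$ in trace norm thanks to the HS convergence; on the $n$-particle subspace $e^{\frac{1}{2} a(L_j)}$ reduces to a polynomial of degree at most $n/2$ in $L_j$, and the estimate $\|(a(L_j) - a(L))\psi\| \leq \|L_j - L\|_{\mathrm{HS}}\|N\psi\|$ yields norm convergence; the second-quantised factor $\Gamma((q_j^{-1})^*)$ on a simple tensor $v_1 \otimes_S \cdots \otimes_S v_n$ acts as the $n$-fold tensor product of $(q_j^{-1})^*$, which converges in norm with uniform bound one; and the infinite series $e^{-\frac{1}{2} a^*(K_j)}$ is handled exactly as in Proposition \ref{smoothytwo}, because the uniform contraction bound $\|K_j\| \leq \rho < 1$ produces a summable dominating series via \cite{MR0516713}*{Lemma 4.3}, reducing convergence to the termwise statement $a^*(K_j)^\ell v \to a^*(K)^\ell v$, which follows from $K_j \to K$ in HS. The main technical obstacle is the inseparability of $q$ and $r$ in the hypothesis: strong convergence of $q_j$ alone would not give strong convergence of $q_j^*$, and the HS convergence of $r_j$ is essential precisely because it upgrades $q_j^*q_j$ to trace-norm convergence through the symplectic relation. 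Composing the four factor convergences yields $U_{W_j} v \to U_W v$ for every $v \in \mathcal{F}_\sfp$, and density together with the uniform bound $\|U_{W_j}\| = 1$ completes the proof; the statement for $U_{W_j}^* \to U_W^*$ follows from $U_W^* = U_{W^{-1}}$ together with continuity of the group inversion, which is verified by the same calculations above, since the $(q,r)$-block of $W^{-1}$ is $(q^*, -\overline{r}^*)$.
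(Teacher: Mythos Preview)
Your proof is correct, but it takes a somewhat different route from the paper's. The paper first establishes strong continuity only at the identity element of the restricted symplectic group (where $K,L\to 0$ in Hilbert--Schmidt norm and all four factors of \eqref{impl} visibly converge), and then transports this to an arbitrary base point $W$ via the cocycle identity $U_{W_1W_2}=\sigma(W_1,W_2)U_{W_1}U_{W_2}$ together with continuity of the explicit Fredholm-determinant cocycle $\sigma$; for the adjoint it invokes the companion formula $U^*=\det(1-K^*K)^{1/4}e^{\frac{1}{2}a^*(L)}\Gamma(q^{-1})e^{-\frac{1}{2}a(K)}$ directly. Your argument instead proves continuity at every point in one shot, by extracting from $q_j\to q$ strongly and $r_j\to r$ in HS the full package of auxiliary convergences ($q_j^{-1}$, $(q_j^{-1})^*$ strongly; $K_j,L_j$ in HS; the uniform contraction $\|K_j\|\le\rho<1$), and then handling each of the four factors on $\mathcal{F}_{\sfp}$. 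This is more laborious but entirely self-contained, and it makes explicit the one nontrivial analytic input that the cocycle approach also needs implicitly (to check $W^{-1}W_j\to\mathbf{1}$ in the restricted topology): namely that a Hilbert--Schmidt operator composed with a strongly-null uniformly bounded sequence is Hilbert--Schmidt-null. A minor remark: the identity you quote should read $\overline{K}^{\,*}\overline{K}=1-(qq^*)^{-1}$ rather than $K^*K=1-(q^*q)^{-1}$, but since $\|\overline{K}\|=\|K\|$ and $\|qq^*\|=\|q\|^2$ your conclusion $\|K_j\|^2\le 1-C^{-2}$ is unaffected.
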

\begin{proof}
Since it is sufficient to check strong continuity on a dense subset it is sufficient to show that the map $W \mapsto U v$ is continuous for any $v$ 
in the simple finite particle subspace $\mathcal{F}_{\sfp}$. 
Continuity at the identity element in the group follows immediately from the explicit formula
\begin{align}
 U = \left(\mathrm{det}(1 - K^* K) \right)^\frac{1}{4} e^{-\frac{1}{2} a^*(K)} \Gamma( (q^{-1})^*) e^{\frac{1}{2} a(L)}
\end{align}
bearing in mind that for $v \in \mathcal{F}_{\sfp}$ the vector $e^{\frac{1}{2} a(L)} v$ is a finite linear combination
of simple vectors, independent of $L$, with coefficients that depend on $L$ in a multilinear fashion, bounded by the Hilbert-Schmidt norm of $L$.
We use here a consequence of the restricted symplectic identities
that the maps
$$
  W\mapsto q^{-1},\qquad W\mapsto K,\qquad W\mapsto L
$$
are continuous for the restricted symplectic topology, where $q$ is equipped
with the strong operator topology and the off-diagonal blocks with the
Hilbert-Schmidt topology. Indeed, this follows from
$$
  q^*q-r^*r=\mathbf 1
$$
and from the formulas expressing $K$ and $L$ in terms of $q^{-1}$ and the
Hilbert-Schmidt block $r$.
Finally, it is easy to see that for any sequence $K_j$ converging to $0$ in the Hilbert-Schmidt norm we have that
$$
 e^{-\frac{1}{2} a^*(K_j)} - \mathbf{1}
$$
converges to zero in operator norm as maps $\mathcal{F}_n\to\mathcal{F}_\infty$.
To establish strong continuity at non-zero points we can use the formula
\begin{align}
 U = \sigma(W_1,W_2) U_1 U_2
\end{align}
in case $W = W_1 W_2$, where the cocycle $\sigma(W_1,W_2) \in U(1)$
is explicitly given by
\begin{align}
  \sigma(W_1,W_2) =\mathrm{det}(1 - K_2 L_1^*)^\frac{1}{4}   \mathrm{det}(1 - K^* K)^\frac{1}{4} \mathrm{det}(1 - K_1^* K_1)^{-\frac{1}{4}}  \mathrm{det}(1 - K_2^* K_2)^{-\frac{1}{4}} .
\end{align}
Since the Fredholm determinant defines a continuous map $A \to \mathrm{det}(1+A)$ from the set of trace-class operators to $\C$ this also shows continuity of the cocycle.
The same argument also applies to the adjoint as it is given by the formula
\begin{align}
  U^* = \left(\mathrm{det}(1 - K^* K) \right)^\frac{1}{4}  e^{\frac{1}{2} a^*(L)}  \Gamma( (q^{-1})) e^{-\frac{1}{2} a(K)}.
\end{align}
\end{proof}

\section{Parameter dependent fundamental solutions} \label{Appendparam}

Let $M$ be a smooth manifold with a smooth family of globally hyperbolic metrics $g_s$ indexed by a parameter $s \in I$ in some finite dimensional compact smooth manifold $I$ (with or without boundary). We will be given a smooth family of second order differential operators
$P_s$, each $P_s$ having principal symbol $g_s^{-1}$.
In this section we will show that under natural conditions the relevant fundamental solutions depend smoothly on the parameter $s$ and we will make this statement precise. For our purposes information about the propagation of singularities is required and we will therefore use the Fourier integral operator method of \cite{MR0388464} (see also \cite{MR1362544}) to construct the fundamental solutions.
For the sake of concreteness we use the volume form for the metric $g_0$ to identify locally integrable functions with distributions. Alternatively one can also consider operators acting on half densities, in which case one does not have to commit to such a choice. We will identify continuous maps $C^\infty_0(M) \to \mathcal{D}'(M)$ with the distributional kernels in $\mathcal{D}'(M \times M)$ by means of the Schwartz kernel theorem.

We are assuming that
\begin{itemize}
 \item $(M,g_0) = (\R \times \Sigma, g_0 = N^2 \der t^2 - h_t)$ is globally hyperbolic with Cauchy time function $t$ and foliation by Cauchy surfaces $\{t\} \times \Sigma$.
 \item there is a compact set $K \subset M$ such that $\mathrm{supp}(g_s-g_0) \subset K$ for all $s \in I$,
 \item the function $t$ is a time-function for all metrics $g_s, s \in I$,
 \item the vector field $\partial_t$ is a time-orientation for all metrics $g_s, s \in I$.
\end{itemize} 

For each $s\in I$ we consider the geodesic relation
$$
 (x,\xi)\sim_s(x',\xi')
$$
on the null covectors $\dot T^*M$, where $(x,\xi)$ and $(x',\xi')$ are on the
same orbit of the Hamiltonian flow of
$$
  H_s(x,\xi)=-\frac12 g_s^{-1}(\xi,\xi).
$$
This defines a homogeneous Lagrangian submanifold
$$
  \Lambda_s\subset N_{g_s} \times N_{g_s} \subset T^*M\times T^*M\setminus 0,
$$
where the symplectic form on $T^*M\times T^*M$ is the difference of the two
canonical symplectic forms. Here $N_{g}$ denotes the set of null-covectors with respect to the metric $g$.
We write $\Lambda_s^\pm$ for the two components
according to whether the covectors are future or past directed.

Let $\Sigma$ be a Cauchy surface. Then $\Lambda_s^\pm$ admits the following
parametrisation. For $(y,\eta)\in\dot T^*\Sigma$, let $\eta_s^\pm$ be the
unique future, respectively past, directed null lift of $\eta$ to $T^*M$ with
respect to $g_s$. If $G_s^t$ denotes the Hamiltonian flow of $H_s$, set
$$
  \rho_s^\pm(t;y,\eta)=G_s^t(y,\eta_s^\pm).
$$
Then
$$
  (y,\eta,t_1,t_2)
  \longmapsto
  \bigl(\rho_s^\pm(t_1;y,\eta),\rho_s^\pm(t_2;y,\eta)\bigr)
$$
parametrises $\Lambda_s^\pm$.

We now define the parameter-dependent canonical relation
$\Lambda\subset T^*I\times T^*M\times T^*M$.
For a point of the above parametrisation, the $T_s^*I$-component is defined by
$$
  \tau_s^\pm(y,\eta,t_1,t_2)
  =
  -\int_{t_2}^{t_1}
  \partial_s H_s \bigl(\rho_s^\pm(r;y,\eta)\bigr)\,\der r.
$$

Equivalently, if $\gamma_s^\pm(r)$ is the projection of
$\rho_s^\pm(r;y,\eta)$ to $M$, then
$$
  \partial_s H_s(x,\xi)
  =
  \frac12 \partial_s g_s(\xi_s^\sharp,\xi_s^\sharp),
  \qquad
  \xi_s^\sharp=(g_s)^{-1}\xi,
$$
and hence
$$
  \tau_s^\pm(y,\eta,t_1,t_2)
  =
  -\frac12
  \int_{t_2}^{t_1}
  \partial_s g_s\bigl(\dot\gamma_s^\pm(r),\dot\gamma_s^\pm(r)\bigr)\,\der r.
$$
We define $\Lambda^\pm$ to be the image of
$$
  I\times\dot T^*\Sigma\times\mathbb R\times\mathbb R
$$
under the map
$$
  (s,y,\eta,t_1,t_2)
  \longmapsto
  \bigl(
    s,\tau_s^\pm(y,\eta,t_1,t_2),
    \rho_s^\pm(t_1;y,\eta),
    \rho_s^\pm(t_2;y,\eta)
  \bigr).
$$
Finally set
$$
  \Lambda=\Lambda^+\cup\Lambda^-.
$$

This is precisely the Hamiltonian flow-out of the identity relation, with the
additional $T^*I$-component determined by the equation
$$
  \dot\tau=-\partial_s H_s.
$$
Since it is obtained by Hamiltonian flow-out from the identity relation,
$\Lambda$ is a homogeneous Lagrangian submanifold. The smoothness of the above
parametrisation follows from the smooth dependence of solutions of ordinary
differential equations with respect to parameters.

The twisted relation relevant for distribution kernels is
$$
  \Lambda'
  =
  \{(s,\tau,x,\xi,x',-\xi'):(s,\tau,x,\xi,x',\xi')\in\Lambda\}.
$$
Thus Lagrangian distributions in $I^*(I\times M\times M,\Lambda')$ have wavefront set contained in $\Lambda'$.

Moreover,
$$
 \{(s,\tau,x,\xi): \exists x' \text{ with }
 (s,\tau,x,\xi,x',0)\in\Lambda'\setminus 0\}
$$
is empty, because the Hamiltonian flow preserves non-zero covectors and
$\tau_s^\pm$ is homogeneous of degree one in $\eta$. Therefore, by
Hörmander's criterion \cite{Ho1}*{Theorem 8.2.12}, every
$$
  A\in I^*(I\times M\times M,\Lambda')
$$
defines a continuous map
$$
  C^\infty_0(M)\to C^\infty(I\times M).
$$

In our case we already know that such a parametrix exists for every fixed
$s$. We choose a neighbourhood $U$ of a Cauchy surface lying to the past of
$K$. On $U$ the metric is independent of $s$, and hence the restriction of the
kernel of $G_s=G$ to $U\times U$ may be regarded as a Lagrangian distribution
in
$$
  I^*(I\times U\times U,\Lambda_U'),
$$
with polyhomogeneous symbol. Here $\Lambda_U$ denotes the restriction of
$\Lambda$ to $T^*I\times T^*U\times T^*U$.

We now extend this symbol along $\Lambda$ by solving the transport equations.
Consider the principal symbol of $P_s$ as a function on $T^*(I\times M)$,
$$
  p(s,\tau,x,\xi)=g_s^{-1}(\xi,\xi).
$$
Although $p$ is independent of $\tau$, its Hamilton vector field is
$$
X_p =X_{p_s} - \sum_\alpha (\frac{\partial}{\partial s^\alpha}p_s)\partial_{\tau_\alpha}.
$$
By construction of $\Lambda$, this vector field is tangent to the left
projection of $\Lambda$, and $p$ vanishes on this projection. Hence the formula
for the principal symbol of the product of a differential operator with a
Lagrangian distribution whose principal symbol vanishes on the Lagrangian
\cite{MR0388464}*{Theorem 5.3.1} gives the transport equations on $\Lambda$.

In the parametrisation of $\Lambda^\pm$ by
$I\times \dot T^*\Sigma\times\mathbb R\times\mathbb R$, these transport
equations are ordinary differential equations in the flow parameters, with
coefficients depending smoothly on $s,y,\eta,t_1,t_2$. We use as initial
condition the symbol of the parametrix in the region $U\times U$, where the
metrics are independent of $s$. Smooth dependence of solutions of ordinary
differential equations then gives a smooth solution on $\Lambda$. Since the
flow and the transport equations are homogeneous in the covector variables,
the solution is homogeneous of the required degree. Solving the successive
transport equations recursively, as in \cite{MR0388464}*{Theorem 5.3.2}, and
applying Borel summation gives a classical polyhomogeneous symbol on
$\Lambda$.

Using this symbol we construct a Lagrangian distribution $\tilde G$ on
$I\times M\times M$, whose fibre at $s$ will be denoted by $\tilde G_s$, with
$$
  \tilde G\in
  I^{-\frac{3}{2}-\frac{\dim(I)}{4}}(I\times M\times M,\Lambda').
$$
We may choose $\tilde G$ to be supported near the projection of $\Lambda$ to
$M\times M$. By construction,
$$
  P_s\tilde G_s\in C^\infty(I\times M\times M).
$$
Moreover, by uniqueness of the solutions of the transport equations, for each
fixed $s\in I$ the kernel $\tilde G_s-G_s$ is smooth on $M\times M$.

Let $\Sigma$ be a hypersurface that is a Cauchy surface for all metrics $g_s$,
$s\in I$. Since the solution operator for the Cauchy problem is expressed in
terms of $G_s$, cf. Equation \eqref{solop}, the kernel $\tilde G_s$ gives an
operator
$$
  E_\Sigma:
  C^\infty_0(\Sigma)\oplus C^\infty_0(\Sigma)
  \longrightarrow C^\infty(I\times M),
$$
which is an approximate solution of the Cauchy problem in the sense that
$$
  r_\Sigma E_\Sigma-\mathbf 1
  \in
  C^\infty(I\times\Sigma)\oplus C^\infty(I\times\Sigma),
  \qquad
  P_sE_\Sigma\in C^\infty(I\times M).
$$
By construction,
$$
  E_\Sigma\in
  I^{-\frac{3}{2}-\frac{\dim(I)}{4}+\frac14}
  (I\times M\times\Sigma,\Lambda_\Sigma').
$$
Here $\Lambda_\Sigma$ is obtained by pulling back $\Lambda$ to
$T^*I\times T^*M\times T^*\Sigma$ via the inclusion
$I\times M\times\Sigma\hookrightarrow I\times M\times M$.

Using Duhamel's principle, solving the inhomogeneous Cauchy problem is
equivalent to solving the homogeneous Cauchy problem. The following is an invariant way to invoke this principle.
We use the function $t$
as a global time function for all metrics $g_s$, so that the level sets are
Cauchy hypersurfaces. The restrictions
$$
  \tilde G_s|_{\{s\}\times\Sigma_t\times\Sigma_{t'}}
$$
are well-defined and depend smoothly on $s,t,t'$ as distributions on
$\Sigma\times\Sigma$. The distribution $G_s$ restricted to
$\Sigma_t\times\Sigma_t$ vanishes, whereas its normal derivative gives the
identity on $\Sigma_t$. By subtracting a kernel in
$C^\infty(I\times M\times M)$ with prescribed values and normal derivatives on
$I\times\Sigma_t \times\Sigma_t$, we may arrange that $\tilde G_s$ satisfies these
two identities as well.

Let
$$
  D_s=\tilde G_s-G_s .
$$
We now modify $\tilde G_s$ by a smooth kernel so that $D_s$ vanishes to
infinite order at equal times. Let
$$
  \mathcal{S}
  =
  \{(s,t,x,t',x')\in I\times M\times M : t=t'\}.
$$
This is a hypersurface in $I\times M\times M$.

By the preceding construction, the equal-time Cauchy data of $\tilde G_s$ and
$G_s$ agree modulo smooth kernels. More explicitly, the restrictions of $D_s$
and of the normal derivatives of $D_s$ in each spacetime variable to
$\Sigma_t\times\Sigma_t$ are smooth. Since the normal derivative to
$\mathcal{S}$ is a smooth linear combination of the normal derivatives in the
two spacetime variables, with coefficients depending smoothly on the
coefficients of $g_s$, it follows that the restriction of $D_s$ and its first
normal derivative to $\mathcal{S}$ are smooth.

Moreover
$$
  P_sD_s=P_s\tilde G_s
$$
has smooth kernel in $I\times M\times M$, since $P_sG_s=0$. Since $P_s$ is a
second-order normally hyperbolic operator and each $\Sigma_t$ is a Cauchy
surface for $g_s$, the hypersurfaces $\Sigma_t$ are non-characteristic for
$P_s$. Hence the equation $P_sD_s=P_s\tilde G_s$ determines the second normal
derivative of $D_s$ along $\mathcal{S}$ in terms of lower normal derivatives,
tangential derivatives, and the smooth right-hand side. Differentiating the
equation repeatedly in the normal direction gives, by induction, smooth
expressions for all higher normal derivatives of $D_s$ along $\mathcal{S}$.

By Borel's theorem for hypersurfaces (\cite{Ho1}*{Theorem 1.2.6}, see also \cite{MR1852334}*{Section 3.3}) there exists
a smooth kernel $B_s\in C^\infty(I\times M\times M)$ whose full normal jet
along $\mathcal{S}$ agrees with that of $D_s$. Replacing $\tilde G_s$ by
$\tilde G_s-B_s$, we may therefore assume, without loss of generality, that
$\tilde G_s-G_s$ vanishes to infinite order along $\mathcal{S}$. In particular,
multiplication by $\theta(t-t')$ introduces no additional singularities from
this difference.

Now define
$$\tilde G_{s,\mathrm{ret}}(t,x,t',x')
  = \theta(t-t')\tilde G_s(t,x,t',x'),
$$
which is well-defined by the usual wavefront set calculus. Here $\theta$ is
the Heaviside function. We compute
$$
  P_s\tilde G_{s,\mathrm{ret}}
  =
  \mathbf 1+\theta(t-t')P_s\tilde G_s.
$$
By our assumptions on the $t$-derivatives of $\tilde G_s$, the remainder term
$$
  \theta(t-t')P_s\tilde G_s
$$
is a smooth kernel in $C^\infty(I\times M\times M)$. Hence
$$
  \WF(\tilde G_{s,\mathrm{ret}})
  \subset
  \Lambda'\cup\Delta^*,
$$
where
$$
  \Delta^*
  =
  \{(s,0,x,\xi,x,-\xi): (x,\xi)\in T^*M\setminus 0\}
$$
is the wavefront set of the kernel of the identity map, considered as a smooth
family in $s$.

So far we have shown, for each fixed $s\in I$, that
$G_{s,\mathrm{ret}}-\tilde G_{s,\mathrm{ret}}$ is smooth on $M\times M$. It
remains to show that this difference is in fact smooth as a function of
$s$, i.e. that it belongs to $C^\infty(I\times M\times M)$.

We have
\begin{align}
 P_s \tilde G_{s,\mathrm{ret}} = \mathbf{1} + R_s,
\end{align}
where $R_s$ is a smooth kernel in $C^\infty(I \times M \times M)$. The kernels $\tilde G_{s,\mathrm{ret}}(t,x,t',x')$ and $R_s(t,x,t',x')$ vanish when $t< t'$. 
 We denote by $C^k_+(M)$ the space of functions $f \in C^k(M)$ such that
 there exists $T \in \R$ with $f(t,x)=0$ whenever $t< T$. We similarly define $C^k_+(I \times M)$ allowing additional $s$-dependence.
 By the support properties of the kernel of $\tilde G_{s,\mathrm{ret}}$
the Fourier integral operator $\tilde G_{s,\mathrm{ret}}$ defines a continuous map from $C^\infty_+(M) \to C^\infty_+(I \times M)$.
Both $\tilde G_{s,\mathrm{ret}}$ and $R_s$ can also be understood as continuous maps $C^\infty_+(I \times M) \to C^\infty_+(I \times M)$. For $R_s$ we in fact have
\begin{align}
 (R_s f)(s,x) = \int_M R_s(x,x') f(s,x') \der \mathrm{Vol}_{g_0}(x')
\end{align}
which is a continuous map $C_+(I \times M) \to C^\infty_+(I \times M)$.
We will now show the following Lemma.

\begin{lemma} \label{sdflknslf}
 Given any Cauchy surface $\Sigma$ and Cauchy data $(f_s,g_s) \in C^\infty(I \times \Sigma) \oplus C^\infty(I \times \Sigma)$ the unique solution $u_s$ with $P_s u_s =0$ and $R_{s,\Sigma}(u_s) = (f_s,g_s)$ depends smoothly on $s$.
\end{lemma}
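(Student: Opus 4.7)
The plan is to reduce to an inhomogeneous equation with zero Cauchy data and right-hand side vanishing to infinite order along $\Sigma$, then solve using the retarded and advanced parametrices together with a Neumann-series inversion of $\mathbf{1} + R_s$. First I would construct $\tilde u_s \in C^\infty(I \times M)$ whose $t$-Taylor expansion at $\Sigma$ matches that of a solution to $P_s \tilde u_s = 0$ with the prescribed Cauchy data: the first two Taylor coefficients on $\Sigma$ are $f_s$ and (up to a lapse factor) $g_s$, and the higher coefficients are determined inductively from the relation $P_s u = 0$, involving only spatial derivatives of $f_s, g_s$ and the smooth coefficients of $P_s$. All of these depend smoothly on $s \in I$, so a parameter-dependent Borel lemma yields $\tilde u_s \in C^\infty(I \times M)$ with $R_{s,\Sigma}(\tilde u_s) = (f_s, g_s)$ and $F_s := -P_s \tilde u_s \in C^\infty(I \times M)$ vanishing to infinite order along $I \times \Sigma$.

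The infinite-order vanishing permits the smooth splitting $F_s = F_s^+ + F_s^-$, with $F_s^\pm := \mathbf{1}_{\pm t \geq 0}\, F_s \in C^\infty(I \times M)$ supported in $J^\pm(\Sigma)$. It then suffices to find $w_s^\pm \in C^\infty(I \times M)$ with $P_s w_s^\pm = F_s^\pm$ and $\supp w_s^\pm \subset J^\pm(\supp F_s^\pm)$: this support constraint together with the infinite-order vanishing of $F_s^\pm$ at $\Sigma$ forces $w_s^\pm$ to have zero Cauchy data on $\Sigma$, and $u_s := \tilde u_s + w_s^+ + w_s^-$ solves the original problem. By time reversal it is enough to treat the retarded case. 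With $\tilde G_{s,\mathrm{ret}}$ satisfying $P_s \tilde G_{s,\mathrm{ret}} = \mathbf{1} + R_s$ and $R_s \in C^\infty(I \times M \times M)$ causal, I would set $w_s^+ := \tilde G_{s,\mathrm{ret}}\, h_s$ with $h_s := (\mathbf{1} + R_s)^{-1} F_s^+$ constructed by the Neumann series $\sum_{k=0}^\infty (-R_s)^k F_s^+$.

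Convergence of this series in $C^\infty_\loc(I \times M)$ follows from a Volterra-type estimate: for any relatively compact open $L \subset M$ and any $N \in \Nat$, the value of $(R_s)^k F_s^+$ at $(s,x) \in I \times L$ is a $k$-fold time-ordered integration over the relatively compact causal region $J^-(L) \cap J^+(\supp F_s^+)$, giving $\|(R_s)^k F_s^+\|_{C^N(I \times L)} \leq C^k / k!$ with $C$ uniform in $s \in I$ by compactness of $I$. Hence $h_s \in C^\infty(I \times M)$, and $w_s^+ = \tilde G_{s,\mathrm{ret}}\, h_s \in C^\infty(I \times M)$ by the wavefront calculus (the Lagrangian $\Lambda$ of $\tilde G_{s,\mathrm{ret}}$ carries zero covector in the $T^*I$ direction, so smoothness of $h_s$ is preserved). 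The main technical obstacle is the uniform Volterra bound: a globally hyperbolic spacetime may have Cauchy surfaces of infinite volume, so naive kernel bounds fail, and one must exploit the causal localization of the Neumann iterates to obtain the $1/k!$ denominator that dominates the geometric growth of the kernel norms uniformly in $s$.
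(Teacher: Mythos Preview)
Your strategy is sound and takes a genuinely different route from the paper. The paper first invokes finite propagation speed to reduce to the case of compact $\Sigma$; on a compact slab $(-T,T)\times\Sigma$ with $T$ small it then makes the \emph{operator norm} of $R_s$ on $C_{+,b}$ less than $1-\delta$ by a pure volume argument, so the Neumann series converges geometrically without any Volterra/simplex estimate, and finally extends to all times by stacking slabs. Your route---Borel extension to an inhomogeneous problem with zero data, then a global Volterra iteration exploiting the $1/k!$ factor from time-ordered integration---is more conceptual and avoids both the compactness reduction and the slab iteration.

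There is one point that needs more care. Your Volterra bound requires that the $k$-fold integration for $(R_s)^k F_s^+$ at a point of a compact $L$ stay inside the fixed compact region $J^-(L)\cap J^+(\Sigma)$. For this you need $R_s(x,x')=0$ whenever $x'\notin J^-(x)$, i.e.\ genuine causal support of the remainder kernel. From the paper's construction $R_s=\theta(t-t')P_s\tilde G_s$ one only has support in $\{t\geq t'\}$ together with proper support; ``supported near the projection of $\Lambda$'' is not the same as ``supported in the causal relation''. Without causal support the spatial extent of the iterated integration region can grow with $k$, and the supremum of $|R_s|$ over that growing region need not be uniformly bounded, so the $1/k!$ may not dominate. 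This is easily repaired---either arrange in the parametrix construction that $\tilde G_s$ is supported inside the (enlarged, $s$-uniform) causal set, or simply preface your argument with the paper's finite-propagation reduction to compact $\Sigma$, after which your Volterra estimate goes through with no further issue.
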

\begin{proof}
 By finite speed of propagation and since smoothness is a local property it is sufficient to show this is the case of compact Cauchy surfaces $\Sigma$ and we will assume this now for the duration of the proof, otherwise keeping the notations as above.
 Given an open subset $\calU \subset M$ we define $C_{+,b}(I \times \calU)$ as the space of functions in $C_{+}(I \times \calU)$ that are bounded and past-compact and equip this space with the Banach space norm $\| f \|_{C_{+,b}} = \sup \{ | f(s,t,x) | \mid s \in I, (t,x) \in \calU \}$.
We can now choose a sufficiently small neighborhood $\calU$ of $\Sigma$  of the form $(-T,T) \times \Sigma$ so that the Neumann series for $(\mathbf{1} + R_s)^{-1}$ converges as a map
$C_{+,b}(I \times \calU) \to C_{+,b}(I \times \calU)$. To see that this is possible we recall that by construction $R_s$ has properly supported kernel. 
Therefore, for each point $x \in M$ there exists a compact set $M_x$ so that
\begin{align}
 (R_s f)(s,x) = \int_\calU R_s(x,x') f(s,x') \der \mathrm{Vol}_{g_0}(x') = \int_{M_x \cap \calU } R_s(x,x') f(s,x') \der \mathrm{Vol}_{g_0}(x')
\end{align}
and we choose $\calU$ so that for all $x \in \calU$ the set $M_x \cap \calU$ satisfies
$$
\mathrm{vol}(M_x \cap \calU)
\sup_{x' \in M_x\cap\calU,\, s\in I}| R_s(x,x') |
\leq 1-\delta .
$$
 for some $\delta>0$.
Since $R_s$ has smooth kernel, applying the above estimate to each derivative
of the kernel shows that the Neumann series converges in
$C^\infty(I\times\calU\times\calU)$.
Thus, we can apply the Neumann series to construct the true retarded fundamental solution on $[-T,T] \times \Sigma$ as an operator whose kernel has its wavefront set in $\Lambda' \cup \Delta^*$. 
This now shows that the restriction of $G_{s,\mathrm{ret}}$ to $I \times \calU \times \calU$ has integral kernel with its wavefront set contained in the set $\Lambda' \cup \Delta^*$. A similar argument also applies to $G_{s,\mathrm{adv}}$. This also shows that the restriction of $G_s$ to  $I \times \calU \times \calU$ is a Fourier integral operator in
$I^{-\frac{3}{2}-\frac{\dim(I)}{4}}(I \times \calU \times \calU, \Lambda')$. Since this is the solution operator for the Cauchy problem it can now be used directly to conclude the statement of the Lemma for $[-T,T] \times \calU$ for some sufficiently small $T>0$. A simple compactness argument establishes the Lemma for all $[-T,T] \times \Sigma$ and arbitrary $T>0$. Hence it holds for all of $M$.
\end{proof}

We choose the Cauchy surface $\Sigma$ in the region where the metric
perturbation vanishes. Hence $g_s=g$, and therefore $P_s=P$, in a
neighbourhood of $\Sigma$ for all $s\in I$. In this neighbourhood the
fundamental solutions $G_s$ agree with the fixed fundamental solution $G$.
Moreover, the construction of $\tilde G_s$ was initialized from the same
kernel in this region. Thus the restriction of $\tilde G_s-G_s$, together
with its Cauchy data along $\Sigma\times\Sigma$, is smooth as a function of
$s$.
Applying the preceding smooth-dependence result for the Cauchy problem, we
conclude that $\tilde G_s-G_s$ has a distributional integral kernel in
$C^\infty(I\times M\times M)$. It now follows that
$G_s$ is a Fourier integral operator in $$I^{-\frac{3}{2} - \frac{\mathrm{dim}(I)}{4}}(I \times M \times M,\Lambda').$$ This argument can also be applied to $G_{s,\mathrm{ret}}$ and to $G_{s,\mathrm{adv}}$ and we have therefore proved the following statement.

\begin{theorem} \label{fiorepra}
 The parameter dependent fundamental solutions $G_{s,\mathrm{ret}}$ and $G_{s,\mathrm{adv}}$ 
 have integral kernels in $\mathcal{D}'(I \times M \times M)$ with wavefront sets contained in $\Lambda' \cup \Delta^*$.
 The parameter dependent Pauli-Jordan distribution $G_s$ is a Fourier integral operator in $\mathrm{I}^{-\frac{3}{2} - \frac{\mathrm{dim}(I)}{4}}(I \times M \times M, \Lambda')$ with polyhomogeneous symbol.
\end{theorem}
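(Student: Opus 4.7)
The plan is to assemble the ingredients already developed in this section into a three-stage construction: build an approximate Lagrangian parametrix, upgrade it to a true retarded (resp.\ advanced) fundamental solution by a Neumann series, and finally read off the Fourier integral operator class of $G_s = G_{s,\mathrm{ret}} - G_{s,\mathrm{adv}}$.

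First I would fix the parametrized Lagrangian $\Lambda \subset T^*(I\times M \times M)\setminus 0$ described in the preceding discussion and verify that it is indeed a smooth conic Lagrangian submanifold; this uses smooth dependence of the null-geodesic flow $G_s^t$ on $s$, which follows from the Picard--Lindel\"of theorem applied to the Hamilton equations in the Banach space $C^k(I)$, and injectivity of the differential of the parametrisation $(s,y,\eta,t_1,t_2)\mapsto(s,0,G_s^{t_1}(y,\eta^\pm),G_s^{t_2}(y,\eta^\pm))$. On a neighbourhood $U$ of a Cauchy surface to the past of $K$ the metric is $s$-independent, so I take the restriction of $G$ there as the Cauchy datum for the parametrix construction. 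Then I solve the transport equations for the polyhomogeneous symbol on $\Lambda$ as in \cite{MR0388464}*{Thm.~5.3.2}, using smooth dependence of the coefficients on $s$ and the global existence (from global hyperbolicity plus compact support of $g_s-g$) to get a global smooth solution. This yields $\tilde G_s\in\mathrm{I}^{-7/4}(I\times M\times M,\Lambda')$ with $P_s\tilde G_s\in C^\infty(I\times M\times M)$.

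Second, I would cut off in time: adjusting $\tilde G_s$ by a smooth kernel so that it and all its $t$-derivatives match those of $G_s$ on the diagonal Cauchy hypersurface (which is possible because $P_s$ is second order, so higher $t$-derivatives of $G_s$ on $\Sigma\times\Sigma$ are recursively determined by the first two), and then setting $\tilde G_{s,\mathrm{ret}}(t,x,t',x')=\theta(t-t')\tilde G_s(t,x,t',x')$. The product is well defined by wavefront set calculus and gives $P_s\tilde G_{s,\mathrm{ret}}=\mathbf{1}+R_s$ with $R_s\in C^\infty(I\times M\times M)$ and properly supported. Now I invoke Lemma \ref{sdflknslf}: on a slab $(-T,T)\times\Sigma$ thin enough that $\sup_{x,s}\mathrm{vol}(M_x\cap\calU)\sup_{x'}|R_s(x,x')|<1$, the Neumann series for $(\mathbf{1}+R_s)^{-1}$ converges in $C_{+,b}(I\times\calU)$, and since $R_s$ is smooth in all variables the series converges in $C^\infty(I\times\calU\times\calU)$ as well. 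Thus $G_{s,\mathrm{ret}}=\tilde G_{s,\mathrm{ret}}(\mathbf{1}+R_s)^{-1}$ differs from $\tilde G_{s,\mathrm{ret}}$ by a smooth kernel in $I\times\calU\times\calU$, so it inherits the wavefront set bound $\mathrm{WF}(G_{s,\mathrm{ret}})\subset\Lambda'\cup\Delta^*$ on the slab. A standard finite-cover argument in $t$ extends this to all of $I\times M\times M$, and the argument for $G_{s,\mathrm{adv}}$ is symmetric.

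Finally I would combine: $G_s=G_{s,\mathrm{ret}}-G_{s,\mathrm{adv}}$ satisfies $P_s G_s=0$, so the $\Delta^*$-contributions cancel (they sit in the characteristic set but $G_s$ solves the equation on both sides, so the pseudodifferential part is smooth); more concretely one compares $G_s$ with $\tilde G_s$ via Lemma \ref{sdflknslf} applied to the Cauchy data of the difference, which is smooth in $s$, to see that $G_s-\tilde G_s\in C^\infty(I\times M\times M)$. Since $\tilde G_s$ was constructed as a Lagrangian distribution of order $-7/4$ with polyhomogeneous symbol, this gives $G_s\in\mathrm{I}^{-7/4}(I\times M\times M,\Lambda')$. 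The main obstacle is the joint smoothness in $s$: the delicate step is ensuring that the Neumann series converges not just pointwise in $s$ but in $C^\infty(I,\cdot)$, which is why Lemma \ref{sdflknslf} is set up in a Banach space of functions uniform in $s$ and then boosted via the smoothness of the kernel $R_s$. Once that is in hand, the wavefront set and symbol statements are routine consequences of the FIO calculus with parameters.
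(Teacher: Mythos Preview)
Your proposal is correct and follows essentially the same route as the paper: construct the polyhomogeneous Lagrangian parametrix $\tilde G_s$ on $\Lambda$ by solving the transport equations with smooth $s$-dependence, adjust and cut off to obtain $\tilde G_{s,\mathrm{ret}}$ with $P_s\tilde G_{s,\mathrm{ret}}=\mathbf{1}+R_s$ for a smooth remainder, invert $\mathbf{1}+R_s$ by a Neumann series on thin time-slabs (this is exactly the content of the proof of Lemma~\ref{sdflknslf}), and then conclude $G_s-\tilde G_s\in C^\infty(I\times M\times M)$. The only cosmetic difference is that the paper packages the Neumann-series step as the \emph{proof} of Lemma~\ref{sdflknslf} and then invokes the Lemma to compare $G_s$ with $\tilde G_s$ via Cauchy data, whereas you phrase it as directly producing $G_{s,\mathrm{ret}}=\tilde G_{s,\mathrm{ret}}(\mathbf{1}+R_s)^{-1}$ slab-by-slab; these are the same argument.
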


Note that the statements are then local on $I$ and therefore Lemma \ref{sdflknslf} and Theorem \ref{fiorepra} hold for all smooth maps $I \to \mathcal{M}$.

\begin{bibdiv}

\begin{biblist}

\bib{MR0158666}{article}{
   author={Araki, H.},
   title={A lattice of von Neumann algebras associated with the quantum
   theory of a free Bose field},
   journal={J. Mathematical Phys.},
   volume={4},
   date={1963},
   pages={1343--1362},
%   issn={0022-2488},
%   review={\MR{0158666}},
%   doi={10.1063/1.1703912},
}

\bib{MR2280987}{article}{
   author={Azagra, D.},
   author={Ferrera, J.},
   author={L\'opez-Mesas, F.},
   author={Rangel, Y.},
   title={Smooth approximation of Lipschitz functions on Riemannian
   manifolds},
   journal={J. Math. Anal. Appl.},
   volume={326},
   date={2007},
   number={2},
   pages={1370--1378},
%   issn={0022-247X},
%   review={\MR{2280987}},
%   doi={10.1016/j.jmaa.2006.03.088},
}

\bib{MR2029362}{article}{
   author={Bernal, Antonio N.},
   author={S\'anchez, Miguel},
   title={On smooth Cauchy hypersurfaces and Geroch's splitting theorem},
   journal={Comm. Math. Phys.},
   volume={243},
   date={2003},
   number={3},
   pages={461--470},
   issn={0010-3616},
   review={\MR{2029362}},
   doi={10.1007/s00220-003-0982-6},
}

\bib{MR2254187}{article}{
   author={Bernal, Antonio N.},
   author={S\'anchez, Miguel},
   title={Further results on the smoothability of Cauchy hypersurfaces and
   Cauchy time functions},
   journal={Lett. Math. Phys.},
   volume={77},
   date={2006},
   number={2},
   pages={183--197},
%   issn={0377-9017},
%   review={\MR{2254187}},
%   doi={10.1007/s11005-006-0091-5},
}

\bib{MR4548527}{article}{
   author={Brunetti, R.},
   author={D\"utsch, M.},
   author={Fredenhagen, K.},
   author={Rejzner, K.},
   title={The unitary master Ward identity: time slice axiom, Noether's
   theorem and anomalies},
   journal={Ann. Henri Poincar\'e},
   volume={24},
   date={2023},
   number={2},
   pages={469--539},
%   issn={1424-0637},
%   review={\MR{4548527}},
%   doi={10.1007/s00023-022-01218-5},
}

\bib{MR1408521}{article}{
   author={Brunetti, R.},
   author={Fredenhagen, K.},
   author={K\"{o}hler, M.},
   title={The microlocal spectrum condition and Wick polynomials of free
   fields on curved spacetimes},
   journal={Comm. Math. Phys.},
   volume={180},
   date={1996},
   number={3},
   pages={633--652},
}	

\bib{MR2007173}{article}{
   author={Brunetti, R.},
   author={Fredenhagen, K.},
   author={Verch, R.},
   title={The generally covariant locality principle---a new paradigm for
   local quantum field theory},
   note={Dedicated to Rudolf Haag},
   journal={Comm. Math. Phys.},
   volume={237},
   date={2003},
   number={1-2},
   pages={31--68},
%   issn={0010-3616},
%   review={\MR{2007173}},
%   doi={10.1007/s00220-003-0815-7},
}

\bib{MR2297950}{article}{
   author={Bruneau, L.},
   author={Derezi\'nski, J.},
   title={Bogoliubov Hamiltonians and one-parameter groups of Bogoliubov
   transformations},
   journal={J. Math. Phys.},
   volume={48},
   date={2007},
   number={2},
   pages={022101, 24},
   issn={0022-2488},
%   review={\MR{2297950}},
%   doi={10.1063/1.2435597},
}

\bib{MR4115010}{article}{
   author={Buchholz, D.},
   author={Fredenhagen, K.},
   title={A $\rm C^*$-algebraic approach to interacting quantum field
   theories},
   journal={Comm. Math. Phys.},
   volume={377},
   date={2020},
   number={2},
   pages={947--969},
%   issn={0010-3616},
%   review={\MR{4115010}},
%   doi={10.1007/s00220-020-03700-9},
}

\bib{MR4226454}{article}{
   author={Buchholz, D.},
   author={Fredenhagen, K.},
   title={Dynamical $\rm C^*$-algebras and kinetic perturbations},
   journal={Ann. Henri Poincar\'e},
   volume={22},
   date={2021},
   number={3},
   pages={1001--1033},
%   issn={1424-0637},
%   review={\MR{4226454}},
%   doi={10.1007/s00023-020-01002-3},
}

\bib{MR3060648}{book}{
   author={Derezi\'nski, J.},
   author={G\'erard, C.},
   title={Mathematics of quantization and quantum fields},
   series={Cambridge Monographs on Mathematical Physics},
   publisher={Cambridge University Press, Cambridge},
   date={2013},
   pages={xii+674},
%   isbn={978-1-107-01111-3},
%   review={\MR{3060648}},
%   doi={10.1017/CBO9780511894541},
}

\bib{MR0553516}{article}{
   author={Dimock, J.},
   title={Scalar quantum field in an external gravitational field},
   journal={J. Math. Phys.},
   volume={20},
   date={1979},
   number={12},
   pages={2549--2555},
%   issn={0022-2488},
%   review={\MR{0553516}},
%   doi={10.1063/1.524062},
}

\bib{MR0594301}{article}{
   author={Dimock, J.},
   title={Algebras of local observables on a manifold},
   journal={Comm. Math. Phys.},
   volume={77},
   date={1980},
   number={3},
   pages={219--228},
%   issn={0010-3616},
%   review={\MR{0594301}},
}

\bib{MR0637032}{article}{
   author={Dimock, J.},
   title={Dirac quantum fields on a manifold},
   journal={Trans. Amer. Math. Soc.},
   volume={269},
   date={1982},
   number={1},
   pages={133--147},
%   issn={0002-9947},
%   review={\MR{0637032}},
%   doi={10.2307/1998597},
}

\bib{MR1362544}{book}{
   author={Duistermaat, J.~J.},
   title={Fourier integral operators},
   series={Progress in Mathematics},
   volume={130},
   publisher={Birkh\"auser Boston, Inc., Boston, MA},
   date={1996},
   pages={x+142},
   isbn={0-8176-3821-0},
%   review={\MR{1362544}},
}

\bib{MR0388464}{article}{
    AUTHOR = {Duistermaat, J.~J.},
    AUTHOR = {H\"{o}rmander, L.},
     TITLE = {Fourier integral operators. {II}},
   JOURNAL = {Acta Math.},
    VOLUME = {128},
      YEAR = {1972},
    NUMBER = {3-4},
     PAGES = {183--269},
      ISSN = {0001-5962},
       %doi = {10.1007/BF02392165},
       URL = {https://doi.org/10.1007/BF02392165},
}

\bib{FewsterOsterbrinkQEI}{article}{
  author={Fewster, Christopher J.},
  author={Osterbrink, Lutz W.},
  title={Quantum energy inequalities for the non-minimally coupled scalar field},
  journal={J. Phys. A},
  volume={41},
  date={2008},
  number={2},
  pages={025402},
}

\bib{FewsterLecturesQEI}{misc}{
  author={Fewster, Christopher J.},
  title={Lectures on quantum energy inequalities},
  eprint={1208.5399},
  archivePrefix={arXiv},
  date={2012},
}

\bib{MR3074853}{article}{
   author={Fewster, C.~J.},
   title={On the notion of `the same physics in all spacetimes'},
   conference={
      title={Quantum field theory and gravity},
   },
   book={
      publisher={Birkh\"auser/Springer Basel AG, Basel},
   },
   isbn={978-3-0348-0042-6},
   isbn={978-3-0348-0043-3},
   date={2012},
   pages={207--227},
%   review={\MR{3074853}},
%   doi={10.1007/978-3-0348-0043-3\_11},
}

\bibitem{Fewster2017}Fewster, C. Quantum Energy Inequalities. {\em Wormholes, Warp Drives And Energy Conditions}. pp. 215-254 (2017).

\bib{MR2004989}{article}{
   author={Fewster, C.~ J.},
   author={Verch, R.},
   title={Stability of quantum systems at three scales: passivity, quantum
   weak energy inequalities and the microlocal spectrum condition},
   journal={Comm. Math. Phys.},
   volume={240},
   date={2003},
   number={1-2},
   pages={329--375},
%   issn={0010-3616},
%   review={\MR{2004989}},
%   doi={10.1007/s00220-003-0884-7},
}

\bib{MR2982635}{article}{
   author={Fewster, C.~J.},
   author={Verch, Rainer},
   title={Dynamical locality and covariance: what makes a physical theory
   the same in all spacetimes?},
   journal={Ann. Henri Poincar\'{e}},
   volume={13},
   date={2012},
   number={7},
   pages={1613--1674},
%   issn={1424-0637},
%   review={\MR{2982635}},
%   doi={10.1007/s00023-012-0165-0},
}

\bib{MR0641893}{article}{
   author={Fulling, S. A.},
   author={Narcowich, F. J.},
   author={Wald, Robert M.},
   title={Singularity structure of the two-point function in quantum field
   theory in curved spacetime. II},
   journal={Ann. Physics},
   volume={136},
   date={1981},
   number={2},
   pages={243--272},
%   issn={0003-4916},
%   review={\MR{0641893}},
%   doi={10.1016/0003-4916(81)90098-1},
}

\bib{MR3148100}{article}{
   author={G\'erard, C.},
   author={Wrochna, M.},
   title={Construction of Hadamard states by pseudo-differential calculus},
   journal={Comm. Math. Phys.},
   volume={325},
   date={2014},
   number={2},
   pages={713--755},
%   issn={0010-3616},
%   review={\MR{3148100}},
%   doi={10.1007/s00220-013-1824-9},
}

\bib{MR0270697}{article}{
   author={Geroch, R.},
   title={Domain of dependence},
   journal={J. Mathematical Phys.},
   volume={11},
   date={1970},
   pages={437--449},
%   issn={0022-2488},
%   review={\MR{0270697}},
%   doi={10.1063/1.1665157},
}

\bib{MR0424186}{book}{
   author={Hawking, S. W.},
   author={Ellis, G. F. R.},
   title={The large scale structure of space-time},
   series={Cambridge Monographs on Mathematical Physics},
   volume={No. 1},
   publisher={Cambridge University Press, London-New York},
   date={1973},
   pages={xi+391},
   %review={\MR{0424186}},
}

\bib{MR1864435}{article}{
   author={Hollands, S.},
   author={Wald, R.~M.},
   title={Local Wick polynomials and time ordered products of quantum fields
   in curved spacetime},
   journal={Comm. Math. Phys.},
   volume={223},
   date={2001},
   number={2},
   pages={289--326},
   issn={0010-3616},
%   review={\MR{1864435}},
%   doi={10.1007/s002200100540},
}

\bib{MR2144674}{article}{
   author={Hollands, S.},
   author={Wald, R.~M.},
   title={Conservation of the stress tensor in perturbative interacting
   quantum field theory in curved spacetimes},
   journal={Rev. Math. Phys.},
   volume={17},
   date={2005},
   number={3},
   pages={227--311},
%   issn={0129-055X},
%   review={\MR{2144674}},
%   doi={10.1142/S0129055X05002340},
}

\bib{Ho1}{book}{
    AUTHOR = {H\"ormander, L.},
     TITLE = {The analysis of linear partial differential operators.~{I}},
    SERIES = {Classics in Mathematics},
      NOTE = {Distribution theory and Fourier analysis,
              Reprint of the second (1990) edition},
 PUBLISHER = {Springer-Verlag, Berlin},
      YEAR = {2003},
     PAGES = {x+440},
      ISBN = {3-540-00662-1},
       URL = {https://doi.org/10.1007/978-3-642-61497-2},
}

\bib{MR2512677}{book}{
   author={H\"ormander, L.},
   title={The analysis of linear partial differential operators. IV},
   series={Classics in Mathematics},
   note={Fourier integral operators;
   Reprint of the 1994 edition},
   publisher={Springer-Verlag, Berlin},
   date={2009},
   pages={viii+352},
%   isbn={978-3-642-00117-8},
%   review={\MR{2512677}},
%   doi={10.1007/978-3-642-00136-9},
}

\bib{MR2105911}{article}{
   author={Hiroshima, F.},
   author={Ito, K. R.},
   title={Local exponents and infinitesimal generators of canonical
   transformations on Boson Fock spaces},
   journal={Infin. Dimens. Anal. Quantum Probab. Relat. Top.},
   volume={7},
   date={2004},
   number={4},
   pages={547--571},
%   issn={0219-0257},
%   review={\MR{2105911}},
%   doi={10.1142/S0219025704001761},
}

\bib{MR1421547}{article}{
   author={Junker, W.},
   title={Hadamard states, adiabatic vacua and the construction of physical
   states for scalar quantum fields on curved spacetime},
   journal={Rev. Math. Phys.},
   volume={8},
   date={1996},
   number={8},
   pages={1091--1159},
%   issn={0129-055X},
%   review={\MR{1421547}},
%   doi={10.1142/S0129055X9600041X},
}

\bib{MR0058861}{article}{
   author={Kato, T.},
   title={Integration of the equation of evolution in a Banach space},
   journal={J. Math. Soc. Japan},
   volume={5},
   date={1953},
   pages={208--234},
%   issn={0025-5645},
%   review={\MR{0058861}},
%   doi={10.2969/jmsj/00520208},
}

\bib{MR0086986}{article}{
   author={Kato, T.},
   title={On linear differential equations in Banach spaces},
   journal={Comm. Pure Appl. Math.},
   volume={9},
   date={1956},
   pages={479--486},
%   issn={0010-3640},
%   review={\MR{0086986}},
%   doi={10.1002/cpa.3160090319},
}

\bib{MR0279626}{article}{
   author={Kato, T.},
   title={Linear evolution equations of ``hyperbolic'' type},
   journal={J. Fac. Sci. Univ. Tokyo Sect. I},
   volume={17},
   date={1970},
   pages={241--258},
%   issn={0368-2269},
%   review={\MR{0279626}},
}

\bib{MR4615613}{article}{
   author={Moretti, V.},
   author={Murro, S.},
   author={Volpe, D.},
   title={Paracausal deformations of Lorentzian metrics and M\o ller
   isomorphisms in algebraic quantum field theory},
   journal={Selecta Math. (N.S.)},
   volume={29},
   date={2023},
   number={4},
   pages={Paper No. 56, 69},
%   issn={1022-1824},
%   review={\MR{4615613}},
%   doi={10.1007/s00029-023-00860-z},
}

\bib{ar221001145}{article}{
   author={Much, A.},
   author={Passegger, G.},
   author={Verch, R.},
   title={An approximate local modular quantum energy inequality in general quantum field theory},
    year={2022},
    eprint={https://arxiv.org/abs/2210.01145},
}

\bib{MR1944168}{article}{
   author={Nagel, R.},
   author={Nickel, G.},
   title={Well-posedness for nonautonomous abstract Cauchy problems},
   conference={
      title={Evolution equations, semigroups and functional analysis},
      address={Milano},
      date={2000},
   },
   book={
      series={Progr. Nonlinear Differential Equations Appl.},
      volume={50},
      publisher={Birkh\"auser, Basel},
   },
    date={2002},
   pages={279--293},
}

\bib{MR2493564}{article}{
   author={Neidhardt, H.},
   author={Zagrebnov, V.~A.},
   title={Linear non-autonomous Cauchy problems and evolution semigroups},
   journal={Adv. Differential Equations},
   volume={14},
   date={2009},
   number={3-4},
   pages={289--340},
%   issn={1079-9389},
%   review={\MR{2493564}},
}

\bib{ON}{book}{
    AUTHOR = {O'Neill, B.},
     TITLE = {Semi-{R}iemannian geometry - With applications to relativity},
    SERIES = {Pure and Applied Mathematics},
    VOLUME = {103},
 PUBLISHER = {Academic Press, Inc.,
              New York},
      YEAR = {1983},
     PAGES = {xiii+468},
      ISBN = {0-12-526740-1},
}

\bib{MR1400751}{article}{
   author={Radzikowski, M.~J.},
   title={Micro-local approach to the Hadamard condition in quantum field
   theory on curved space-time},
   journal={Comm. Math. Phys.},
   volume={179},
   date={1996},
   number={3},
   pages={529--553},
%   issn={0010-3616},
%   review={\MR{1400751}},
}

\bib{MR0493420}{book}{
   author={Reed, M.},
   author={Simon, B.},
   title={Methods of modern mathematical physics. II. Fourier analysis,
   self-adjointness},
   publisher={Academic Press [Harcourt Brace Jovanovich, Publishers], New
   York-London},
   date={1975},
   pages={xv+361},
   review={\MR{0493420}},
}

\bib{MR0452323}{article}{
   author={Ruijsenaars, S. N. M.},
   title={On Bogoliubov transformations for systems of relativistic charged
   particles},
   journal={J. Mathematical Phys.},
   volume={18},
   date={1977},
   number={3},
   pages={517--526},
%   issn={0022-2488},
%   review={\MR{0452323}},
%   doi={10.1063/1.523295},
}

\bib{MR0516713}{article}{
   author={Ruijsenaars, S. N. M.},
   title={On Bogoliubov transformations. II. The general case},
   journal={Ann. Physics},
   volume={116},
   date={1978},
   number={1},
   pages={105--134},
%   issn={0003-4916},
%   review={\MR{0516713}},
%   doi={10.1016/0003-4916(78)90006-4},
}

\bib{MR4190325}{article}{
   author={Kontou, E.-A.},
   author={Sanders, K.},
   title={Energy conditions in general relativity and quantum field theory},
   journal={Classical Quantum Gravity},
   volume={37},
   date={2020},
   number={19},
   pages={193001, 63},
%   issn={0264-9381},
%   review={\MR{4190325}},
%   doi={10.1088/1361-6382/ab8fcf},
}

\bib{MR0137504}{article}{
   author={Shale, D.},
   title={Linear symmetries of free boson fields},
   journal={Trans. Amer. Math. Soc.},
   volume={103},
   date={1962},
   pages={149--167},
%   issn={0002-9947},
%   review={\MR{0137504}},
%   doi={10.2307/1993745},
}

\bib{MR1852334}{book}{
   author={Shubin, M. A.},
   title={Pseudodifferential operators and spectral theory},
   edition={2},
   note={Translated from the 1978 Russian original by Stig I. Andersson},
   publisher={Springer-Verlag, Berlin},
   date={2001},
   pages={xii+288},
%   isbn={3-540-41195-X},
%   review={\MR{1852334}},
%   doi={10.1007/978-3-642-56579-3},
}

\bib{MR1936535}{article}{
   author={Strohmaier, A.},
   author={Verch, R.},
   author={Wollenberg, M.},
   title={Microlocal analysis of quantum fields on curved space-times:
   analytic wave front sets and Reeh-Schlieder theorems},
   journal={J. Math. Phys.},
   volume={43},
   date={2002},
   number={11},
   pages={5514--5530},
%   issn={0022-2488},
%   review={\MR{1936535}},
%   doi={10.1063/1.1506381},
}

\bib{MR4178908}{article}{
   author={Strohmaier, A.},
   author={Zelditch, S.},
   title={A Gutzwiller trace formula for stationary space-times},
   journal={Adv. Math.},
   volume={376},
   date={2021},
   pages={Paper No. 107434, 53},
%   issn={0001-8708},
%   review={\MR{4178908}},
%   doi={10.1016/j.aim.2020.107434},
}

\bib{MR1266061}{article}{
   author={Verch, R.},
   title={Local definiteness, primarity and quasiequivalence of quasifree
   Hadamard quantum states in curved spacetime},
   journal={Comm. Math. Phys.},
   volume={160},
   date={1994},
   number={3},
   pages={507--536},
%   issn={0010-3616},
%   review={\MR{1266061}},
}

\bib{MR1464689}{article}{
   author={Verch, Rainer},
   title={Continuity of symplectically adjoint maps and the algebraic
   structure of Hadamard vacuum representations for quantum fields on curved
   spacetime},
   journal={Rev. Math. Phys.},
   volume={9},
   date={1997},
   number={5},
   pages={635--674},
%   issn={0129-055X},
%   review={\MR{1464689}},
%   doi={10.1142/S0129055X97000233},
}

\bib{MR1738609}{article}{
   author={Verch, R.},
   title={The averaged null energy condition for general quantum field
   theories in two dimensions},
   journal={J. Math. Phys.},
   volume={41},
   date={2000},
   number={1},
   pages={206--217},
%   issn={0022-2488},
%   review={\MR{1738609}},
%   doi={10.1063/1.533130},
}

\bib{MR0441196}{article}{
   author={Wald, R.~M.},
   title={The back reaction effect in particle creation in curved spacetime},
   journal={Comm. Math. Phys.},
   volume={54},
   date={1977},
   number={1},
   pages={1--19},
%   issn={0010-3616},
%   review={\MR{0441196}},
}

\bib{MR0533774}{article}{
   author={Wald, R.~M.},
   title={Existence of the $S$-matrix in quantum field theory in curved
   space-time},
   journal={Ann. Physics},
   volume={118},
   date={1979},
   number={2},
   pages={490--510},
%   issn={0003-4916},
%   review={\MR{0533774}},
%   doi={10.1016/0003-4916(79)90135-0},
}

\bib{MR0391814}{article}{
   author={Wald, R.~M.},
   title={On particle creation by black holes},
   journal={Comm. Math. Phys.},
   volume={45},
   date={1975},
   number={1},
   pages={9--34},
   issn={0010-3616},
   review={\MR{0391814}},
}

\bib{MR1302174}{book}{
   author={Wald, R.~M.},
   title={Quantum field theory in curved spacetime and black hole
   thermodynamics},
   series={Chicago Lectures in Physics},
   publisher={University of Chicago Press, Chicago, IL},
   date={1994},
   pages={xiv+205},
%   isbn={0-226-87025-1},
%   isbn={0-226-87027-8},
%   review={\MR{1302174}},
}

\end{biblist}
\end{bibdiv}

\end{document}